\definecolor{darkred}{rgb}{0.8,0.1,0.1}
\theoremstyle{plain}
\newtheorem{theo}{Theorem}[section]
\newtheorem{lem}[theo]{Lemma}
\newtheorem{propo}[theo]{Proposition}
\newtheorem{cor}[theo]{Corollary}
\theoremstyle{definition}
\newtheorem{defi}[theo]{Definition}
\newenvironment{ex}
  {\pushQED{\qed}\exx}
  {\popQED\endexx}
\newenvironment{rem}
  {\pushQED{\qed}\remm}
  {\popQED\endremm}
\numberwithin{equation}{section}
\def\bbK{\mathbb{K}}
\def\bbR{\mathbb{R}}
\def\bbC{\mathbb{C}}
\def\bbL{\mathbb{L}}
\def\Hom{\mathrm{Hom}}
\def\id{\mathrm{id}}
\def\dd{\mathrm{d}}
\def\cc{\mathrm{c}}
\def\1{I}
\def\oone{\mathbbm{1}}
\def\op{\mathrm{op}}
\def\Diag{\mathrm{Diag}}
\def\Loc{\mathbf{Loc}}
\def\Lan{\operatorname{Lan}}
\def\Locc{\mathbf{Loc}_{\text{\large $\diamond$}}^{}}
\def\Man{\mathbf{Man}}
\def\Manc{\mathbf{Disk}}
\def\Set{\mathbf{Set}}
\def\Alg{\mathbf{Alg}}
\def\Vec{\mathbf{Vec}}
\def\Ch{\mathbf{Ch}}
\def\FT{\mathbf{FT}}
\def\CC{\mathbf{C}}
\def\DD{\mathbf{D}}
\def\EE{\mathbf{E}}
\def\MM{\mathbf{M}}
\def\OCat{\mathbf{OrthCat}}
\def\Seq{\mathbf{Seq}}
\def\Op{\mathbf{Op}}
\def\QFT{\mathbf{QFT}}
\def\LFT{\mathbf{LFT}}
\def\AAA{\mathfrak{A}}
\def\LLL{\mathfrak{L}}
\def\BBB{\mathfrak{B}}
\def\CCC{\mathfrak{C}}
\def\DDD{\mathfrak{D}}
\def\O{\mathcal{O}}
\def\P{\mathcal{P}}
\def\Q{\mathcal{Q}}
\def\Qlin{\mathfrak{Q}_\mathsf{lin}^{~}}
\def\Ulin{\mathfrak{U}_\mathsf{lin}^{~}}
\def\colim{\mathrm{colim}}
\newcommand\und[1]{\underline{#1}}
\newcommand\ovr[1]{\overline{#1}}
\DeclareMathOperator*{\Motimes}{\text{\raisebox{0.25ex}{\scalebox{0.8}{$\bigotimes$}}}}
\def\sk{\vspace{2mm}}
\let\@fnsymbol\@alph
\definecolor{darkgreen}{rgb}{0,0.6,0}
\title{%
Algebraic field theory operads and linear quantization
}
\author{%
Simen Bruinsma$^{a}$\ and\ Alexander Schenkel$^{b}$ \vspace{4mm}\\
{\small School of Mathematical Sciences, University of Nottingham,}\\
{\small University Park, Nottingham NG7 2RD, United Kingdom.}\vspace{4mm}\\
{\small \begin{tabular}{ll}
Email: & ${}^a$~\texttt{simen.bruinsma@nottingham.ac.uk}\\
& ${}^b$~\texttt{alexander.schenkel@nottingham.ac.uk}\vspace{2mm}
\end{tabular}
}
}
\date{September 2019}
\begin{document}

\maketitle

\vspace{-4.5mm}

\begin{abstract}
\noindent We generalize the operadic approach to algebraic quantum field theory [arXiv:1709.08657] to a broader class of field theories whose observables on a spacetime are algebras over any single-colored operad. A novel feature of our framework is that it gives rise to adjunctions between different types of field theories. As an interesting example, we study an adjunction whose left adjoint describes the quantization of linear field theories. We also analyze homotopical properties of the linear quantization adjunction for chain complex valued field theories, which leads to a homotopically meaningful quantization prescription for linear gauge theories.
\end{abstract}

\vspace{-1mm}

\paragraph*{Keywords:} algebraic quantum field theory, locally covariant quantum field theory, 
colored operads, universal constructions, gauge theory, model categories

\vspace{-2mm}

\paragraph*{MSC 2010:} 81Txx, 18D50, 18G55

\vspace{-1mm}

%\newpage 
%{\baselineskip=12pt
%\setcounter{tocdepth}{1}
\renewcommand{\baselinestretch}{0.87}\normalsize
\tableofcontents
\renewcommand{\baselinestretch}{1.0}\normalsize
%}

%\bigskip

%\newpage

%%%%%%%%%%%%%%%%%%%%%%%%%%%%%%%%%%%%%%%%%%%%%%%%
%%%%%%%%%%%%%%%%%%%%%%%%%%%%%%%%%%%%%%%%%%%%%%%%

\section{\label{sec:intro}Introduction and summary} 
Algebraic quantum field theory is a well-established  and successful
framework to axiomatize and investigate quantum field theories
on the Minkowski spacetime and also on more general Lorentzian manifolds,
see e.g.\ \cite{HaagKastler,AQFTbook} for overviews. In this setting a theory
is described by a functor $\AAA : \CC\to\Alg_{\mathsf{As}}$ from a suitable
category $\CC$ of spacetimes to the category of associative and unital algebras,
which is required to satisfy some physically motivated axioms. For instance,
in locally covariant algebraic quantum field theory \cite{Brunetti,FewsterVerch},
$\CC=\Loc$ is the category of globally hyperbolic Lorentzian manifolds
with morphisms given by causal isometric embeddings and the physical axioms are Einstein causality
and the time-slice axiom. Einstein causality demands that any two observables, i.e.\ elements
of the algebras assigned by $\AAA$, that are causally disjoint
commute with each other, which encodes the idea that two measurements
in causally disjoint spacetime regions do not influence each other.
The time-slice axiom demands that the algebra maps $\AAA(f): \AAA(M)\to\AAA(M^\prime)$
associated to Cauchy morphisms, i.e.\ spacetime embeddings $f:M\to M^\prime$ such that 
$f(M)\subseteq M^\prime$ contains a Cauchy surface of $M^\prime$, are isomorphisms,
which encodes a concept of time evolution. The framework of algebraic quantum field
theory can also be adapted to obtain a novel point of view on classical field theories, see e.g.\
\cite{BrunettiFredenhagenRibeiro,FredenhagenRejzner,Collini,BeniniSchenkelPoisson},
where in contrast to associative and unital algebras one assigns Poisson algebras of classical observables
to spacetimes. The classical analog of Einstein causality then demands that the Poisson bracket between 
any two causally disjoint observables is zero.
\sk

The aim of this paper is to develop an operadic framework that generalizes \cite{BeniniSchenkelWoike}
to a very broad and flexible class of field theories, see Definition \ref{def:fieldtheory}. 
This includes as special instances the various flavors of algebraic quantum field theory
\cite{HaagKastler,Brunetti,AQFTbook,FewsterVerch} and their classical analogs 
\cite{BrunettiFredenhagenRibeiro,FredenhagenRejzner,Collini,BeniniSchenkelPoisson}.
Our two main motivations for this work are as follows: (1)~Describing field theories
in terms of algebras over colored operads provides an excellent framework
to discover and study universal constructions. This has already lead to a 
refinement of Fredenhagen's universal algebra construction for quantum field theories
\cite{Fre1,Fre2,Fre3,Lang} in terms of a so-called operadic left Kan extension
\cite{BeniniSchenkelWoike}, which technically behaves better than the original construction
as it respects the quantum field theory axioms. 
In this paper we will show that the quantization of linear field theories may be expressed in terms of an operadic left Kan extension too, 
which allows us to understand and describe the interplay between quantization and other universal constructions.
(2)~Operadic techniques are particularly useful and powerful when working with chain complex 
valued field theories, e.g.\ gauge theories described in the BRST/BV formalism
\cite{Hollands,FredenhagenRejzner,FredenhagenRejzner2}. The reason for this
is that chain complexes are naturally compared by quasi-isomorphisms and hence one is
only allowed to perform constructions that  preserve quasi-isomorphisms.
Operad theory provides a huge toolbox to develop such constructions,
which in technical language are called derived functors, see e.g.\ \cite{HinichOriginal,Hinich}
and also \cite{BeniniSchenkelWoikehomotopy} for applications to quantum field theory.
In this paper we apply these techniques to investigate homotopical
properties of the linear quantization functor. A similar construction in the context
of factorization algebras \cite{CostelloGwilliam} has been 
recently investigated in \cite{GwilliamHaugseng}. As a simple example, we
present a quantization of linear Chern-Simons theory on oriented surfaces 
that is compatible with quasi-isomorphisms.
\sk

Let us now present a more detailed outline of the content of this paper.
In Section \ref{sec:preliminaries} we shall fix our notations by recalling
the necessary background material on colored operads and their algebras.
In Section \ref{sec:FTs} we introduce our broad and flexible framework
for field theories in Definition \ref{def:fieldtheory}. 
A field theory is described by a
functor $\AAA : \CC\to \Alg_\P$ from a small category
$\CC$ to the category of algebras over a single-colored
operad $\P$, which is required to satisfy a suitable generalization of the Einstein causality axiom.
(The time-slice axiom will be formalized via localization techniques in Section \ref{subsec:orthlocalication}.)
One should interpret $\CC$ as a category of spacetimes and $\P$ as the operad controlling 
the algebraic structure of the observables on a fixed spacetime.
For example, quantum field theories are obtained by choosing the associative operad $\P=\mathsf{As}$
and classical field theories by choosing the Poisson operad $\P = \mathsf{Pois}$.
Linear field theories, which we describe in terms of Heisenberg Lie algebras of 
presymplectic vector spaces, are obtained by choosing the unital Lie operad $\P= \mathsf{uLie}$.
One of the key results of this section (see Theorem \ref{theo:FTcatiso}) 
is that such field theories are precisely the algebras over a colored operad
that we denote by $\P^{(r_1,r_2)}_{\ovr{\CC}}$. This colored operad 
depends on two different kinds of input data, which control the spacetime category 
of interest and the type of field theory.
More precisely, the first datum is an orthogonal category $\ovr{\CC} = (\CC,\perp)$ (see Definition
\ref{def:orthogonalcat}) and the second is a bipointed single-colored
operad $\P^{(r_1,r_2)} = (\P , r_1,r_2 : I[2]\rightrightarrows \P)$ (see Definition \ref{def:bipointed}). 
The orthogonality relation $\perp$ and the two pointings $r_1,r_2$ are required to formalize
a suitable generalization of the Einstein causality axiom.
We prove that the assignment $(\ovr{\CC},\P^{(r_1,r_2)}) \mapsto  \P^{(r_1,r_2)}_{\ovr{\CC}}$
of field theory operads is in a suitable sense functorial.
\sk

In Section \ref{sec:adjunctions} we harness this functorial behavior in order to study
adjunctions between the categories of field theories
corresponding to different $\ovr{\CC}$ and  $\P^{(r_1,r_2)}$.
This includes generalizations of the time-slicification and local-to-global adjunctions 
from \cite{BeniniSchenkelWoike}, which have already found interesting applications 
to quantum field theory on spacetimes with boundaries \cite{BeniniDappiaggiSchenkel}.
A novel feature of our framework, which is not captured by 
\cite{BeniniSchenkelWoike}, is a second kind of functorial assignment
$\P^{(r_1,r_2)} \mapsto  \P^{(r_1,r_2)}_{\ovr{\CC}}$ of our colored operads
to bipointed single-colored operads. This results in adjunctions
between the categories of field theories of different types.
We shall investigate in detail the interplay of such adjunctions with
the time-slice axiom and local-to-global property of field theories.
A particularly interesting example, which we study in detail in Section \ref{sec:quantization}, 
is given by an adjunction whose left adjoint describes the quantization
of linear field theories. 
\sk

In Section \ref{sec:outlook} we extend our results
to the case of $\Ch(\bbK)$-valued field theories, i.e.\ gauge theories, 
by using techniques from model category theory \cite{Dwyer,Hovey}.
Our reformulation in Section \ref{sec:quantization} of the usual
quantization of linear field theories in terms of (the left adjoint of) 
an adjunction is very valuable for studying the quantization of linear gauge theories. 
In particular, it allows us to construct a {\em derived} linear quantization functor 
which provides a homotopically meaningful  quantization prescription for linear 
gauge theories in the sense that it maps weakly  equivalent linear gauge theories 
to weakly equivalent quantum gauge theories. A deeper homotopical study of the building blocks of 
the derived linear quantization functor (see Appendix \ref{app:envelop}) 
reveals that it can be modeled (up to a natural weak equivalence)
by the underived linear quantization functor. From a computational point of view,
this is a very pleasing result because it allows us to write down explicit formulas for the quantization
of linear gauge theories. This will be illustrated by studying linear Chern-Simons theory 
on oriented surfaces. We conclude by analyzing in some detail the interplay between our (derived) 
linear quantization functor and suitable homotopical generalizations of the time-slice axiom 
and local-to-global property of gauge theories.

%%%%%%%%%%%%%%%%%%%%%%%%%%%%%%%%%%%%%%%%%%%%%%%%
%%%%%%%%%%%%%%%%%%%%%%%%%%%%%%%%%%%%%%%%%%%%%%%%

\section{\label{sec:preliminaries}Preliminaries}
Throughout this paper we fix a closed symmetric monoidal category $\MM$,
which we further assume to be complete and cocomplete, i.e.\ all small limits and colimits exist in $\MM$.
The monoidal product is denoted by $\otimes : \MM\times \MM\to \MM$, the monoidal unit by $I\in \MM$ 
and the internal hom by $[-,-]: \MM^\op \times \MM\to \MM$, where $(-)^\op$ denotes
the opposite category. The symmetric braiding is denoted by $\tau : m\otimes m^\prime \to m^\prime\otimes m$,
for all $m,m^\prime\in\MM$.
We shall always suppress the associator and unitors
and in particular simply write $m_1\otimes \cdots \otimes m_n$ for multiple tensor products
of objects $m_1,\dots,m_n\in \MM$. Because $\MM$ is by assumption cocomplete,
there exists a $\Set$-tensoring $\otimes :\Set \times \MM\to\MM$, which
we denote with abuse of notation by the same symbol as the monoidal product.
Explicitly, for any set $S\in\Set$ and $m\in\MM$, we define
\begin{flalign}\label{eqn:Settensoring}
S\otimes m \,:=\,\coprod_{s\in S} m\in\MM\quad,
\end{flalign}
where $\coprod$ is the coproduct in $\MM$.
\begin{ex}\label{ex:Set}
A simple example of a bicomplete closed symmetric monoidal category
is the Cartesian closed category $\Set$ of sets. Here $\otimes = \times $ is the Cartesian product,
$I =\{\ast\}$ is any singleton set and $[S,T] = \mathrm{Map}(S,T)$ is the set of maps from $S$ to $T$.
The symmetric braiding is given by the flip map $\tau : S\times T\to T\times S\,,~(s,t)\mapsto (t,s)$.
\end{ex}

\begin{ex}\label{ex:Vec}
Another standard example of a bicomplete closed symmetric monoidal
category is the category $\Vec_\bbK$ of vector spaces over a field $\bbK$.
Here $\otimes$ is the usual tensor product of vector spaces, $I=\bbK$ is the $1$-dimensional
vector space and $[V,W] = \mathrm{Hom}_\bbK(V,W)$ is the vector space of linear maps from $V$ to $W$.
The symmetric braiding is given by the flip map $\tau : V\otimes W \to W\otimes V\,,~ v\otimes w\mapsto w\otimes v$.
\end{ex}

\subsection{Colored operads}
We provide a brief review of those aspects of
the theory of colored operads that are relevant for
this work. We refer to \cite{Yau}, \cite{BergerMoerdijk}
and \cite{BeniniSchenkelWoike} for a more detailed presentation.
\sk

Let $\CCC\in\Set$ be a non-empty set, which we shall call the `set of colors'.
We will use the notation $\und{c} := (c_1, \dots,c_n) \in \CCC^{n}$ for elements of the $n$-fold product set. 
\begin{defi}\label{def:operad}
A {\em $\CCC$-colored operad} $\O$ with values in $\MM$ is given by the following data:
\begin{itemize}
\item for each $n\geq 0$ and $(\und{c},t) \in\CCC^{n+1}$,
an object $\O\big(\substack{t \\\und{c}}\big)\in\MM$ (called the {\em object of operations} from $\und{c}$ to $t$);

\item for each $n\geq 0$, $(\und{c},t) \in\CCC^{n+1}$ and permutation $\sigma\in\Sigma_n$,
an $\MM$-morphism $\O(\sigma) : \O\big(\substack{t \\\und{c}}\big) \to \O\big(\substack{t\\ \und{c}\sigma}\big)$
(called the {\em permutation action}), where $\und{c} \sigma := (c_{\sigma(1)},\dots,c_{\sigma(n)})$;

\item for each $n>0$, $k_1,\dots,k_n \geq 0$, $(\und{a},t)\in\CCC^{n+1}$ and $(\und{b}_i,a_i)\in\CCC^{k_i+1}$,
for $i=1,\dots,n$, an $\MM$-morphism $\gamma : \O\big(\substack{t \\\und{a}}\big)\otimes\Motimes_{i=1}^n \O\big(\substack{a_i\\ \und{b}_i}\big)\to \O\big(\substack{t \\ \und{b}}\big)$ (called the {\em operadic composition}),
where $\und{b} := (\und{b}_1,\dots,\und{b}_n)$ is defined by concatenation;

\item for each $c\in\CCC$, an $\MM$-morphism $\oone : I\to \O\big(\substack{c\\ c}\big)$ (called the {\em operadic unit}).

\end{itemize}
This data is required to satisfy the standard permutation action, 
equivariance, associativity and unitality axioms, see e.g.\ \cite[Definition 11.2.1]{Yau}. 
A morphism $\phi : \O\to\P$ between two $\CCC$-colored operads $\O$ and $\P$
with values in $\MM$ is a family of $\MM$-morphisms
\begin{flalign}
\phi\,:\, \O\big(\substack{t \\ \und{c}}\big) ~\longrightarrow ~ \P\big(\substack{t \\ \und{c}}\big)\quad,
\end{flalign}
for all $n\geq 0$ and $(\und{c},t)\in \CCC^{n+1}$, that is compatible
with the permutation actions $\phi \, \O(\sigma) = \P(\sigma)\,\phi$,
the operadic compositions $\phi\,\gamma^\O = \gamma^\P\,(\phi\otimes\Motimes_{i=1}^n \phi)$
and the operadic units $\phi\,\oone^\O = \oone^\P$. We denote the {\em category
of $\CCC$-colored operads} with values in $\MM$ by $\Op_\CCC(\MM)$.
\end{defi}

Colored operads generalize the concept of (enriched) categories in the following sense.
In contrast to allowing only for $1$-to-$1$ operations, such as the morphisms
$\CC(c,c^\prime)$ in a category $\CC$, colored operads also describe $n$-to-$1$
operations in terms of the objects of operations $\O\big(\substack{t \\ \und{c}}\big)$.
The operadic composition generalizes the usual categorical composition to
operations of higher arity and the operadic unit is analogous to the identity morphisms
in a category. Permutation actions are a new feature for operations of arity $\geq 2$
and they have no analog in ordinary category theory. 
The following example clarifies how every category
defines a colored operad with only $1$-ary operations.
\begin{ex}\label{ex:Diag}
Let $\CC$ be a small category and denote its set of objects by $\CC_0$.
The following construction defines a $\CC_0$-colored operad
$\Diag_\CC\in\Op_{\CC_0}(\Set)$ with values in $\MM=\Set$, 
which is called the {\em diagram operad} of $\CC$,
see e.g.\ \cite{BergerMoerdijk}. For $(\und{c},t)\in\CC_0^{n+1}$, one defines the set of operations by
\begin{flalign}
\Diag_{\CC}\big(\substack{t \\ \und{c}}\big) \,:=\, \begin{cases}
\emptyset & ~,~~\text{for }n\neq 1~,\\
\CC(c,t) &~,~~\text{for }\und{c} = c~.
\end{cases}
\end{flalign}
The permutation action is uniquely fixed because $\Sigma_1 = \{e\}$ is the trivial group.
The only non-trivial operadic compositions are
$\gamma : \Diag_{\CC}\big(\substack{t \\ a}\big) \times \Diag_{\CC}\big(\substack{a \\ b}\big) 
\to \Diag_{\CC}\big(\substack{t \\ b}\big)$ and they are given by composition of morphisms
in the category $\CC$. Finally, the operadic unit $\oone : \{\ast\} \to \Diag_{\CC}\big(\substack{ c \\ c}\big)$
is given by the identity morphisms in the category $\CC$.
One confirms that this defines a colored operad in the sense of Definition \ref{def:operad}.
\end{ex}

Many interesting examples of (colored) operads can be conveniently defined
in terms of generators and relations, see e.g.\ the examples below.
Let us briefly explain how this construction works.
We denote by $\Seq_\CCC(\MM)$ the category
of {\em $\CCC$-colored (non-symmetric) sequences} with values in $\MM$.
An object $X\in\Seq_\CCC(\MM)$ is a family
of objects $X\big(\substack{t \\ \und{c} }\big)\in\MM$, for 
all $n\geq 0$ and $(\und{c},t) \in\CCC^{n+1}$,
and a $\Seq_\CCC(\MM)$-morphism $f : X\to Y$
is a family of $\MM$-morphisms 
$f: X\big(\substack{t \\ \und{c} }\big)\to Y\big(\substack{t \\ \und{c} }\big)$,
for all $n\geq 0$ and $(\und{c},t) \in\CCC^{n+1}$.
There exists a forgetful functor
$U : \Op_\CCC(\MM)\to \Seq_\CCC(\MM)$ that forgets
the permutation action, operadic composition and operadic unit of
a $\CCC$-colored operad. This functor has a left adjoint
which is called the {\em free $\CCC$-colored operad functor}, i.e.\
we have an adjunction
\begin{flalign}\label{eqn:freeforgetOp}
\xymatrix{
F \,:\, \Seq_\CCC(\MM) ~\ar@<0.5ex>[r]&\ar@<0.5ex>[l]  ~\Op_\CCC(\MM) \,:\, U
}\quad.
\end{flalign}
Given any choice of generators $G\in \Seq_\CCC(\MM)$, 
we consider the corresponding free $\CCC$-colored operad $F(G)\in\Op_\CCC(\MM)$.
In order to implement relations, we consider  $R\in \Seq_\CCC(\MM)$
together with two parallel $\Seq_\CCC(\MM)$-morphisms
$r_1,r_2 : R \rightrightarrows U F (G)$. Note that because \eqref{eqn:freeforgetOp}
is an adjunction, the latter is equivalent to two parallel $\Op_\CCC(\MM)$-morphisms
$r_1,r_2 : F(R) \rightrightarrows F(G)$, which we denote with abuse of notation by the same symbols.
Because the category $\Op_\CCC(\MM)$ is cocomplete,  the following construction defines 
a $\CCC$-colored operad.
\begin{defi}\label{def:genreloperads}
The $\CCC$-colored operad presented by
the generators $G\in \Seq_\CCC(\MM)$ and relations 
$r_1,r_2 : R \rightrightarrows UF(G)$
is defined as the coequalizer
\begin{flalign}
\xymatrix@C=2.5em{ 
F(R) \ar@<-0.5ex>[r]_-{r_2}\ar@<0.5ex>[r]^-{r_1}  ~&~F(G) \ar@{-->}[r]  ~&~  F(G)/(r_1=r_2) 
}
\end{flalign}
in $\Op_\CCC(\MM)$.
\end{defi}

\begin{ex}\label{ex:Ass}
Consider for the moment $\MM = \Set$. 
The {\em associative operad} $\mathsf{As}\in \Op_{\{\ast\}}(\Set)$ is the single-colored operad 
(i.e.\ $\CCC = \{\ast\}$ is a singleton) presented by the following generators and relations: We define
the set of generators of arity $n$ by
\begin{flalign}
G(n) \, :=\, \begin{cases}
\{\eta\} & ~,~~\text{for } n=0~,\\
\{\mu\} &~,~~\text{for } n=2~,\\
\emptyset &~,~~\text{else },
\end{cases}
\end{flalign}
for all $n\geq 0$. The generator $\mu$ in arity $2$ is interpreted as a multiplication
operation and the generator $\eta$ in arity $0$ as a unit element. To implement
associativity and left/right unitality of these operations, we consider
\begin{flalign}
R(n)\,:= \, \begin{cases}
\{\lambda,\rho\}  & ~,~~\text{for } n=1~,\\
\{a \} &~,~~\text{for } n=3~,\\
\emptyset &~,~~\text{else },
\end{cases}
\end{flalign}
for all $n\geq 0$, together with the two $\Seq_{\{\ast\}}(\Set)$-morphisms
$r_1,r_2 : R\to UF(G)$ defined by
\begin{flalign}\label{tmp:assrelations}
r_1 \,:\, \begin{cases}
\lambda \, \longmapsto\, \gamma\big(\mu\otimes (\eta \otimes\oone)\big)~~,\\
\rho \, \longmapsto\, \gamma\big(\mu \otimes (\oone\otimes\eta)\big)~~,\\
a\, \longmapsto\, \gamma\big(\mu \otimes (\mu\otimes \oone)\big)~~,
\end{cases}
\quad,\qquad
r_2 \,:\, \begin{cases}
\lambda \, \longmapsto\, \oone~~,\\
\rho \, \longmapsto\, \oone~~,\\
a\, \longmapsto\, \gamma\big(\mu \otimes (\oone\otimes \mu)\big) ~~,
\end{cases}
\end{flalign}
where the operadic composition and unit
are those of the free operad $F(G)$.
The associative operad $\mathsf{As}:= F(G)/(r_1=r_2)\in \Op_{\{\ast\}}(\Set)$ 
is defined as the corresponding coequalizer. \sk

It is instructive and useful to visualize the generators and relations in terms of rooted trees.
The generators are depicted by
\begin{subequations}\label{eqn:assrelationpics}
\begin{flalign}
\mu = \parbox{1cm}{\begin{tikzpicture}
\draw (0,0.5) -- (0,0);
\draw (0,0) -- (-0.4,-0.4);
\draw (0,0) -- (0.4,-0.4);
\end{tikzpicture}}
\quad,\qquad \eta = \parbox{0.5cm}{\begin{tikzpicture}
\draw (0,0.5) -- (0,0);
\draw[fill=white] (0,0) circle (0.5mm);
\end{tikzpicture}}\quad,
\end{flalign}
and the relations (in the order they appear in \eqref{tmp:assrelations}) then read as
\begin{flalign}
\parbox{1cm}{\begin{tikzpicture}
\draw (0,0.5) -- (0,0);
\draw (0,0) -- (-0.4,-0.4);
\draw (0,0) -- (0.4,-0.4);
\draw[fill=white] (-0.4,-0.4) circle (0.5mm);
\end{tikzpicture}}
=~~ \parbox{0.5cm}{\begin{tikzpicture}
\draw (0,0.5) -- (0,-0.4) node[midway,right] {$\oone$};
\end{tikzpicture}}\quad,\qquad
\parbox{1cm}{\begin{tikzpicture}
\draw (0,0.5) -- (0,0);
\draw (0,0) -- (-0.4,-0.4);
\draw (0,0) -- (0.4,-0.4);
\draw[fill=white] (0.4,-0.4) circle (0.5mm);
\end{tikzpicture}}
=~~ \parbox{0.5cm}{\begin{tikzpicture}
\draw (0,0.5) -- (0,-0.4) node[midway,right] {$\oone$};
\end{tikzpicture}}\quad,\qquad
\parbox{1cm}{\begin{tikzpicture}
\draw (0,0.5) -- (0,0);
\draw (0,0) -- (-0.2,-0.2);
\draw (0,0) -- (0.4,-0.4)  ;
\draw (-0.2,-0.2) -- (-0.4,-0.4) ;
\draw (-0.2,-0.2) -- (0,-0.4) ;
\end{tikzpicture}}
=~~
\parbox{1cm}{\begin{tikzpicture}
\draw (0,0.5) -- (0,0);
\draw (0,0) -- (-0.4,-0.4);
\draw (0,0) -- (0.2,-0.2);
\draw (0.2,-0.2) -- (0,-0.4);
\draw (0.2,-0.2) -- (0.4,-0.4);
\end{tikzpicture}}\quad.
\end{flalign}
\end{subequations}

Let us note that the associative operad can be defined in any bicomplete closed symmetric
monoidal category $\MM$. Using the $\Set$-tensoring \eqref{eqn:Settensoring}
and the unit object $I\in\MM$, we define generators $G\otimes I\in\Seq_{\{\ast\}}(\MM)$
and relations $r_1\otimes I, r_2\otimes I : R\otimes I \to UF(G)\otimes I \cong UF(G\otimes I)$ 
in the category of $\MM$-valued sequences $\Seq_{\{\ast\}}(\MM)$. 
The corresponding coequalizer then defines the $\MM$-valued associative operad  $\mathsf{As}:= 
F(G\otimes I)/ ( r_1\otimes I=r_2\otimes I)\in \Op_{\{\ast\}}(\MM)$.
\end{ex}

\begin{ex}\label{ex:Lie}
Consider for the moment $\MM = \Vec_\bbK$. The {\em Lie operad}
$\mathsf{Lie}\in\Op_{\{\ast\}}(\Vec_\bbK)$ is the single-colored operad
presented by the following generators and relations: There is only one 
generator of arity $2$, the Lie bracket, that we depict by
\begin{subequations}\label{eqn:Lierelationpics}
\begin{flalign}
[\cdot,\cdot] = \parbox{1cm}{\begin{tikzpicture}
\draw (0,0.5) -- (0,0);
\draw (0,0) -- (-0.4,-0.4);
\draw (0,0) -- (0.4,-0.4);
\draw[fill=black] (0,0) circle (0.6mm);
\end{tikzpicture}}\quad.
\end{flalign}
The relations are given by antisymmetry and the Jacobi identity
\begin{flalign}
\parbox{1cm}{\begin{tikzpicture}[baseline=21]
\draw (0,0.5) -- (0,0);
\draw (0,0) -- (-0.4,-0.4) node [below=-1pt] {\tiny{1}};
\draw (0,0) -- (0.4,-0.4) node [below=-1pt] {\tiny{2}};
\draw[fill=black] (0,0) circle (0.6mm);
\end{tikzpicture}} = ~-\!\!\!\parbox{1cm}{\begin{tikzpicture}[baseline=21]
\draw (0,0.5) -- (0,0);
\draw (0,0) -- (-0.4,-0.4) node [below=-1pt] {\tiny{2}};
\draw (0,0) -- (0.4,-0.4) node [below=-1pt] {\tiny{1}};
\draw[fill=black] (0,0) circle (0.6mm);
\end{tikzpicture}}\quad,\qquad
\parbox{1cm}{\begin{tikzpicture}[baseline=21]
\draw (0,0.5) -- (0,0);
\draw (0,0) -- (-0.4,-0.4) node [below=-1pt] {\tiny{1}};
\draw (0,0) -- (0.2,-0.2);
\draw (0.2,-0.2) -- (0,-0.4) node [below=-1pt] {\tiny{2}};
\draw (0.2,-0.2) -- (0.4,-0.4) node [below=-1pt] {\tiny{3}};
\draw[fill=black] (0,0) circle (0.6mm);
\draw[fill=black] (0.2,-0.2) circle (0.6mm);
\end{tikzpicture}} ~+ \parbox{1cm}{\begin{tikzpicture}[baseline=21]
\draw (0,0.5) -- (0,0);
\draw (0,0) -- (-0.4,-0.4) node [below=-1pt] {\tiny{2}};
\draw (0,0) -- (0.2,-0.2);
\draw (0.2,-0.2) -- (0,-0.4) node [below=-1pt] {\tiny{3}};
\draw (0.2,-0.2) -- (0.4,-0.4) node [below=-1pt] {\tiny{1}};
\draw[fill=black] (0,0) circle (0.6mm);
\draw[fill=black] (0.2,-0.2) circle (0.6mm);
\end{tikzpicture}} ~+ \parbox{1cm}{\begin{tikzpicture}[baseline=21]
\draw (0,0.5) -- (0,0);
\draw (0,0) -- (-0.4,-0.4)node [below=-1pt] {\tiny{3}};
\draw (0,0) -- (0.2,-0.2);
\draw (0.2,-0.2) -- (0,-0.4) node [below=-1pt] {\tiny{1}};
\draw (0.2,-0.2) -- (0.4,-0.4) node [below=-1pt] {\tiny{2}};
\draw[fill=black] (0,0) circle (0.6mm);
\draw[fill=black] (0.2,-0.2) circle (0.6mm);
\end{tikzpicture}} ~=~0\quad,
\end{flalign}
\end{subequations}
where the numbers below the trees indicate input permutations.
\sk

Note that for defining the Lie relations we had to use
the natural Abelian group structure on the $\Hom$-sets of $\Vec_\bbK$, 
i.e.\ addition of linear maps between vector spaces. Hence,
the Lie operad can {\em not} be defined in a generic 
bicomplete closed symmetric monoidal category $\MM$.
If however $\MM$ is an additive category, 
then one can define the Lie operad $\mathsf{Lie}\in\Op_{\{\ast\}}(\MM)$ 
with values in $\MM$ along the same lines as above.
\end{ex}

\begin{ex}\label{ex:Pois}
As in Example \ref{ex:Lie}, let us assume that $\MM$ is additive.
The {\em Poisson operad} $\mathsf{Pois}\in\Op_{\{\ast\}}(\MM)$ is the single-colored operad
presented by the following generators and relations: The generators are
\begin{subequations}\label{eqn:Poisrelationpics}
\begin{flalign}
\mu = \parbox{1cm}{\begin{tikzpicture}
\draw (0,0.5) -- (0,0);
\draw (0,0) -- (-0.4,-0.4);
\draw (0,0) -- (0.4,-0.4);
\end{tikzpicture}}
\quad,\qquad \eta = \parbox{0.5cm}{\begin{tikzpicture}
\draw (0,0.5) -- (0,0);
\draw[fill=white] (0,0) circle (0.5mm);
\end{tikzpicture}}\quad,\qquad
\{\cdot,\cdot\} = \parbox{1cm}{\begin{tikzpicture}
\draw (0,0.5) -- (0,0);
\draw (0,0) -- (-0.4,-0.4);
\draw (0,0) -- (0.4,-0.4);
\draw[fill=black] (0,0) circle (0.6mm);
\end{tikzpicture}}\quad,
\end{flalign}
where $\{\cdot,\cdot\}$ denotes the Poisson bracket.
The generators $\mu$ and $\eta$ satisfy the relations of the associative
operad \eqref{eqn:assrelationpics} and the generator $\{\cdot,\cdot\}$
the relations of the Lie operad \eqref{eqn:Lierelationpics}.
We further demand the relations
\begin{flalign}
\parbox{1cm}{\begin{tikzpicture}[baseline=21]
\draw (0,0.5) -- (0,0);
\draw (0,0) -- (-0.4,-0.4) node [below=-1pt] {\tiny{1}};
\draw (0,0) -- (0.4,-0.4) node [below=-1pt] {\tiny{2}};
\end{tikzpicture}} = \parbox{1cm}{\begin{tikzpicture}[baseline=21]
\draw (0,0.5) -- (0,0);
\draw (0,0) -- (-0.4,-0.4) node [below=-1pt] {\tiny{2}};
\draw (0,0) -- (0.4,-0.4) node [below=-1pt] {\tiny{1}};
\end{tikzpicture}}\quad,\qquad
\parbox{1cm}{\begin{tikzpicture}[baseline=21]
\draw (0,0.5) -- (0,0);
\draw (0,0) -- (-0.4,-0.4) node [below=-1pt] {\tiny{1}};
\draw (0,0) -- (0.2,-0.2);
\draw (0.2,-0.2) -- (0,-0.4) node [below=-1pt] {\tiny{2}};
\draw (0.2,-0.2) -- (0.4,-0.4) node [below=-1pt] {\tiny{3}};
\draw[fill=black] (0,0) circle (0.6mm);
\end{tikzpicture}}  = 
\parbox{1cm}{\begin{tikzpicture}[baseline=21]
\draw (0,0.5) -- (0,0);
\draw (0,0) -- (-0.2,-0.2);
\draw (0,0) -- (0.4,-0.4) node [below=-1pt] {\tiny{3}} ;
\draw (-0.2,-0.2) -- (-0.4,-0.4) node [below=-1pt] {\tiny{1}};
\draw (-0.2,-0.2) -- (0,-0.4) node [below=-1pt] {\tiny{2}};
\draw[fill=black] (-0.2,-0.2) circle (0.6mm);
\end{tikzpicture}}+
\parbox{1cm}{\begin{tikzpicture}[baseline=21]
\draw (0,0.5) -- (0,0);
\draw (0,0) -- (-0.4,-0.4) node [below=-1pt] {\tiny{2}};
\draw (0,0) -- (0.2,-0.2);
\draw (0.2,-0.2) -- (0,-0.4) node [below=-1pt] {\tiny{1}};
\draw (0.2,-0.2) -- (0.4,-0.4) node [below=-1pt] {\tiny{3}};
\draw[fill=black] (0.2,-0.2) circle (0.6mm);
\end{tikzpicture}}\quad,
\end{flalign}
\end{subequations}
which express that $\mu$ is commutative and that $\{\cdot,\cdot\}$ is a derivation in the right entry (and
hence by antisymmetry also a derivation in the left entry). Computing the  operadic composition
of the derivation relation with $\oone\otimes \eta\otimes \eta$ implies that
\begin{flalign}
\parbox{1cm}{\begin{tikzpicture}
\draw (0,0.5) -- (0,0);
\draw (0,0) -- (-0.4,-0.4);
\draw (0,0) -- (0.4,-0.4);
\draw[fill=white] (0.4,-0.4) circle (0.5mm);
\draw[fill=black] (0,0) circle (0.6mm);
\end{tikzpicture}}
=~0 \quad,
\end{flalign}
i.e.\ the Poisson bracket of the unit element is zero.
\end{ex}

\begin{ex}\label{ex:OpuLie}
This example will play a role in the formalization of linear field theories,
see Example \ref{ex:LinearFT}. Let $\MM$ be additive.
The {\em unital Lie operad} $\mathsf{uLie} \in \Op_{\{\ast\}}(\MM)$
is the single-colored operad obtained by adding to the Lie operad from Example \ref{ex:Lie}
a new generator of arity $0$, i.e.\ we have two generators
\begin{flalign}
[\cdot,\cdot] = \parbox{1cm}{\begin{tikzpicture}
\draw (0,0.5) -- (0,0);
\draw (0,0) -- (-0.4,-0.4);
\draw (0,0) -- (0.4,-0.4);
\draw[fill=black] (0,0) circle (0.6mm);
\end{tikzpicture}}\quad,\qquad
\eta = \parbox{0.5cm}{\begin{tikzpicture}
\draw (0,0.5) -- (0,0);
\draw[fill=white] (0,0) circle (0.5mm);
\end{tikzpicture}}\quad.
\end{flalign} 
In addition to the antisymmetry and Jacobi identity 
relations \eqref{eqn:Lierelationpics} for $[\cdot,\cdot]$,
we demand the compatibility relation 
\begin{flalign}
\parbox{1cm}{\begin{tikzpicture}
\draw (0,0.5) -- (0,0);
\draw (0,0) -- (-0.4,-0.4);
\draw (0,0) -- (0.4,-0.4);
\draw[fill=white] (0.4,-0.4) circle (0.5mm);
\draw[fill=black] (0,0) circle (0.6mm);
\end{tikzpicture}}
=~0 
\end{flalign}
between the Lie bracket and the unit. 
\end{ex}

We shall often require a generalization of the
concept of colored operad morphisms from Definition \ref{def:operad} to morphisms
that do not necessarily preserve the underlying sets of colors. As a preparation for the relevant definition, 
note that for every $\DDD$-colored operad 
$\P \in \Op_{\DDD}(\MM)$ and every map of sets
$f : \CCC\to\DDD$, one may define the {\em pullback $\CCC$-colored operad}
$f^\ast(\P)\in \Op_\CCC(\MM)$. Concretely, it is defined by setting 
$f^\ast(\P)\big(\substack{t \\\und{c}}\big) := \P \big(\substack{f(t)\\ f(\und{c}) }\big)$,
for all $n\geq 0$ and $(\und{c},t)\in\CCC^{n+1}$, and restricting the permutation action, 
operadic composition and operadic unit in the evident way.
\begin{defi}\label{def:varycolorOp}
The category $\Op(\MM)$ of {\em operads with varying colors} with values in $\MM$
has as objects all pairs $(\CCC,\O)$ consisting of a non-empty set
$\CCC$ and a $\CCC$-colored operad $\O\in\Op_\CCC(\MM)$. A morphism
is a pair $(f,\phi) : (\CCC,\O)\to (\DDD,\P)$ consisting
of a map of sets $f : \CCC\to \DDD$ and an $\Op_\CCC(\MM)$-morphism
$\phi : \O\to f^\ast(\P)$ to the pullback $\CCC$-colored operad.
\end{defi}

\subsection{Algebras over colored operads}
We have seen above that a colored operad
$\O$ describes abstract $n$-to-$1$ operations, for all $n\geq 0$, 
together with a composition law $\gamma$, specified identities $\oone$ 
and a permutation action $\O(\sigma)$ that allows us to permute the inputs
of operations. Forming concrete realizations/representations of these abstract
operations leads to the concept of algebras over colored operads.
\begin{defi}\label{def:Oalgebra}
An {\em algebra $A$ over a $\CCC$-colored operad $\O\in\Op_{\CCC}(\MM)$}, 
or shorter an {\em $\O$-algebra}, is  given by the following data:
\begin{itemize}
\item for each $c\in \CCC$, an object $A_c\in\MM$;
\item for each $n\geq 0$ and $(\und{c},t)\in\CCC^{n+1}$, an
$\MM$-morphism $\alpha : \O\big(\substack{ t \\ \und{c}}\big)\otimes A_{\und{c}} \to A_t$ 
(called {\em $\O$-action}), where $A_{\und{c}} := \bigotimes_{i=1}^n A_{c_i}$ with the convention
that $A_\emptyset =I$ for $n=0$.
\end{itemize}
This data is required to satisfy the standard associativity, unity and equivariance axioms, 
see e.g.\  \cite[Definition 13.2.3]{Yau}.
A morphism $\kappa : A\to B$ between two $\O$-algebras $A$ and $B$ is
a family of $\MM$-morphisms $\kappa : A_c\to B_c$, for all $c\in\CCC$, that
is compatible with the $\O$-actions, i.e.\ $\kappa\,\alpha^A  = \alpha^B\, (\id\otimes \Motimes_{i=1}^n \kappa) $.
We denote the category of $\O$-algebras by $\Alg_{\O}$.
\end{defi}

\begin{ex}\label{ex:Diagalgebra}
Consider the diagram operad $\Diag_{\CC}\in \Op_{\CC_0}(\Set)$
from Example \ref{ex:Diag}. A $\Diag_{\CC}$-algebra 
is a family of sets $A_c\in\Set$, for all objects $c\in \CC_0$ in the category $\CC$,
together with maps $\alpha : \Diag_{\CC}\big(\substack{t\\ c}\big) \times A_c\to A_t$,
for all $c,t\in \CC_0$. (Here we already used that $\Diag_{\CC}$ only contains $1$-ary operations.)
Because $\Diag_{\CC}\big(\substack{t\\ c}\big) = \CC(c,t)$ is the $\Hom$-set, 
the latter data is equivalent to specifying for each $\CC$-morphism $f:c\to t$
a map of sets $A(f):= \alpha(f,-) : A_c\to A_t$. The axioms for $\O$-algebras
imply that $A(g\,f) = A(g) \, A(f)$, for all composable
$\CC$-morphism, and $A(\id) = \id$ for the identities. Hence,
a $\Diag_{\CC}$-algebra is precisely a functor $\CC\to \Set$, i.e.\ a diagram
of shape $\CC$. One observes that morphisms
between $\Diag_\CC$-algebras are precisely 
natural transformations between the
corresponding functors.
\end{ex}

\begin{ex}\label{ex:Assalgebra}
Consider for the moment $\MM=\Set$ and the associative 
operad $\mathsf{As}\in\Op_{\{\ast\}}(\Set)$ from  Example \ref{ex:Ass}.
An $\mathsf{As}$-algebra is a single set $A = A_\ast\in\Set$
together with an $\mathsf{As}$-action. The latter is equivalent to providing a family of maps
$\alpha : \mathsf{As}(n)  \to \mathrm{Map}(A^{\times n},A)$,
for all $n\geq 0$, which define an $\Op_{\{\ast\}}(\Set)$-morphism to the endomorphism operad $\mathrm{End}(A)$,
see e.g.\ \cite[Definition 13.8.1]{Yau}. 
Because $\mathsf{As}$ is presented by generators and relations (see Example \ref{ex:Ass}),
this is equivalent to defining $\alpha$ on the generators such that the relations hold true.
This yields two maps $\mu_A := \alpha(\mu) : A\times A\to A$ and $\eta_A:=\alpha(\eta) : \{\ast\} \to A$,
which because of the relations have to satisfy the axioms of an associative and unital algebra in $\Set$.
One finds that morphisms
of $\mathrm{As}$-algebras are precisely morphisms of associative and unital algebras.
\sk

For a general bicomplete closed symmetric monoidal category $\MM$,
one obtains that the category $\Alg_{\mathsf{As}}$ of algebras over 
$\mathsf{As}\in\Op_{\{\ast\}}(\MM)$ is the category of associative and unital algebras 
in $\MM$. In particular, for $\MM=\Vec_\bbK$ this is the category of associative 
and unital $\bbK$-algebras.
\end{ex}

\begin{ex}\label{ex:LieandPoisalgebra}
A similar argument as in Example \ref{ex:Assalgebra} shows that
the category $\Alg_{\mathsf{Lie}}$ of algebras over the Lie operad $\mathsf{Lie}\in \Op_{\{\ast\}}(\MM)$
(see Example \ref{ex:Lie}) is the category of Lie algebras in $\MM$ and that the
category $\Alg_{\mathsf{Pois}}$ of algebras over the Poisson 
operad $\mathsf{Pois}\in\Op_{\{\ast\}}(\MM)$ (see Example \ref{ex:Pois}) is
the category of Poisson algebras in $\MM$. 
\end{ex}

Given an $\Op(\MM)$-morphism $(f,\phi): (\CCC,\O)\to (\DDD,\P)$ in the sense of Definition \ref{def:varycolorOp},
one may define a pullback functor $(f,\phi)^\ast : \Alg_{\P}\to \Alg_{\O}$ between the corresponding
categories of algebras. The pullback of $A\in\Alg_{\P}$ is the $\O$-algebra
defined by $((f,\phi)^\ast A)_c := A_{f(c)}\in\MM$, for all $c\in \CCC$, together with
the $\O$-action
\begin{flalign}
\xymatrix@C=3em{
\O\big(\substack{t\\\und{c}}\big)\otimes ((f,\phi)^\ast A)_{\und{c}} \ar[r]^-{\phi\otimes \id}&
\P\big(\substack{f(t) \\ f(\und{c})}\big)\otimes A_{f(\und{c})} \ar[r]^-{\alpha}&
A_{f(t)} = ((f,\phi)^\ast A)_t
}\quad.
\end{flalign}
\begin{theo}\label{theo:adjunction}
For any $\Op(\MM)$-morphism $(f,\phi): (\CCC,\O)\to (\DDD,\P)$, the pullback functor 
$(f,\phi)^\ast : \Alg_{\P}\to \Alg_{\O}$ has a left adjoint, 
which is called {\em operadic left Kan extension}.
We denote the corresponding adjunction by
\begin{flalign}
\xymatrix{
(f,\phi)_! \,:\, \Alg_{\O} ~\ar@<0.5ex>[r]&\ar@<0.5ex>[l]  ~\Alg_{\P} \,:\, (f,\phi)^\ast
}\quad.
\end{flalign}
\end{theo}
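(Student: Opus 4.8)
The plan is to realize both categories of algebras as Eilenberg--Moore categories of monads and then to invoke the adjoint lifting theorem. Recall (see e.g.\ \cite{Yau}) that for any $\CCC$-colored operad $\O$ the category $\Alg_\O$ is canonically isomorphic to the category of algebras over a monad $T_\O$ on the product category $\MM^\CCC$ of $\CCC$-indexed families of objects in $\MM$. Explicitly, this monad is given on objects by
\begin{flalign}
T_\O(X)_t \,=\, \coprod_{n\geq 0}\ \coprod_{\und{c}\in\CCC^n}\Big(\O\big(\substack{t\\ \und{c}}\big)\otimes X_{\und{c}}\Big)_{\Sigma_n}\quad,
\end{flalign}
where $X_{\und{c}} := \bigotimes_{i=1}^n X_{c_i}$ and $(-)_{\Sigma_n}$ denotes the coinvariants for the $\Sigma_n$-action assembled from the permutation action of $\O$ and the symmetric braiding of $\MM$. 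I write $U_\O : \Alg_\O\to\MM^\CCC$ for the forgetful functor, with free--forgetful adjunction $F_\O\dashv U_\O$; both $U_\O$ and $U_\P$ are monadic, being the Eilenberg--Moore forgetful functors.

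First I would record that the pullback functor is a lift of a reindexing functor. Regarding $f:\CCC\to\DDD$ as a functor between discrete categories, restriction along $f$ yields $f^\ast : \MM^\DDD\to\MM^\CCC$, $(f^\ast Y)_c = Y_{f(c)}$, and the very definition of $(f,\phi)^\ast$ shows that the square
\begin{flalign}
\xymatrix{
\Alg_\P \ar[r]^-{(f,\phi)^\ast}\ar[d]_-{U_\P} & \Alg_\O\ar[d]^-{U_\O}\\
\MM^\DDD \ar[r]_-{f^\ast} & \MM^\CCC
}
\end{flalign}
commutes, i.e.\ $U_\O\, (f,\phi)^\ast = f^\ast\, U_\P$. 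Since $\MM$ is cocomplete, the reindexing functor $f^\ast$ admits a left adjoint $f_!:\MM^\CCC\to\MM^\DDD$, the left Kan extension along $f$, computed by the coproduct formula $(f_! X)_d = \coprod_{c\in f^{-1}(d)} X_c$.

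With these ingredients in place, the existence of $(f,\phi)_!$ follows from the adjoint lifting theorem: given the commuting square above of monadic functors $U_\P,U_\O$ in which the bottom arrow $f^\ast$ admits a left adjoint, the top arrow $(f,\phi)^\ast$ admits a left adjoint as soon as $\Alg_\P$ has coequalizers of reflexive pairs. Concretely, the value of the left adjoint on $A\in\Alg_\O$ can be presented as the reflexive coequalizer in $\Alg_\P$
\begin{flalign}
\xymatrix@C=3em{
F_\P\, f_!\, T_\O\, U_\O A \ar@<0.6ex>[r]\ar@<-0.6ex>[r] & F_\P\, f_!\, U_\O A \ar[r] & (f,\phi)_! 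A
}\quad,
\end{flalign}
where one map is induced by the $T_\O$-action $T_\O U_\O A\to U_\O A$ and the other by the operad morphism $\phi$ (equivalently, the induced natural transformation $f_! T_\O\to T_\P f_!$ composed with the counit of $F_\P\dashv U_\P$); the common section supplied by the unit of $T_\O$ exhibits this pair as reflexive.

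The crux of the argument is therefore the existence of reflexive coequalizers in $\Alg_\P$, which is where the monoidal hypotheses on $\MM$ enter. The key point is that the monad $T_\P$ preserves reflexive coequalizers: because $\MM$ is closed, $\otimes$ preserves all colimits in each variable separately, whence the $n$-fold tensor powers $X\mapsto X_{\und{c}}$ preserve reflexive coequalizers (the standard fact that reflexive coequalizers are preserved by the $n$-fold tensor product functor in a closed symmetric monoidal category), while the remaining operations building $T_\P$ --- the coinvariants $(-)_{\Sigma_n}$ and the coproducts over arities and colors --- are themselves colimits and hence commute with the reflexive-coequalizer colimit. Consequently reflexive coequalizers in $\Alg_\P$ are created by the forgetful functor $U_\P$, so they exist, and by Linton's theorem the Eilenberg--Moore category $\Alg_\P$ over the cocomplete base $\MM^\DDD$ is in fact cocomplete. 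This supplies the hypothesis needed above. As an alternative, hands-on route one may instead verify directly that the displayed coequalizer carries the asserted $\P$-algebra structure and satisfies the universal property of the left adjoint; either way one obtains the operadic left Kan extension $(f,\phi)_!\dashv(f,\phi)^\ast$.
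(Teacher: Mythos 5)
Your proof is correct in its essentials, but note that the paper itself gives no proof of Theorem \ref{theo:adjunction}: it is recalled as a known result, with the preliminaries deferring to \cite{Yau}, \cite{BergerMoerdijk} and \cite{BeniniSchenkelWoike}. So the comparison is with the standard literature argument, and your route is exactly that argument: realize $\Alg_{\O}$ and $\Alg_{\P}$ as Eilenberg--Moore categories over $\MM^{\CCC}$ and $\MM^{\DDD}$, observe the strict commutation $U_{\O}\,(f,\phi)^\ast = f^\ast\, U_{\P}$ with $f^\ast$ admitting the left adjoint $f_!$, and apply the adjoint lifting theorem. You also get the delicate points right: it is $\Alg_{\P}$ (the target of the desired left adjoint) that must have reflexive coequalizers, and the only non-formal input is that the monad $T_{\P}$ preserves them --- which follows, as you say, from cocontinuity of $\otimes$ in each variable (closedness of $\MM$), the resulting preservation of reflexive coequalizers by tensor powers, and the fact that coproducts and $\Sigma_n$-coinvariants are themselves colimits; this uses precisely the paper's standing hypotheses that $\MM$ is closed and bicomplete. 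Your reflexive-pair presentation of $(f,\phi)_!A$ by free $\P$-algebras is the standard one and the identification of the two parallel morphisms (the $T_{\O}$-action versus the map induced by $\phi$ and the counit) is correct.

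One small correction: your formula for the free-algebra monad misplaces the coinvariants. For a fixed non-constant tuple $\und{c}\in\CCC^n$ the group $\Sigma_n$ does not act on $\O\big(\substack{t \\ \und{c}}\big)\otimes X_{\und{c}}$, since $\O(\sigma)$ lands in $\O\big(\substack{t \\ \und{c}\sigma}\big)$; rather, $\Sigma_n$ permutes the summands of the coproduct over $\CCC^n$. The correct formula is
\begin{flalign}
T_\O(X)_t \,=\, \coprod_{n\geq 0}\,\Big(\coprod_{\und{c}\in\CCC^n}\O\big(\substack{t \\ \und{c}}\big)\otimes X_{\und{c}}\Big)_{\Sigma_n}\quad.
\end{flalign}
This slip is inessential for your argument: $T_{\P}$ is still assembled from coproducts, tensor products and coinvariants, so the preservation of reflexive coequalizers, the creation of such coequalizers by $U_{\P}$, and hence the application of the adjoint lifting theorem all go through verbatim.
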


\begin{ex}
Every functor $F: \CC\to \DD$ defines
an evident $\Op(\Set)$-morphism $(F_0, F) : (\CC_0,\Diag_\CC)\to (\DD_0,\Diag_\DD)$ 
between the corresponding diagram operads (see Example \ref{ex:Diag}).
Recalling from Example \ref{ex:Diagalgebra} that
$\Alg_{\Diag_\CC}\cong \Set^\CC$ is the category of functors from 
$\CC$ to $\Set$ (and similarly that $\Alg_{\Diag_\DD}\cong \Set^\DD$), one shows
that the pullback functor $(F_0,F)^\ast $ is the usual pullback functor 
$F^\ast :=(-)\circ F : \Set^\DD\to \Set^\CC$ for functor categories.
Its left adjoint $(F_0,F)_!$ is therefore the ordinary categorical 
left Kan extension $\Lan_F : \Set^\CC\to \Set^\DD$. 
\end{ex}

%%%%%%%%%%%%%%%%%%%%%%%%%%%%%%%%%%%%%%%%%%%%%%%%
%%%%%%%%%%%%%%%%%%%%%%%%%%%%%%%%%%%%%%%%%%%%%%%%

\section{\label{sec:FTs}Field theory operads}
\subsection{Orthogonal categories and field theories}
Let us briefly recall the basic idea of algebraic quantum field theory, 
see e.g.\ \cite{HaagKastler,Brunetti,AQFTbook,FewsterVerch}
for more details. Broadly speaking, a field theory in this setting is a functor
from a suitable category of spacetimes to a category of algebraic structures of interest
that satisfies a list of physically motivated axioms. The prime example is given by functors
$\AAA: \Loc\to \Alg_{\mathsf{As}}$ from the category $\Loc$ of globally hyperbolic Lorentzian
manifolds to the category of associative and unital algebras 
that satisfy the Einstein causality axiom. The latter is a property
of the functor $\AAA: \Loc\to \Alg_{\mathsf{As}}$ which demands
that for every pair $(f_1 : M_1 \to M, f_2: M_2 \to M)$ of $\Loc$-morphisms
whose images are causally disjoint in $M$ the diagram
\begin{flalign}\label{eqn:EinsteinCausality}
\xymatrix@C=5em{
\ar[d]_-{\AAA(f_1)\otimes \AAA(f_2)} \AAA(M_1)\otimes \AAA(M_2) \ar[r]^-{\AAA(f_1)\otimes \AAA(f_2)} & \AAA(M)\otimes \AAA(M)\ar[d]^-{\mu_M^\op}\\
\AAA(M)\otimes\AAA(M) \ar[r]_-{\mu_M^{}} &\AAA(M) 
}
\end{flalign}
commutes, where $\mu_M^{(\op)}$ denotes the (opposite) multiplication on $\AAA(M)$.
Another important physical example is given by functors
$\AAA: \Loc\to \Alg_{\mathsf{As}}$ that satisfy the time-slice axiom in addition to the Einstein causality axiom.
Such theories will be formalized later in Section \ref{subsec:orthlocalication}
via localization techniques.
\sk

For the purpose of this paper, we consider the following generalization 
of quantum field theories satisfying the Einstein causality axiom.
(Examples which justify this generalization are presented at the end of this subsection.)
Let $\CC$ be a small category which we interpret as a category of spacetimes. Instead of 
associative and unital algebras, let us take any single-colored operad $\P\in\Op_{\{\ast\}}(\MM)$
and consider the functor category ${\Alg_{\P}}^\CC$. 
An object in this category 
is a functor $\AAA : \CC\to\Alg_{\P}$, i.e.\ an assignment of $\P$-algebras to 
spacetimes, and the morphisms are natural transformations
between such functors. To encode physical axioms which generalize the Einstein causality axiom above, 
we recall the concept of orthogonal categories from \cite{BeniniSchenkelWoike}. 
\begin{defi}\label{def:orthogonalcat}
An {\em orthogonal category} is a pair $\ovr{\CC} := (\CC, \perp)$ consisting
of a small category $\CC$ and a subset $\perp 
\subseteq \mathrm{Mor}\,\CC\, {}_{\mathrm{t}}{\times}_{\mathrm{t}}\,\mathrm{Mor}\,\CC$ 
of the set of pairs of morphisms with a common target, which satisfies the
following properties:
\begin{itemize}
\item[(i)] {\em Symmetry:} If $(f_1,f_2)\in \perp$,
then $(f_2,f_1)\in\perp $.

\item[(ii)] {\em Stability under compositions:} If $(f_1,f_2)\in \perp$, then
$(g\, f_1\, h_1 ,g\, f_2\, h_2) \in\perp$ for all composable $\CC$-morphisms $g$, $h_1$ and $h_2$.
\end{itemize}
We shall also write $f_1\perp f_2$ for elements $(f_1,f_2)\in \perp$.
An {\em orthogonal functor} $F : \ovr{\CC}\to\ovr{\DD}$ is a functor $F : \CC\to \DD$ 
that preserves orthogonality, i.e.\ if $f_1\perp_\CC f_2$ then $F(f_1)\perp_\DD F(f_2)$.
We denote by $\OCat$ the category of orthogonal categories and orthogonal functors.
\end{defi}

\begin{ex}\label{ex:Loc}
Let $\Loc$ be any small category that is equivalent to the usual category 
of oriented, time-oriented and globally hyperbolic Lorentzian spacetimes 
of a fixed dimension $\geq 2$, see \cite{Brunetti,FewsterVerch}. 
We define $\perp_\Loc$ as the subset of all pairs $(f_1 : M_1\to M, f_2: M_2\to M)$
of causally disjoint $\Loc$-morphisms, i.e.\ pairs of morphisms 
such that the images $f_1(M_1)$ and $f_2(M_2)$
are causally disjoint subsets in $M$.
The pair $\ovr{\Loc} := (\Loc,\perp_{\Loc})$ defines an orthogonal category.
\end{ex}

Let us now consider a parallel pair of $\Seq_{\{\ast\}}(\MM)$-morphisms
$r_1,r_2 : I[2] \rightrightarrows U(\P)$, where $I[2]\in \Seq_{\{\ast\}}(\MM)$ is defined by
\begin{flalign}\label{eqn:I2}
I[2](n)\, :=\, \begin{cases}
I & ~,~~\text{for } n=2~,\\
\emptyset &~,~~\text{else },
\end{cases}
\end{flalign}
for all $n\geq 0$. This means that each $r_i$ picks out an operation of arity $2$ in $\P$.
For simplifying notation, we shall write 
\begin{flalign}
\P^{(r_1,r_2)} \,:=\, \big(\P, r_1,r_2 : I[2] \rightrightarrows U(\P)\big)
\end{flalign}
and we call $\P^{(r_1,r_2)}$ an {\em (arity $2$) bipointed single-colored operad}.
\begin{defi}\label{def:fieldtheory}
A {\em field theory of type $\P^{(r_1,r_2)}$ on $\ovr{\CC}$}  is a functor
$\AAA : \CC \to\Alg_{\P}$ that satisfies the following property:
For all $(f_1 : c_1\to c)\perp (f_2:c_2\to c)$, the diagram
\begin{flalign}\label{eqn:FTproperty}
\xymatrix@C=8em{
\ar[d]_-{r_2 \otimes\AAA(f_1)\otimes \AAA(f_2) } I\otimes \AAA(c_1) \otimes \AAA(c_2)  \ar[r]^-{r_1 \otimes\AAA(f_1)\otimes \AAA(f_2) } & \P(2)\otimes \AAA(c)^{\otimes 2}  \ar[d]^-{\alpha^\P_c}\\
 \P(2)\otimes \AAA(c)^{\otimes 2} \ar[r]_-{\alpha^\P_c} & \AAA(c)
}
\end{flalign}
in $\MM$ commutes, where $\alpha^\P_c $ denotes the $\P$-action on $\AAA(c) \in\Alg_\P$
(see Definition \ref{def:Oalgebra}). The {\em category of field theories 
of type $\P^{(r_1,r_2)}$ on $\ovr{\CC}$} is defined as the full subcategory
\begin{flalign}
\FT\big(\ovr{\CC},\P^{(r_1,r_2)}\big) \,\subseteq\, {\Alg_\P}^\CC\quad,
\end{flalign}
whose objects are all functors $\AAA : \CC\to\Alg_{\P}$ satisfying \eqref{eqn:FTproperty}.
\end{defi}

\begin{rem}
Our concept of field theories in Definition \ref{def:fieldtheory} is based
on the idea that there exist two distinguished arity $2$ operations in $\P$, 
which act in the same way when pre-composed with an orthogonal pair $f_1 \perp f_2$
of $\CC$-morphisms. There exists an obvious generalization of this scenario
to $n$-ary operations in $\P$ and orthogonal $n$-tuples of $\CC$-morphisms.
We however decided not to introduce this more general framework for field theories,
because all examples of interest to us are field theories in the sense of 
Definition \ref{def:fieldtheory}. 
\end{rem}

\begin{ex}[Quantum field theories]\label{ex:LCQFT}
Consider the associative operad $\mathsf{As}\in \Op_{\{\ast\}}(\MM)$ from Example
\ref{ex:Ass} and the two $\Seq_{\{\ast\}}(\MM)$-morphisms 
$\mu,\mu^\op : I[2] \rightrightarrows U(\mathsf{As})$ 
which select the multiplication and opposite multiplication operations.
A field theory of type $\mathsf{As}^{(\mu,\mu^\op)}$ on $\ovr{\CC}$
is a functor $\AAA : \CC\to \Alg_{\mathsf{As}}$ to the category of associative and unital
algebras which satisfies the analog of \eqref{eqn:EinsteinCausality}. For 
$\ovr{\CC} = \ovr{\Loc}$ (see Example \ref{ex:Loc}), 
this is a locally covariant quantum field theory \cite{Brunetti,FewsterVerch}
that satisfies the Einstein causality axiom but not necessarily the time-slice axiom.
The time-slice axiom will be discussed in Section \ref{subsec:orthlocalication}.
\end{ex}

\begin{rem}\label{rem:LCQFTviacommutator}
If $\MM$ is additive, there exists an alternative but equivalent
formalization of the type of field theories from Example \ref{ex:LCQFT}. Consider
the associative operad $\mathsf{As}\in \Op_{\{\ast\}}(\MM)$ and the two 
$\Seq_{\{\ast\}}(\MM)$-morphisms $[\cdot,\cdot],0 : I[2] \rightrightarrows U(\mathsf{As})$ 
which select the commutator $[\cdot,\cdot]= \mu-\mu^\op$ and the zero-operation (of arity $2$).
A field theory of type $\mathsf{As}^{([\cdot,\cdot],0)}$ on $\ovr{\CC}$ 
is a functor $\AAA : \CC\to\Alg_{\mathsf{As}}$ to the category of associative and unital 
algebras which satisfies the property that
\begin{flalign}\label{eqn:CommutatorEinsteinCausality}
\big[ \AAA(f_1)(-), \AAA(f_2)(-)\big]_c \,:\, \AAA(c_1)\otimes\AAA(c_2) \longrightarrow \AAA(c)
\end{flalign}
is the zero-map, for all $(f_1:c_1\to c ) \perp (f_2:c_2\to c)$. (Here $[\cdot,\cdot]_c = \mu_c - \mu_c^\op$ 
denotes the commutator on $\AAA(c)$.) This is equivalent to our description in Example \ref{ex:LCQFT},
i.e.\ 
\begin{flalign}\label{eqn:LCQFTviacommutator}
\FT\big(\ovr{\CC},\mathsf{As}^{([\cdot,\cdot],0)}\big) ~\cong~ \FT\big(\ovr{\CC}, \mathsf{As}^{(\mu,\mu^\op)}\big)
\quad.
\end{flalign}
This observation will be useful in Section \ref{sec:quantization} 
when we study the linear quantization adjunction.
\end{rem}

\begin{ex}[Classical field theories]\label{ex:ClassicalFT}
Let $\MM$ be additive. Consider the Poisson operad $\mathsf{Pois}\in \Op_{\{\ast\}}(\MM)$
from Example \ref{ex:Pois} and the two $\Seq_{\{\ast\}}(\MM)$-morphisms 
$\{\cdot,\cdot\},0 : I[2] \rightrightarrows U(\mathsf{Pois})$ 
which select the Poisson bracket and the zero-operation. A field theory
of type $\mathsf{Pois}^{(\{\cdot,\cdot\},0)}$ on $\ovr{\CC}$ 
is a functor $\AAA : \CC\to\Alg_{\mathsf{Pois}}$ to the category of Poisson 
algebras which satisfies the property that
\begin{flalign}\label{eqn:PoissonEinsteinCausality}
\big\{ \AAA(f_1)(-), \AAA(f_2)(-)\big\}_c \,:\, \AAA(c_1)\otimes\AAA(c_2) \longrightarrow \AAA(c)
\end{flalign}
is the zero-map, for all $(f_1:c_1\to c ) \perp (f_2:c_2\to c)$. (Here $\{\cdot,\cdot\}_c$ denotes the Poisson
bracket on $\AAA(c)$.) 
For $\ovr{\CC} = \ovr{\Loc}$, this is a classical analog of locally covariant quantum field theory, 
where one assigns to each spacetime a Poisson algebra of classical observables, see e.g.\
\cite{BrunettiFredenhagenRibeiro,FredenhagenRejzner,Collini,BeniniSchenkelPoisson}.
The property \eqref{eqn:PoissonEinsteinCausality} 
demands that the Poisson bracket between causally disjoint classical observables is zero, 
which captures the classical analog of the Einstein causality axiom. 
\end{ex}

\begin{ex}[Linear field theories]\label{ex:LinearFT}
In the usual construction of linear quantum field theories,
see e.g.\ \cite{Baer,BGproc,BeniniDappiaggiHack} for reviews, one first defines
a functor $\LLL : \Loc \to \mathbf{PSymp}$ to the category of presymplectic vector spaces,
which is then quantized by forming CCR-algebras (CCR stands for canonical commutation relations).
Recall that a presymplectic vector space $(V,\omega)$ is a pair consisting of 
a vector space $V$ and an antisymmetric linear map $\omega : V\otimes V\to\bbK$.
Notice that this is {\em not} an operation of arity $2$ in the sense of operads 
because the target is the ground field and not $V$. Hence,
$\mathbf{PSymp}$ is {\em not} the category of algebras over an operad and, as a consequence,
functors $\LLL : \Loc \to \mathbf{PSymp}$ do {\em not} define field theories in the sense 
of Definition \ref{def:fieldtheory}.
\sk

However, there exists a canonical upgrade of every functor $\LLL : \Loc \to \mathbf{PSymp}$ 
to a field theory in the sense of Definition \ref{def:fieldtheory}. 
Given any presymplectic vector space
$(V,\omega)$, one can define its {\em Heisenberg Lie algebra} $H(V,\omega)$.
The underlying vector space of $H(V,\omega)$ is given by $V \oplus \bbK$
and the Lie bracket $[-,-] : (V \oplus \bbK)\otimes (V \oplus \bbK)\to V \oplus \bbK$ is
\begin{flalign}
[v\oplus k, v^\prime\oplus k^\prime ] \,:=\, 0 \oplus \omega(v,v^\prime)\quad,
\end{flalign}
for all $v\oplus k, v^\prime\oplus k^\prime \in V\oplus \bbK$. There exists a canonical unit map 
$\eta : \bbK \to V\oplus \bbK\,,~k\mapsto 0\oplus k$, which is compatible
with the Lie bracket, i.e.\ $[v\oplus k, \eta(k^\prime)] =0$, 
for all $v\oplus k \in V\oplus \bbK $ and  $k^\prime\in\bbK$.
Hence, Heisenberg Lie algebras are algebras over the unital Lie operad
$\mathsf{uLie}\in \Op_{\{\ast\}}(\MM)$ given in  Example \ref{ex:OpuLie}.
Because forming Heisenberg Lie algebras is functorial,
we can define for every $\LLL : \Loc \to \mathbf{PSymp}$  the composite functor 
$H\,\LLL : \Loc \to \Alg_{\mathsf{uLie}}$.
\sk

Consider now the two $\Seq_{\{\ast\}}(\MM)$-morphisms
$[\cdot,\cdot],0 : I[2] \to U(\mathsf{uLie})$ which select the Lie bracket
and the zero-operation. A field theory of type
${\mathsf{uLie}}^{([\cdot,\cdot],0)}$ on $\ovr{\CC}$
is a functor $\AAA : \CC\to\Alg_{\mathsf{uLie}}$ to the category of unital Lie 
algebras which satisfies the property that
\begin{flalign}\label{eqn:LieEinsteinCausality}
\big [ \AAA(f_1)(-), \AAA(f_2)(-)\big ]_c \,:\, \AAA(c_1)\otimes\AAA(c_2) \longrightarrow \AAA(c)
\end{flalign}
is the zero-map, for all $(f_1:c_1\to c ) \perp (f_2:c_2\to c)$. (Here $[\cdot,\cdot]_c$ denotes the Lie
bracket on $\AAA(c)$.) This property is a suitable analog of the Einstein
causality axiom for linear field theories. In particular, if $\ovr{\CC} = \ovr{\Loc}$,
$\MM = \Vec_\bbK$ and $\AAA =H\,\LLL : \Loc \to \Alg_{\mathsf{uLie}}$ is given by applying
the Heisenberg Lie algebra construction to a functor $\LLL : \Loc \to\mathbf{PSymp}$ 
with values in presymplectic vector spaces, then \eqref{eqn:LieEinsteinCausality}
is equivalent to the property that the presymplectic structure of causally disjoint 
linear observables is zero. This is precisely the Einstein causality axiom for linear field 
theories, see e.g.\ \cite{Baer,BGproc,BeniniDappiaggiHack}. 
\end{ex}

\subsection{Operadic description}
In this section we show that the category
of field theories from Definition \ref{def:fieldtheory}
is the category of algebras over a suitable colored operad.
This generalizes previous results in \cite{BeniniSchenkelWoike}
and it is the key insight that allows us to study a large family of
universal constructions for field theories in Section \ref{sec:adjunctions}.
As a preparation for the relevant definition,
we define an auxiliary colored operad that describes 
functors from a small category $\CC$ to the category of 
$\P$-algebras. 
\begin{defi}\label{def:PCoperad}
Let $\CC$ be a small category with set of objects $\CC_0$  
and let $\P\in \Op_{\{\ast\}}(\MM)$ be a single-colored operad.
The $\CC_0$-colored operad $\P_\CC\in \Op_{\CC_0}(\MM)$ is defined
by the following data:
\begin{itemize}
\item for $n\geq 0$ and $(\und{c},t)\in \CC_0^{n+1}$, the object of operations is
\begin{flalign}\label{eqn:PCoperations}
\P_\CC\big(\substack{t \\ \und{c}}\big)\,:=\, \CC(\und{c},t) \otimes \P(n) \,\in\,\MM\quad,
\end{flalign}
where $\otimes$ is the $\Set$-tensoring \eqref{eqn:Settensoring} and
$\CC(\und{\cc},t):=\prod_{i=1}^n \CC(c_i,t)$ is the product of $\Hom$-sets;

\item for $n\geq 0$, $(\und{c},t)\in\CC_0^{n+1}$ and $\sigma \in\Sigma_n$, 
the permutation action $\P_\CC(\sigma)$ is defined by
\begin{flalign}
\xymatrix@C=4em{
\P_\CC\big(\substack{t \\ \und{c}}\big) \ar[r]^-{\P_\CC(\sigma)}&\P_\CC\big(\substack{t \\ \und{c}\sigma}\big) \\
\ar[u]^-{\iota_{\und{f}}}\P(n)\ar[r]_-{\P(\sigma) } & \P(n)\ar[u]_-{\iota_{\und{f}\sigma}}
}\quad
\end{flalign}
for all $\und{f} := (f_1,\dots,f_n)\in\CC(\und{c},t)$, where 
$\iota_{\und{f}} : \P(n)\to \P_\CC\big(\substack{t \\ \und{c}}\big) = \CC(\und{c},t) \otimes \P(n)$ 
are the inclusion morphisms into the coproduct (see \eqref{eqn:Settensoring})
and $\und{f}\sigma := (f_{\sigma(1)},\dots,f_{\sigma(n)})$;

\item for $n>0$, $k_1,\dots,k_n \geq 0$, $(\und{a},t)\in\CC_0^{n+1}$ 
and $(\und{b}_i,a_i)\in\CC_0^{k_i+1}$, for $i=1,\dots,n$, the operadic
composition $\gamma^{\P_\CC}$ is defined by
\begin{flalign}
\xymatrix@C=4em{
\P_\CC\big(\substack{t \\ \und{a}} \big) \otimes \bigotimes\limits_{i=1}^n \P_\CC\big(\substack{a_i \\ \und{b}_i} \big)
\ar[r]^-{\gamma^{\P_\CC}} &
\P_\CC\big(\substack{t \\ \und{b}}\big)\\
\ar[u]^-{\iota_{\und{f}}\,\otimes\, \Motimes_{i=1}^n \iota_{\und{g}_i}} \P(n)\otimes \bigotimes\limits_{i=1}^n \P(k_i) \ar[r]_-{\gamma^\P} & \P(k_1+\cdots + k_n)\ar[u]_-{\iota_{\und{f}(\und{g}_1,\dots,\und{g}_n)}}
}\quad
\end{flalign}
for all $\und{f} = (f_1,\dots,f_n)\in\CC(\und{a},t)$ and $\und{g}_i =(g_{i1},\dots,g_{i k_i})\in\CC(\und{b}_i,a_i)$, 
for $i=1,\dots,n$, where $\und{f}(\und{g}_1,\dots,\und{g}_n) := (f_1 \,g_{11},\dots ,f_{n} \, 
g_{n k_n} )\in \CC(\und{b},t)$ is defined by composition in the category $\CC$;

\item  for $c\in\CC_0$, the operadic unit $\oone^{\P_\CC}$ is
\begin{flalign}
\xymatrix@C=4em{
\ar[rd]_-{\oone^\P} I \ar[r]^-{\oone^{\P_\CC}} & \P_\CC\big(\substack{c \\ c}\big)\\
&\P(1)\ar[u]_-{\iota_{\id_c}}
}\quad
\end{flalign}
where $\id_c : c\to c$ is the identity morphism of $c$ in the category $\CC$.
\end{itemize}
A straightforward check shows that this data defines a colored operad
(see Definition \ref{def:operad}).
\end{defi}

\begin{lem}\label{lem:PCalgebras}
There exists a canonical isomorphism
\begin{flalign}
\Alg_{\P_\CC}~\cong~{\Alg_{\P}}^\CC
\end{flalign}
between the category of algebras over the colored operad
$\P_\CC \in \Op_{\CC_0}(\MM)$ from Definition \ref{def:PCoperad}
and the category of functors from $\CC$ to $\Alg_{\P}$.
\end{lem}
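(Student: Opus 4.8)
The plan is to exploit the fact that, since $\MM$ is cocomplete and $-\otimes X$ is a left adjoint for every $X$ (as $\MM$ is closed), the object of operations factors through a coproduct: by Definition \ref{def:PCoperad} and the $\Set$-tensoring \eqref{eqn:Settensoring} we have $\P_\CC\big(\substack{t\\\und{c}}\big) = \CC(\und{c},t)\otimes\P(n) = \coprod_{\und{f}\in\CC(\und{c},t)}\P(n)$, and $-\otimes A_{\und{c}}$ preserves this coproduct. Hence the structure map $\alpha : \P_\CC\big(\substack{t\\\und{c}}\big)\otimes A_{\und{c}}\to A_t$ of a $\P_\CC$-algebra $A$ is equivalent to a family of $\MM$-morphisms $\alpha_{\und{f}} : \P(n)\otimes A_{\und{c}}\to A_t$ indexed by $\und{f}=(f_1,\dots,f_n)\in\CC(\und{c},t)$. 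Everything reduces to rewriting the $\P_\CC$-algebra axioms in terms of these components.

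First I would construct a functor $\AAA$ from a $\P_\CC$-algebra $A$. On objects, equip $A_c$ with the $\P$-action given by the component $\alpha_{(\id_c,\dots,\id_c)} : \P(n)\otimes A_c^{\otimes n}\to A_c$; the $\P$-algebra axioms follow by restricting the operad axioms of $\P_\CC$ to tuples of identity morphisms, where the operadic composition of $\P_\CC$ collapses onto that of $\P$. On a $\CC$-morphism $f:c\to t$, set $\AAA(f):=\alpha_f\circ(\oone^\P\otimes\id):A_c\to A_t$, using the arity-$1$ component and the operadic unit of $\P$. Functoriality $\AAA(g\,f)=\AAA(g)\,\AAA(f)$ and $\AAA(\id_c)=\id$ then follow from the operadic composition and unit axioms of $\P_\CC$ together with composition in $\CC$, and the statement that each $\AAA(f)$ is a $\P$-algebra morphism is the compatibility between the arity-$1$ and arity-$n$ components, itself an instance of the associativity axiom for the $\P_\CC$-action.

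Conversely, from a functor $\AAA:\CC\to\Alg_\P$ I would build a $\P_\CC$-algebra by setting $A_c:=\AAA(c)$ and defining the components $\alpha_{\und{f}}:=\alpha^\P_t\circ\big(\id\otimes(\AAA(f_1)\otimes\cdots\otimes\AAA(f_n))\big)$, i.e.\ transport the inputs along the $f_i$ to the common target $t$ and then apply the $\P$-action there. Equivariance follows from equivariance of the $\P$-action together with the defining permutation action of $\P_\CC$; unitality follows from $\AAA(\id_c)=\id$ and unitality of the $\P$-action; and associativity follows from functoriality of $\AAA$, associativity of the $\P$-action, and the property that each $\AAA(f_i)$ is a $\P$-algebra morphism. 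Finally one checks compatibility with morphisms: a $\P_\CC$-algebra morphism is a family $\kappa_c:A_c\to B_c$, and its compatibility with the arity-$n$ identity components says each $\kappa_c$ is a $\P$-algebra morphism while its compatibility with the arity-$1$ components $\alpha_f$ says the family is natural in $c$, so such morphisms are precisely natural transformations of functors $\CC\to\Alg_\P$. Mutual inverseness is then immediate, since transporting a single input recovers $\AAA(f)$ and, conversely, $\alpha_{\und{f}}$ is forced once the $\AAA(f_i)$ and the $\P$-structure on $A_t$ are known.

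I expect the main obstacle to be the associativity verification in the backward direction. There the nested operadic composition of $\P_\CC$ reshuffles the indexing morphisms by composition in $\CC$ (the term $\und{f}(\und{g}_1,\dots,\und{g}_n)$ of Definition \ref{def:PCoperad}), and matching this to the defined action requires carefully commuting each transport map $\AAA(f_i)$ past the $\P$-action at $a_i$—which is exactly where the $\P$-algebra-morphism property of $\AAA(f_i)$ is used—and then collapsing $\AAA(f_i)\,\AAA(g_{ij})=\AAA(f_i\,g_{ij})$ by functoriality. The rest of the argument is a diagram chase that is routine once the coproduct decomposition of $\alpha$ is in place.
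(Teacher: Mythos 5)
Your proposal is correct and follows essentially the same route as the paper: both rest on the coproduct decomposition of $\P_\CC\big(\substack{t\\\und{c}}\big)\otimes A_{\und{c}}$ into components $\alpha_{\und{f}}$, identify the $\P$-actions via identity tuples and the functor action via arity-$1$ components composed with the operadic unit, and use the factorization $\alpha_{\und{f}}=\alpha_{\und{\id_t}^n}\circ\big(\id\otimes\Motimes_i\alpha_{f_i}(\oone^\P\otimes\id)\big)$ coming from $\und{f}=\und{\id_t}^n(f_1,\dots,f_n)$. The only difference is presentational: the paper establishes this factorization once and reads off the equivalence of axioms, whereas you package the same identities into two explicit mutually inverse constructions.
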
 
\begin{proof}
A $\P_\CC$-algebra is a family of
objects $A_c\in\MM$, for all $c\in\CC_0$, together with a $\P_\CC$-action
$\alpha : \P_\CC\big(\substack{t \\ \und{c}}\big)\otimes A_{\und{c}}\to A_t$.
Because \eqref{eqn:PCoperations} is a coproduct, this is
equivalent to a family of $\MM$-morphisms
$\alpha_{\und{f}} : \P(n) \otimes A_{\und{c}} \to A_t$,
for all $n\geq 0$, $(\und{c},t)\in\CC_0^{n+1}$ and $\und{f}\in\CC(\und{c},t)$,
which satisfies the following compatibility conditions 
resulting from the axioms for algebras over colored operads
\begin{subequations}\label{eqn:PCaction}
\begin{flalign}\label{eqn:PCaction1}
\xymatrix@C=4em{
\ar[rr]^-{\gamma^\P\otimes\id} 
\Big(\P(n)\otimes\bigotimes\limits_{i=1}^n \P(k_i)\Big)\otimes A_{\und{b}} 
\ar[d]_-{\text{permute}}^-{\cong}
& 
&  \P(k_1+\cdots+k_n) \otimes A_{\und{b}}
\ar[d]^-{\alpha_{\und{f}(\und{g}_1,\dots,\und{g}_n)}}
\\
\P(n)\otimes\bigotimes\limits_{i=1}^n \Big( \P(k_i)\otimes A_{\und{b}_i}\Big)  
\ar[r]_-{\id\otimes \Motimes_i \alpha_{\und{g}_i}}
&\P(n) \otimes A_{\und{a}}
\ar[r]_-{\alpha_{\und{f}}}
&A_t
}
\end{flalign}
\begin{flalign}\label{eqn:PCaction2}
\xymatrix@C=3.5em{
\ar[dr]_-{\cong} I\otimes A_c \ar[r]^-{\oone^\P\otimes\id}  &\P(1)\otimes A_c\ar[d]^-{\alpha_{\id_c}} & &
\ar[d]_-{\P(\sigma) \otimes \,\text{permute}}\P(n) \otimes A_{\und{c}} \ar[r]^-{\alpha_{\und{f}}}& A_t\\
& A_c & & \P(n) \otimes A_{\und{c}\sigma}  \ar[ru]_-{~~\alpha_{\und{f}\sigma}} &
}
\end{flalign}
\end{subequations}
Using that any $\und{f} = (f_1,\dots,f_n)\in\CC(\und{c},t)$ can be written as
$\und{f} = \und{\id_t}^n(f_1,\dots,f_n)$, where $\und{\id_t}^n = (\id_t,\dots,\id_t)$ is of length $n$,
the diagram \eqref{eqn:PCaction1} implies that $\alpha_{\und{f}}$ factorizes as
\begin{flalign}
\xymatrix@C=3.5em{
\ar[d]_-{\id\otimes\Motimes_i (\oone^\P\otimes\id )}\P(n)\otimes\bigotimes\limits_{i=1}^n \Big(I \otimes A_{c_i}\Big)~\cong~\P(n) \otimes A_{\und{c}}\ar[r]^-{\alpha_{\und{f}}} & A_t\\
\P(n)\otimes\bigotimes\limits_{i=1}^n \Big(\P(1)\otimes A_{c_i}\Big) \ar[r]_-{\id\otimes\Motimes_i \alpha_{f_i}}& \P(n)\otimes A_t^{\otimes n} \ar[u]_-{\alpha_{\und{\id_t}^n}}
} 
\end{flalign}
Hence, the $\P_\CC$-action $\alpha$ is uniquely specified by
the following two types of $\MM$-morphisms: (1)~$\widetilde{\alpha}_t := \alpha_{\und{\id_t}^n}:
\P(n)\otimes A_t^{\otimes n} \to A_t$, for all $t\in \CC_0$ and $n\geq 0$, and 
(2)~$A(f) := \alpha_f~(\oone^\P\otimes \id) :  A_c \,\cong\, I\otimes A_c \to A_t$,
for all $\CC$-morphisms $f:c\to t$. The remaining conditions in \eqref{eqn:PCaction}
are equivalent to $\widetilde{\alpha}_t $ defining a $\P$-action on $A_t$, for all $t\in\CC_0$, 
and $A(f) : A_c\to A_t$ defining a functor $\CC\to \Alg_\P$ to $\P$-algebras.
From this perspective, $\P_\CC$-algebra morphisms correspond precisely
to natural transformations between functors from $\CC$ to $\Alg_\P$.
\end{proof}

For the rest of this subsection we fix an orthogonal category $\ovr{\CC}=(\CC,\perp)$
and a bipointed single-colored operad $\P^{(r_1,r_2)}= (\P, r_1,r_2 : I[2] \rightrightarrows U(\P) )$.
(Recall the definition of $I[2]$ in \eqref{eqn:I2}.)
We define a $\CC_0$-colored sequence $R_\perp \in \Seq_{\CC_0}(\MM)$ 
by setting 
\begin{flalign}
R_\perp \big(\substack{t \\ \und{c}}\big) \,:=\, \begin{cases}
\big({\perp} \, {\cap}\, \CC(\und{c},t)\big)\otimes I & ~,~~\text{for }n=2~,\\
\emptyset & ~,~~\text{else }, 
\end{cases}
\end{flalign}
for all $n\geq 0$ and $(\und{c},t)\in\CC_0^{n+1}$, and a parallel pair
of $\Seq_{\CC_0}(\MM)$-morphisms 
\begin{subequations}\label{eqn:Rperprelations}
\begin{flalign}
\xymatrix@C=2.5em{
r_{1,\CC},r_{2,\CC} \,:\, R_\perp \ar@<-0.5ex>[r]\ar@<0.5ex>[r]\,&\, U(\P_\CC)
}
\end{flalign} 
by setting, for $i=1,2$,
\begin{flalign}
\xymatrix@C=4em{
R_\perp \big(\substack{t \\ (c_1,c_2) }\big) \ar[r]^-{r_{i,\CC}} & \P_\CC\big(\substack{t \\ (c_1,c_2) }\big)\\
\ar[u]^-{\iota_{(f_1,f_2)}} I \ar[r]_-{r_i} & \P(2)\ar[u]_-{\iota_{(f_1,f_2)}}
}
\end{flalign}
\end{subequations}
for all $(f_1:c_1\to t, f_2: c_2\to t)\in \perp$.
\begin{defi}\label{def:FToperad}
The {\em operad of field theories of type $\P^{(r_1,r_2)}$ on $\ovr{\CC}$}
is defined as the coequalizer
\begin{flalign}\label{eqn:FToperad}
\xymatrix@C=2.5em{ 
F(R_\perp) \ar@<-0.5ex>[r]_-{r_{2,\CC}}\ar@<0.5ex>[r]^-{r_{1,\CC}}  ~&~\P_\CC \ar@{-->}[r]  ~&~  
\P^{(r_1,r_2)}_{\ovr{\CC}} 
}
\end{flalign}
in $\Op_{\CC_0}(\MM)$.
\end{defi}

The importance of this operad is evidenced by the following theorem. 
\begin{theo}\label{theo:FTcatiso}
There exists a canonical isomorphism
\begin{flalign}
\Alg_{\P^{(r_1,r_2)}_{\ovr{\CC}}} ~\cong~ \FT\big(\ovr{\CC},\P^{(r_1,r_2)}\big)
\end{flalign}
between the category of algebras over the colored operad
$\P^{(r_1,r_2)}_{\ovr{\CC}} \in \Op_{\CC_0}(\MM)$ from Definition \ref{def:FToperad}
and the category of field theories of type $\P^{(r_1,r_2)}$ on $\ovr{\CC}$ from Definition \ref{def:fieldtheory}.
\end{theo}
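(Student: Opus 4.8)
The plan is to realize both sides as the same full subcategory of $\Alg_\P^\CC \cong \Alg_{\P_\CC}$, using Lemma \ref{lem:PCalgebras} and the fact that, by Definition \ref{def:FToperad}, the operad $\P^{(r_1,r_2)}_{\ovr{\CC}}$ is the quotient of $\P_\CC$ forcing the two relations $r_{1,\CC}, r_{2,\CC}$ to agree. I would first treat objects. Fix a family $A = (A_c)_{c\in\CC_0}$ in $\MM$ and let $\End(A)$ be its $\CC_0$-colored endomorphism operad, so that $\O$-algebra structures on $A$ correspond to operad morphisms $\O \to \End(A)$. Applying this to $\O = \P^{(r_1,r_2)}_{\ovr{\CC}}$ and invoking the universal property of the coequalizer \eqref{eqn:FToperad}, a $\P^{(r_1,r_2)}_{\ovr{\CC}}$-algebra structure on $A$ is the same datum as a $\P_\CC$-algebra structure $h : \P_\CC \to \End(A)$ satisfying $h\,r_{1,\CC} = h\,r_{2,\CC}$ as morphisms $F(R_\perp) \rightrightarrows \End(A)$. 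By the free--forget adjunction \eqref{eqn:freeforgetOp}, this equality is equivalent to the equality of the transposed sequence morphisms $R_\perp \rightrightarrows U\End(A)$, hence may be tested on the generators: it holds precisely when, for every orthogonal pair $(f_1:c_1\to t)\perp(f_2:c_2\to t)$, the two operations $h(\iota_{(f_1,f_2)}(r_1))$ and $h(\iota_{(f_1,f_2)}(r_2))$ in $\End(A)\big(\substack{t\\(c_1,c_2)}\big)$ coincide. This use of \eqref{eqn:freeforgetOp} is what lets me ignore the higher-arity operations of $F(R_\perp)$.

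Next I would make this condition explicit. Under Lemma \ref{lem:PCalgebras} the datum $h$ is a functor $\AAA : \CC \to \Alg_\P$, and the factorization established in the proof of that lemma shows that $h(\iota_{(f_1,f_2)}(r_i))$ is exactly the composite $\alpha^\P_t \circ (r_i \otimes \AAA(f_1)\otimes\AAA(f_2)) : A_{c_1}\otimes A_{c_2} \to A_t$. Therefore the generator-wise equality just derived is literally the commutativity of the diagram \eqref{eqn:FTproperty} for every orthogonal pair, i.e.\ the defining property of a field theory of type $\P^{(r_1,r_2)}$ on $\ovr{\CC}$ (Definition \ref{def:fieldtheory}). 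This yields a bijection, natural in $A$, between $\P^{(r_1,r_2)}_{\ovr{\CC}}$-algebra structures on $A$ and field theory structures on the associated functor $\AAA$, so the two categories have the same objects inside $\Alg_\P^\CC$.

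Finally I would upgrade this object bijection to an isomorphism of categories by checking that neither side imposes conditions on morphisms beyond those of $\Alg_{\P_\CC}$. The property \eqref{eqn:FTproperty} is a condition on objects only, so $\FT(\ovr{\CC},\P^{(r_1,r_2)})$ is the full subcategory of $\Alg_\P^\CC$ on such objects. On the other side, the $\P^{(r_1,r_2)}_{\ovr{\CC}}$-action is obtained from the $\P_\CC$-action by restriction along the coequalizing morphism $q : \P_\CC \to \P^{(r_1,r_2)}_{\ovr{\CC}}$, so a family $\kappa=(\kappa_c)$ intertwines the $\P^{(r_1,r_2)}_{\ovr{\CC}}$-actions as soon as it intertwines the $\P_\CC$-actions, i.e.\ is a natural transformation of the underlying functors. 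I expect this last step to be the main point requiring care: it hinges on the coequalizer map $q$ being a levelwise epimorphism in $\MM$ (so that, since $(-)\otimes A_{\und{c}}$ preserves epimorphisms in the closed monoidal category $\MM$, compatibility with all $\P_\CC$-operations propagates to all of $\P^{(r_1,r_2)}_{\ovr{\CC}}$). This levelwise surjectivity is not automatic, since the forgetful functor $U$ does not preserve colimits, and I would justify it by noting that the defining relation pair is reflexive and that $U$ preserves the resulting reflexive coequalizer, exactly as in the analogous treatment of \cite{BeniniSchenkelWoike}. Granting this, both $\Alg_{\P^{(r_1,r_2)}_{\ovr{\CC}}}$ and $\FT(\ovr{\CC},\P^{(r_1,r_2)})$ are the same full subcategory of $\Alg_\P^\CC$, and the object bijection is the asserted isomorphism of categories.
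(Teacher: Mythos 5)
Your proposal follows the same route as the paper's own (much terser) proof: both identify algebras over the coequalizer \eqref{eqn:FToperad} with those $\P_\CC$-algebras whose structure map coequalizes $r_{1,\CC},r_{2,\CC}$, and then use Lemma \ref{lem:PCalgebras} and the factorization $\alpha_{(f_1,f_2)}=\widetilde{\alpha}_t\circ(\id\otimes A(f_1)\otimes A(f_2))$ from its proof to recognize this condition as the commutativity of \eqref{eqn:FTproperty}. Your object-level argument (endomorphism operad, universal property of the coequalizer, reduction to generators through the adjunction \eqref{eqn:freeforgetOp}) is correct and simply makes explicit what the paper asserts in a single sentence, and your separate treatment of morphisms addresses a point the paper passes over in silence.

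The one step you yourself flag as delicate is, however, justified incorrectly: the pair $r_{1,\CC},r_{2,\CC}\colon F(R_\perp)\rightrightarrows \P_\CC$ is \emph{not} reflexive. Since $R_\perp$ is concentrated in arity $2$, the arity-$1$ part of the free operad $F(R_\perp)$ consists only of operadic units, i.e.\ $F(R_\perp)\big(\substack{c\\ c}\big)\cong I$ and $F(R_\perp)\big(\substack{t\\ c}\big)$ is initial for $c\neq t$, whereas $\P_\CC\big(\substack{t\\ c}\big)=\CC(c,t)\otimes\P(1)$; a common section $s$ with $r_{i,\CC}\,s=\id$ would thus have to factor the identity of $\CC(c,t)\otimes\P(1)$ through an initial object, which is impossible in general. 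There are two standard repairs. (i)~Replace the pair by its reflexivization $[r_{1,\CC},\id],[r_{2,\CC},\id]\colon F(R_\perp)\sqcup \P_\CC\rightrightarrows\P_\CC$ (coproduct in $\Op_{\CC_0}(\MM)$, with common section the coproduct inclusion), which has the same coequalizer; the monadicity of operads over sequences and the fact that the free operad monad preserves reflexive coequalizers, as in \cite{BeniniSchenkelWoike}, then yield the levelwise epimorphy you need. (ii)~Cleaner: avoid epimorphy altogether. For a family $\kappa\colon A\to B$ form the sequence $\Hom(A,B)\big(\substack{t\\ \und{c}}\big):=[A_{\und{c}},B_t]$ and the pullback operad $\End(\kappa):=\End(A)\times_{\Hom(A,B)}\End(B)$, built from post- and pre-composition with $\kappa$; then $\kappa$ is a morphism of $\O$-algebras precisely when the two structure maps lift to an operad map $\O\to\End(\kappa)$. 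Because the projections of a pullback are jointly monic, a map $\P_\CC\to\End(\kappa)$ coequalizes $(r_{1,\CC},r_{2,\CC})$ as soon as its two projections do, which is automatic when $A$ and $B$ carry algebra structures over the quotient operad; hence every $\P_\CC$-algebra morphism between such algebras descends, with no epimorphism argument needed. With either repair your proof is complete and agrees with the paper's.
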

\begin{proof}
Because $\P^{(r_1,r_2)}_{\ovr{\CC}}$ is defined as a coequalizer \eqref{eqn:FToperad},
its algebras are precisely those $\P_\CC$-algebras $A\in\Alg_{\P_\CC}$ that satisfy the relations encoded by 
$r_{1,\CC},r_{2,\CC} : R_\perp \rightrightarrows U(\P_\CC)$ (see \eqref{eqn:Rperprelations}). 
Using the notations from the proof of Lemma \ref{lem:PCalgebras}, this concretely
means that the diagram
\begin{flalign}
\xymatrix@C=3.5em{
\ar[d]_-{r_2\otimes\id\otimes\id}I\otimes A_{c_1}\otimes A_{c_2} \ar[r]^-{r_1\otimes \id\otimes\id} & \P(2)\otimes A_{c_1}\otimes A_{c_2} \ar[d]^-{\alpha_{(f_1,f_2)}}\\
 \P(2)\otimes A_{c_1}\otimes A_{c_2} \ar[r]_-{\alpha_{(f_1,f_2)}}& A_t
}
\end{flalign}
in $\MM$ commutes, for all $(f_1: c_1\to t, f_2: c_2\to t)\in\perp$. Using the isomorphism
of Lemma  \ref{lem:PCalgebras}, one easily translates this diagram
to the diagram \eqref{eqn:FTproperty} for the functor $\AAA : \CC\to\Alg_\P$ 
corresponding to $A\in\Alg_{\P_\CC}$, which completes the proof.
\end{proof}

\begin{ex}
Recalling Examples \ref{ex:LCQFT}, \ref{ex:ClassicalFT} and \ref{ex:LinearFT},
our construction defines colored operads for quantum field theory $\mathsf{As}^{(\mu,\mu^\op)}_{\ovr{\CC}}$
(or equivalently $\mathsf{As}^{([\cdot,\cdot],0)}_{\ovr{\CC}}$ provided that $\MM$ is additive, 
see Remark \ref{rem:LCQFTviacommutator}),
for classical field theory $\mathsf{Pois}^{(\{\cdot,\cdot\},0)}_{\ovr{\CC}}$ and for linear field theory
$\mathsf{uLie}^{([\cdot,\cdot],0)}_{\ovr{\CC}}$ formalized in terms of Heisenberg Lie algebras.
\end{ex}

\subsection{Functoriality}
Note that the field theory operad $\P^{(r_1,r_2)}_{\ovr{\CC}}\in \Op_{\CC_0}(\MM)$ 
from Definition \ref{def:FToperad} depends on the choice of two kinds of data:
(1) An orthogonal category $\ovr{\CC} = (\CC,\perp)$ and (2) a bipointed single-colored operad
$\P^{(r_1,r_2)}= (\P, r_1,r_2 : I[2] \rightrightarrows U(\P) )$.
We will see that both of these dependencies are functorial.
Recall from Definition \ref{def:orthogonalcat} that orthogonal categories
are the objects of the category $\OCat$. The second kind of data
may be arranged in terms of a category as follows.
\begin{defi}\label{def:bipointed}
The {\em category of (arity $2$) bipointed single-colored operads} $\Op_{\{\ast\}}^{2\mathrm{pt}}(\MM)$
has the following objects and morphisms: An object is a pair 
$\P^{(r_1,r_2)}= (\P, r_1,r_2 : I[2] \rightrightarrows U(\P) )$ 
consisting of a single-colored operad $\P\in \Op_{\{\ast\}}(\MM)$ and a parallel pair of
$\Seq_{\{\ast\}}(\MM)$-morphisms $r_1,r_2 : I[2] \rightrightarrows U(\P)$ (see \eqref{eqn:I2} for the definition
of $I[2]$). A morphism $\phi : \P^{(r_1,r_2)}\to \Q^{(s_1,s_2)}$ is an
$\Op_{\{\ast\}}(\MM)$-morphism $\phi : \P\to\Q$ that preserves the points, i.e.\
the diagram
\begin{flalign}
\xymatrix@C=4em{
\ar@{=}[d] I[2] \ar[r]^-{r_i} & U(\P) \ar[d]^-{U(\phi)}\\
I[2] \ar[r]_-{s_i} & U(\Q)
}
\end{flalign}
in $\Seq_{\{\ast\}}(\MM)$ commutes for $i=1,2$.
\end{defi}

\begin{propo}\label{prop:FToperadfunctoriality}
The assignment $(\ovr{\CC},\P^{(r_1,r_2)}) \longmapsto (\CC_0, \P^{(r_1,r_2)}_{\ovr{\CC}})$
of the field theory operads from Definition \ref{def:FToperad}
naturally extends to a functor $\OCat\times \Op_{\{\ast\}}^{2\mathrm{pt}}(\MM) \to 
\Op(\MM)$ with values in the category of operads with varying colors (see Definition \ref{def:varycolorOp}).
\end{propo}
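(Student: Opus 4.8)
The plan is to factor the construction through the auxiliary operad $\P_\CC$ of Definition \ref{def:PCoperad} and only afterwards descend along the defining coequalizer \eqref{eqn:FToperad}. Concretely, I would first upgrade the assignment $(\CC,\P)\mapsto(\CC_0,\P_\CC)$ to a functor $\Cat\times\Op_{\{\ast\}}(\MM)\to\Op(\MM)$, and then impose the orthogonality relation $\perp$ together with the bipointing $(r_1,r_2)$ to pass to $\P^{(r_1,r_2)}_{\ovr{\CC}}$. A morphism in the product category is a pair $(F,\phi)$ consisting of an orthogonal functor $F:\ovr{\CC}\to\ovr{\DD}$ and a morphism $\phi:\P^{(r_1,r_2)}\to\Q^{(s_1,s_2)}$ of bipointed operads, and the color map is taken to be $F_0:\CC_0\to\DD_0$.

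On morphisms, using the $\Set$-tensoring description \eqref{eqn:PCoperations} together with $F_0^\ast(\Q_\DD)\big(\substack{t\\\und{c}}\big)=\DD(F\und{c},Ft)\otimes\Q(n)$, I would define an $\Op_{\CC_0}(\MM)$-morphism $\Phi:\P_\CC\to F_0^\ast(\Q_\DD)$ on each coproduct summand by
\begin{flalign}
\Phi\circ\iota_{\und{f}}\,=\,\iota_{F\und{f}}\circ\phi\,:\,\P(n)\longrightarrow \DD(F\und{c},Ft)\otimes\Q(n)\quad,
\end{flalign}
where $F\und{f}:=(Ff_1,\dots,Ff_n)$; informally $\Phi=F\otimes\phi$. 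Checking that $\Phi$ is a morphism of $\CC_0$-colored operads is a routine diagram chase against Definition \ref{def:PCoperad}: equivariance uses $\phi\,\P(\sigma)=\Q(\sigma)\,\phi$ and $F(\und{f}\sigma)=(F\und{f})\sigma$; compatibility with $\gamma$ uses that $\phi$ is an operad morphism together with functoriality $F(\und{f}(\und{g}_1,\dots,\und{g}_n))=F\und{f}(F\und{g}_1,\dots,F\und{g}_n)$; and the unit axiom uses $\phi\,\oone^\P=\oone^\Q$ and $F(\id_c)=\id_{Fc}$. This establishes functoriality of $(\CC,\P)\mapsto(\CC_0,\P_\CC)$, which does not yet use $\perp$ or the bipointing.

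The main step is to descend $\Phi$ along the two coequalizers. Writing $\pi_{\ovr{\CC}}:\P_\CC\to\P^{(r_1,r_2)}_{\ovr{\CC}}$ and $\pi_{\ovr{\DD}}:\Q_\DD\to\Q^{(s_1,s_2)}_{\ovr{\DD}}$ for the projections of \eqref{eqn:FToperad}, I would show that $F_0^\ast(\pi_{\ovr{\DD}})\circ\Phi:\P_\CC\to F_0^\ast(\Q^{(s_1,s_2)}_{\ovr{\DD}})$ coequalizes $r_{1,\CC}$ and $r_{2,\CC}$. By the free--forgetful adjunction \eqref{eqn:freeforgetOp} it suffices to test this on the generating sequence $R_\perp$, i.e. on a summand $\iota_{(f_1,f_2)}:I\to\P_\CC\big(\substack{t\\(c_1,c_2)}\big)$ indexed by an orthogonal pair $(f_1:c_1\to t)\perp(f_2:c_2\to t)$. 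There one computes $\Phi\circ\iota_{(f_1,f_2)}\circ r_i=\iota_{(Ff_1,Ff_2)}\circ\phi\circ r_i=\iota_{(Ff_1,Ff_2)}\circ s_i$, where the last equality is exactly the bipointing-preservation of $\phi$. Since $F$ preserves orthogonality we have $(Ff_1,Ff_2)\in\perp_\DD$, so for $i=1,2$ these are precisely the images of the defining relations of $\Q^{(s_1,s_2)}_{\ovr{\DD}}$ at the summand $(Ff_1,Ff_2)$, hence are identified by $\pi_{\ovr{\DD}}$. The universal property of $\pi_{\ovr{\CC}}$ then yields a unique $\overline{\Phi}:\P^{(r_1,r_2)}_{\ovr{\CC}}\to F_0^\ast(\Q^{(s_1,s_2)}_{\ovr{\DD}})$ with $\overline{\Phi}\,\pi_{\ovr{\CC}}=F_0^\ast(\pi_{\ovr{\DD}})\,\Phi$, and I would set $(F,\phi)\mapsto(F_0,\overline{\Phi})$. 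Note I only apply $F_0^\ast$ to the single morphism $\pi_{\ovr{\DD}}$, so no preservation of colimits by $F_0^\ast$ is required.

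Finally, preservation of identities and composition follows from the corresponding properties of $\Phi$ already verified at the $\P_\CC$ level, the functoriality $(G_0F_0)^\ast=F_0^\ast G_0^\ast$ of the pullback (which governs composition in $\Op(\MM)$, Definition \ref{def:varycolorOp}), and the uniqueness clause of the coequalizer universal property, since two morphisms out of $\P^{(r_1,r_2)}_{\ovr{\CC}}$ that agree after precomposition with $\pi_{\ovr{\CC}}$ must coincide. The only genuinely load-bearing step is the descent above; it is there, and only there, that the two extra conditions carried by morphisms of $\OCat\times\Op_{\{\ast\}}^{2\mathrm{pt}}(\MM)$ — orthogonality preservation by $F$ and bipointing preservation by $\phi$ — enter, which explains why these are exactly the right morphisms to render the assignment functorial.
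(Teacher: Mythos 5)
Your proposal is correct and follows essentially the same route as the paper: both arguments first construct the morphism $\P_\CC\to F_0^\ast(\Q_\DD)$ on the auxiliary operads (your $\Phi$ is the paper's $\phi_F^{}$) and then descend along the defining coequalizer \eqref{eqn:FToperad}, with orthogonality-preservation of $F$ and point-preservation of $\phi$ entering exactly at the descent step. The only difference is bookkeeping: the paper descends via a morphism of parallel pairs, functoriality of colimits, and the canonical comparison map into the pullback operad, whereas you verify the coequalizing condition directly on the generating sequence $R_\perp$ and invoke the universal property once, which yields the same morphism.
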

\begin{proof}
For every morphism $(F,\phi) : (\ovr{\CC},\P^{(r_1,r_2)})\to (\ovr{\DD},\Q^{(s_1,s_2)})$ 
in $\OCat\times \Op_{\{\ast\}}^{2\mathrm{pt}}(\MM)$ one can define an $\Op(\MM)$-morphism
$\phi_F^{} : \P_\CC\to \Q_\DD$ between the corresponding auxiliary operads
from Definition \ref{def:PCoperad}. Concretely, this morphism
is specified by the components
\begin{flalign}
\xymatrix@C=3.5em{
\P_\CC\big(\substack{t \\ \und{c}}\big) \ar[r]^-{\phi_F^{}} & \Q_\DD\big(\substack{F(t) \\ F(\und{c})}\big)\\
\ar[u]^-{\iota_{\und{f}}} \P(n) \ar[r]_-{\phi}& \Q(n) \ar[u]_-{\iota_{F(\und{f})}}
}
\end{flalign}
We now show that the assignment of the field theory operads is functorial too. 
For this we first note that one can define analogously 
to above a morphism $R_{\perp_\CC} \to R_{\perp_\DD}$ of colored sequences
and one easily checks that this defines a morphism of parallel pairs in \eqref{eqn:Rperprelations}.
(For this step one uses that $F$ is an orthogonal functor and that $\phi$ preserves the points.)
Because forming colimits is functorial, this defines an $\Op_{\CC_0}(\MM)$-morphism 
$\P^{(r_1,r_2)}_{\ovr{\CC}}\to \colim\big(F^\ast(F(R_{\perp_{\DD}})) \rightrightarrows F^\ast(\Q_\DD)\big)$
to the coequalizer of the corresponding pullback operads. 
(With an abuse of notation, we denoted by $F$ both the free 
$\DD_0$-colored operad functor \eqref{eqn:freeforgetOp} and 
the orthogonal functor $F :\ovr{\CC}\to\ovr{\DD}$.)
Notice that pullback operads arise at this point because 
Definition \ref{def:FToperad} considers colimits in the categories of operads
with a {\em fixed} set of colors and not in the category $\Op(\MM)$.
From the universal property of colimits one obtains a canonical $\Op_{\CC_0}(\MM)$-morphism 
$\colim\big(F^\ast(F(R_{\perp_{\DD}})) \rightrightarrows F^\ast(\Q_\DD)\big) \to F^\ast(\Q^{(s_1,s_2)}_{\ovr{\DD}})$
to the pullback of field theory operad. The composition of the latter two morphisms
defines our desired $\Op(\MM)$-morphism, which we denote with abuse of notation
by the same symbol $\phi_F^{} : \P^{(r_1,r_2)}_{\ovr{\CC}}\to \Q^{(s_1,s_2)}_{\ovr{\DD}}$
as the one for the auxiliary operads.
\end{proof}

As a consequence of this proposition,
we obtain for every morphism 
$(F,\phi) : (\ovr{\CC},\P^{(r_1,r_2)})\to (\ovr{\DD},\Q^{(s_1,s_2)})$ in
$\OCat\times \Op_{\{\ast\}}^{2\mathrm{pt}}(\MM)$  an $\Op(\MM)$-morphism 
$\phi_F^{} : \P^{(r_1,r_2)}_{\ovr{\CC}} \to \Q^{(s_1,s_2)}_{\ovr{\DD}}$ and hence
by Theorems \ref{theo:adjunction} and \ref{theo:FTcatiso} an adjunction
\begin{flalign}\label{eqn:FTadjunction}
\xymatrix{
(\phi_F^{})_!^{} \,:\, \FT\big(\ovr{\CC},\P^{(r_1,r_2)}\big) ~\ar@<0.5ex>[r]&\ar@<0.5ex>[l]  ~\FT\big(\ovr{\DD},\Q^{(s_1,s_2)}\big) \,:\,
(\phi_F^{})_{}^\ast
}
\end{flalign}
between the corresponding categories of field theories. 
From the concrete definition of $\phi_F^{}$ given in the proof of
Proposition \ref{prop:FToperadfunctoriality} and the identification in Theorem \ref{theo:FTcatiso}, one observes that
the right adjoint $(\phi_F^{})^\ast$ admits a very explicit
description in terms of either of the two compositions in the commutative diagram
\begin{flalign}\label{eqn:FTadjunctionexplicit}
\xymatrix@C=4em{
  \ar[d]_-{(\phi^\ast)_\ast} \FT\big(\ovr{\CC},\Q^{(s_1,s_2)}\big) ~&~ \ar[l]_-{F^\ast} \FT\big(\ovr{\DD},\Q^{(s_1,s_2)}\big) 
  \ar[d]^-{(\phi^\ast)_\ast} \ar[dl]_-{(\phi_F^{})^\ast}   \\
 \FT\big(\ovr{\CC},\P^{(r_1,r_2)}\big)~&~  \ar[l]^-{F^\ast}\FT\big(\ovr{\DD},\P^{(r_1,r_2)}\big) .
 }
\end{flalign}
In this diagram $F^\ast$ is the restriction to the categories of field theories
of the pullback functor for functor categories
\begin{flalign}
F^\ast := (-)\circ F \,:\,  {\Alg_\O}^\DD ~\longrightarrow~ {\Alg_\O}^\CC\quad,
\end{flalign}
for $\O=\P$ and $\O=\Q$,  and $(\phi^\ast)_\ast$ is the restriction
to the categories of field theories of the pushforward functor for functor categories
\begin{flalign}
(\phi^\ast)_\ast := \phi^\ast \circ (-) \,:\,  {\Alg_{\Q}}^\EE~\longrightarrow {\Alg_{\P}}^\EE\quad,
\end{flalign}
for $\EE=\CC$ and $\EE=\DD$, where $\phi^\ast : \Alg_{\Q}\to \Alg_{\P}$
is the pullback functor corresponding to the single-colored operad morphism $\phi : \P\to\Q$.

%%%%%%%%%%%%%%%%%%%%%%%%%%%%%%%%%%%%%%%%%%%%%%%%
%%%%%%%%%%%%%%%%%%%%%%%%%%%%%%%%%%%%%%%%%%%%%%%%

\section{\label{sec:adjunctions}Universal constructions for field theories}
This section is concerned with analyzing in more depth the adjunctions in \eqref{eqn:FTadjunction}
and their relevance for universal constructions in field theory.
Because of \eqref{eqn:FTadjunctionexplicit}, this problem may be decomposed into
three smaller building blocks:
\begin{enumerate}
\item adjunctions induced by orthogonal functors $F : \ovr{\CC} \to \ovr{\DD}$
\begin{flalign}
\xymatrix{
F_! \,:\, \FT\big(\ovr{\CC},\P^{(r_1,r_2)}\big) ~\ar@<0.5ex>[r]&\ar@<0.5ex>[l]  ~\FT\big(\ovr{\DD},\P^{(r_1,r_2)}\big) \,:\,F^\ast
}
\end{flalign} 
\item adjunctions induced by $\Op_{\{\ast\}}^{2\mathrm{pt}}(\MM)$-morphisms $\phi : \P^{(r_1,r_2)}\to \Q^{(s_1,s_2)}$
\begin{flalign}
\xymatrix{
(\phi^\ast)^! \,:\, \FT\big(\ovr{\CC},\P^{(r_1,r_2)}\big) ~\ar@<0.5ex>[r]&\ar@<0.5ex>[l]  ~\FT\big(\ovr{\CC},\Q^{(s_1,s_2)}\big) \,:\,(\phi^\ast)_\ast
}
\end{flalign}
\item the interplay between these two cases via the diagram of categories and functors
\begin{flalign}\label{eqn:generalsquareadjunctions}
\xymatrix@C=4em{
  \ar@<0.5ex>[d]^-{(\phi^\ast)_\ast} \FT\big(\ovr{\CC},\Q^{(s_1,s_2)}\big) \ar@<0.5ex>[r]^-{F_!}~&~ \ar@<0.5ex>[l]^-{F^\ast} \FT\big(\ovr{\DD},\Q^{(s_1,s_2)}\big) \ar@<0.5ex>[d]^-{(\phi^\ast)_\ast}   \\
\ar@<0.5ex>[u]^-{(\phi^\ast)^!} \FT\big(\ovr{\CC},\P^{(r_1,r_2)}\big) \ar@<0.5ex>[r]^-{F_!}~&~  \ar@<0.5ex>[l]^-{F^\ast}\FT\big(\ovr{\DD},\P^{(r_1,r_2)}\big) \ar@<0.5ex>[u]^-{(\phi^\ast)^!}
 }
\end{flalign}
in which the square formed by the right adjoints commutes by \eqref{eqn:FTadjunctionexplicit}
and, as a consequence of the uniqueness (up to a unique natural isomorphism) of left adjoint functors, 
the square formed by the left adjoints commutes up to a unique natural isomorphism.
\end{enumerate}

In the following subsections we study particular classes of examples of such adjunctions, 
all of which are motivated by concrete problems and constructions in field theory, and discuss their
interplay. A particularly interesting example, which we will discuss later in Section \ref{sec:quantization}, 
is given by an adjunction that describes the quantization of linear field theories.

\subsection{Full orthogonal subcategories}
Recall from \cite{BeniniSchenkelWoike}
that a {\em full orthogonal subcategory} of an orthogonal
category $\ovr{\DD} = (\DD,\perp_\DD)$
is a full subcategory $\CC\subseteq \DD$
that is endowed with the pullback orthogonality relation, 
i.e.\ $f_1\perp_{\CC} f_2$ if and only if
$f_1\perp_\DD f_2$. The embedding functor $j : \CC\to \DD$
defines an orthogonal functor $j : \ovr{\CC} \to \ovr{\DD}$.
\begin{propo}\label{prop:extension}
Let $j : \ovr{\CC}\to \ovr{\DD}$ be a full orthogonal subcategory
and $\P^{(r_1,r_2)} \in \Op_{\{\ast\}}^{2\mathrm{pt}}(\MM)$ a bipointed single-colored
operad. Then the corresponding adjunction
\begin{flalign}\label{eqn:jadjunction}
\xymatrix{
j_! \,:\, \FT\big(\ovr{\CC},\P^{(r_1,r_2)}\big) ~\ar@<0.5ex>[r]&\ar@<0.5ex>[l]  ~\FT\big(\ovr{\DD},\P^{(r_1,r_2)}\big) \,:\,j^\ast
}
\end{flalign}
exhibits $\FT\big(\ovr{\CC},\P^{(r_1,r_2)}\big)$ as a full coreflective subcategory of 
$\FT\big(\ovr{\DD},\P^{(r_1,r_2)}\big)$, i.e.\ the unit $\eta : \id \to j^\ast\,j_! $
of this adjunction is a natural isomorphism.
\end{propo}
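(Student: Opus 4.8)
The plan is to prove the equivalent statement that the left adjoint $j_!$ is fully faithful, which is the case exactly when the unit $\eta : \id \to j^\ast\,j_!$ is a natural isomorphism. I would work at the level of the field theory operads and exploit that, by Theorems \ref{theo:adjunction} and \ref{theo:FTcatiso}, the adjunction \eqref{eqn:jadjunction} is the operadic left Kan extension adjunction along the $\Op(\MM)$-morphism $j_F : \P^{(r_1,r_2)}_{\ovr{\CC}} \to \P^{(r_1,r_2)}_{\ovr{\DD}}$ supplied by Proposition \ref{prop:FToperadfunctoriality} (the instance of $\phi_F$ for $(F,\phi)=(j,\id)$), whose underlying colour map is the injective object map $\CC_0 \hookrightarrow \DD_0$ of $j$. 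The key claim I would establish is that $j_F$ restricts to an isomorphism on all operation objects whose colours lie in the image of $j$, i.e.\ that
\[
j_F \,:\, \P^{(r_1,r_2)}_{\ovr{\CC}}\big(\substack{c' \\ \und{c}}\big) ~\longrightarrow~ \P^{(r_1,r_2)}_{\ovr{\DD}}\big(\substack{j(c') \\ j(\und{c})}\big)
\]
is an $\MM$-isomorphism for all $n\geq 0$ and $(\und{c},c')\in\CC_0^{n+1}$.

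First I would treat the auxiliary operads of Definition \ref{def:PCoperad}. Since $j$ is fully faithful, the maps $\CC(\und{c},c') \to \DD(j(\und{c}),j(c'))$ are bijections, so formula \eqref{eqn:PCoperations} already exhibits the underlying morphism $\P_\CC \to \P_\DD$ as an isomorphism $\P_\CC\big(\substack{c' \\ \und{c}}\big) \cong \P_\DD\big(\substack{j(c') \\ j(\und{c})}\big)$ on image colours. It then remains to compare the two coequalizers of Definition \ref{def:FToperad}, i.e.\ to show that the relations encoded by $R_{\perp_\CC}$ and $R_{\perp_\DD}$ cut out the same sub-congruence on image-coloured operations. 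The inclusion of the $\CC$-relations among the $\DD$-relations is immediate because $j$ is an orthogonal functor. For the reverse inclusion I would analyse how a basic $\DD$-relation $\iota_{(f_1,f_2)}(r_1) = \iota_{(f_1,f_2)}(r_2)$, with $f_1 \perp_\DD f_2$, is composed into an operation all of whose boundary colours lie in the image of $j$: tracing the two legs carrying the swapped arity-$2$ operation down to the inputs and up to the output yields a pair of morphisms $h_1,h_2$ with target $j(c')$, which by fullness are of the form $h_i = j(\bar h_i)$ and which by stability of $\perp$ under composition (property (ii) of Definition \ref{def:orthogonalcat}) satisfy $h_1 \perp_\DD h_2$. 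Since $\CC$ is a \emph{full} orthogonal subcategory this gives $\bar h_1 \perp_\CC \bar h_2$, so the induced identification already occurs in $\P^{(r_1,r_2)}_{\ovr{\CC}}$. This establishes the claimed isomorphism on image colours.

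With the claim in hand I would conclude using the explicit description of the operadic left Kan extension: $(j_!\AAA)_t$ is computed by a colimit whose constituents are the operation objects $\P^{(r_1,r_2)}_{\ovr{\DD}}\big(\substack{t \\ j(\und{c})}\big)$ tensored with the components $\AAA_{\und{c}}$, the point being that $\AAA \in \Alg_{\P^{(r_1,r_2)}_{\ovr{\CC}}}$ is supported entirely on the colours in the image of $j$. Evaluating at an image colour $t = j(c')$, only operation objects $\P^{(r_1,r_2)}_{\ovr{\DD}}\big(\substack{j(c') \\ j(\und{c})}\big)$ with all colours in the image appear, and by the claim these agree with $\P^{(r_1,r_2)}_{\ovr{\CC}}\big(\substack{c' \\ \und{c}}\big)$. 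Hence the diagram computing $(j_!\AAA)_{j(c')}$ is canonically identified with the one computing $\AAA_{c'}$ itself, the resulting isomorphism $(j_!\AAA)_{j(c')} \cong \AAA_{c'}$ being exactly the component of the unit $\eta_\AAA$. As $\eta$ is a natural transformation whose components are isomorphisms, it is a natural isomorphism, which exhibits $\FT\big(\ovr{\CC},\P^{(r_1,r_2)}\big)$ as a full coreflective subcategory.

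The step I expect to be the main obstacle is the reverse inclusion in the second paragraph. A priori a $\DD$-relation may be applied at a node sitting over intermediate colours lying outside the image of $j$, and one must check that its net effect on an image-coloured operation is nonetheless reproduced by a genuine $\CC$-relation. The combination of fullness of $j$ (which forces the effective legs being swapped to be morphisms of $\CC$) with the compositional stability of the orthogonality relation is precisely what makes this descent work, and it is the place where the hypothesis that $\CC$ is a full orthogonal subcategory, rather than merely an orthogonal subcategory, enters in an essential way.
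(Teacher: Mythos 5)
Your reduction is the right one, and it is in fact the strategy of the proof in [BSW17] that the paper invokes: establish that the induced $\Op_{\CC_0}(\MM)$-morphism $\phi_j : \P^{(r_1,r_2)}_{\ovr{\CC}} \to j^\ast\P^{(r_1,r_2)}_{\ovr{\DD}}$ is an isomorphism on all image-coloured operation objects, and then observe (your third paragraph, which is correct) that the colimit computing $(j_!\AAA)_{j(c')}$ only involves image-coloured operations and hence reduces to the canonical presentation of $\AAA_{c'}$. The gap is exactly where you flagged it, in the reverse inclusion of congruences, and your proposed fix does not close it. Your argument only treats the case where the basic $\DD$-relation is composed \emph{below} with further operations, so that the two legs of the relation carry single morphisms which, by fullness and composition-stability, descend to a $\CC$-orthogonal pair. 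It does not treat the case where non-identity operations are composed \emph{above} the two legs: then the inputs feeding one leg form a ``block'' of several $\CC$-morphisms which factor through a common $\DD$-morphism whose source lies outside the image of $j$, and reproducing the resulting identification in $\CC$ would require a common factorization through a single $\CC$-morphism belonging to a $\CC$-orthogonal pair. Such a factorization need not exist.

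Worse, this is not merely an incompleteness of your argument: the key claim is false in the stated generality. Take $\MM=\Set$, let $\P$ be the free single-coloured operad on three arity-$2$ generators $r_1,r_2,s$, bipointed by $r_1,r_2$, and let $\DD$ be the free category on the arrows $v_1:c_1\to d_1$, $v_2:c_2\to d_1$, $w:c_3\to d_2$, $f_1:d_1\to d$, $f_2:d_2\to d$, $h:d\to c'$, with $\perp_\DD$ generated under composition and symmetry by $f_1\perp f_2$. Let $\ovr{\CC}\subset\ovr{\DD}$ be the full orthogonal subcategory on $\{c_1,c_2,c_3,c'\}$; its only non-identity morphisms are $g_1=hf_1v_1$, $g_2=hf_1v_2$, $g_3=hf_2w$, and $\perp_\CC$ consists precisely of $(g_1,g_3)$, $(g_2,g_3)$ and their transposes. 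In $\P^{(r_1,r_2)}_{\ovr{\DD}}$, composing the basic relation at the orthogonal pair $(hf_1,hf_2)$ above with $((v_1,v_2),s)$ and $((w),\oone)$ identifies, in the copy of $\P(3)$ indexed by the tuple $(g_1,g_2,g_3)$, the two trees $\gamma(r_1;s,\oone)$ and $\gamma(r_2;s,\oone)$ (inputs $1,2$ feeding $s$ into the first slot of $r_i$, input $3$ into its second slot). In $\P^{(r_1,r_2)}_{\ovr{\CC}}$ these two elements are \emph{not} identified: since in $\CC$ no non-identity morphism factors, $\P(0)=\emptyset$ and $\P(1)=\{\oone\}$, every elementary identification in this component produced by a basic relation at $(g_1,g_3)$ or $(g_2,g_3)$ has the form $\gamma(p;r_i,\oone)$ with the $r_i$-node fed directly by inputs $\{1,3\}$ or $\{2,3\}$, and freeness of $\P$ prevents $\gamma(r_i;s,\oone)$ from being of this shape; the only candidate common factorization of $g_1,g_2$ is through $\id_{c'}$, which is orthogonal to nothing. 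So $\phi_j$ is not injective, and feeding this through your own evaluation of $j_!$ (or simply through the unit on the free field theory generated by one point at each of $c_1,c_2,c_3$) shows the unit itself fails to be injective.

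What rescues the operads actually used in the paper is their relations: for $\mathsf{As}^{(\mu,\mu^\op)}$ associativity dissolves blocks ($\mu(\mu(a_1,a_2),b)=\mu(a_1,\mu(a_2,b))$), and for $\mathsf{uLie}^{([\cdot,\cdot],0)}$ and $\mathsf{Pois}^{(\{\cdot,\cdot\},0)}$ the Jacobi/Leibniz identities reduce block relations to leg-wise ones. This is precisely what the explicit combinatorial description of the AQFT operad in [BSW17] encodes (its congruence refers only to orthogonality of the tuple morphisms themselves), and it is the ingredient your general descent argument cannot supply. So a correct proof along your lines must either impose additional hypotheses on $\P^{(r_1,r_2)}$ or invoke the specific relations of the operad at hand; fullness and composition-stability of $\perp$ alone do not suffice.
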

\begin{proof}
The proof is analogous to the corresponding one in \cite{BeniniSchenkelWoike}
and will not be repeated.
\end{proof}

\begin{ex}\label{ex:Locdiamond}
Recall the orthogonal category $\ovr{\Loc}$ of globally hyperbolic spacetimes 
from Example \ref{ex:Loc}. Consider the full subcategory $\Locc\subseteq \Loc$
of all spacetimes whose underlying manifold is diffeomorphic to $\bbR^m$. 
Endowed with the pullback orthogonality relation, i.e.\
$f_1\perp_{\Locc} f_2$ if and only if $f_1\perp_\Loc f_2$,
this defines a full orthogonal subcategory $ j : \ovr{\Locc} \to \ovr{\Loc}$.
The  corresponding adjunction is 
\begin{flalign}
\xymatrix{
j_! \,:\, \FT\big(\ovr{\Locc},\P^{(r_1,r_2)}\big) ~\ar@<0.5ex>[r]&\ar@<0.5ex>[l]  ~\FT\big(\ovr{\Loc},\P^{(r_1,r_2)}\big) \,:\,j^\ast
}\quad.
\end{flalign}
The right adjoint $j^\ast$ is the {\em restriction functor} which restricts field theories that are defined
on all of  $\ovr{\Loc}$ to the full orthogonal subcategory $\ovr{\Locc}$ of spacetimes diffeomorphic
to $\bbR^m$. More interestingly, the left adjoint $j_!$ is a {\em universal extension functor}
which extends field theories that are only defined on $\ovr{\Locc}$ to all of $\ovr{\Loc}$.
It was shown in \cite{BeniniSchenkelWoike} that $j_!$ is a generalization
and refinement of Fredenhagen's universal algebra construction \cite{Fre1,Fre2,Fre3,Lang}.
\sk

A non-trivial application of a similar universal extension functor for 
quantum field theories on spacetimes with boundaries has been studied 
in \cite{BeniniDappiaggiSchenkel}. It has been shown that the ideals of
the universal extension $j_!(\BBB)$ of a theory $\BBB$ that is defined 
only on the interior of a spacetime with boundary are related to
boundary conditions.
\end{ex}

\begin{rem}
The result in Proposition \ref{prop:extension} that $j_!$ exhibits
$\FT\big(\ovr{\CC},\P^{(r_1,r_2)}\big)$ as a full coreflective subcategory of 
$\FT\big(\ovr{\DD},\P^{(r_1,r_2)}\big)$ is crucial for a proper interpretation of
$j_!$ as a universal extension functor and $j^\ast$ as a restriction functor
in the spirit of Example \ref{ex:Locdiamond}. Given any field theory 
$\BBB \in \FT\big(\ovr{\CC},\P^{(r_1,r_2)}\big)$ on the full orthogonal subcategory
$\ovr{\CC}\subseteq \ovr{\DD}$, one may apply the left and then the right adjoint
functor in \eqref{eqn:jadjunction} to obtain another field theory
$j^\ast j_!(\BBB)\in  \FT\big(\ovr{\CC},\P^{(r_1,r_2)}\big)$ on $\ovr{\CC}\subseteq \ovr{\DD}$.
The latter is interpreted as the restriction of the universal extension of $\BBB$.
By Proposition  \ref{prop:extension}, the unit $\eta_\BBB^{} :\BBB\to j^\ast j_!(\BBB)$
defines an isomorphism between these two theories, which means that $j_!$ 
extends field theories from $\ovr{\CC}\subseteq \ovr{\DD}$ to all of $\ovr{\DD}$
without altering their values on the subcategory $\ovr{\CC}$. 
\sk

With this observation in mind, we would like to comment on existing
criticisms that universal constructions, such as Fredenhagen's universal algebra
or our universal extension $j_!$, may fail to provide a non-trivial result, 
see e.g.\ \cite{Ruzzi} and also \cite{Lang}. (We would like to thank the anonymous
referee for bringing this to our attention.)
It is indeed true that the algebra $j_!(\BBB)(d)\in \Alg_\P$ associated
to an object $d\in\DD$ that is not in the subcategory $\CC\subseteq\DD$, i.e.\ $d\not\in\CC$,
might be trivial. However, for every non-trivial $\BBB\in\FT\big(\ovr{\CC},\P^{(r_1,r_2)}\big)$, the universally 
extended field theory $j_!(\BBB)\in \FT\big(\ovr{\DD},\P^{(r_1,r_2)}\big)$ as a whole
is non-trivial because, as we explained above, its restriction $j^\ast j_!(\BBB)$
to $\ovr{\CC}$ is isomorphic to the input $\BBB$ of the construction. 
We expect that one can construct physical examples of such theories that are 
non-trivial on simple spacetimes in $\ovr{\CC}$, but might be trivial on certain 
complicated spacetimes in $\ovr{\DD}$, by considering field equations 
that admit only local solutions.
\end{rem}

An interesting application of the class of adjunctions in \eqref{eqn:jadjunction} 
is that they allow us to formalize a kind of local-to-global (i.e.\ {\em descent}) 
condition for field theories. Given a field theory $\AAA \in \FT\big(\ovr{\DD},\P^{(r_1,r_2)}\big)$
on the bigger category $\ovr{\DD}$, one may ask whether it is already completely determined
by its values on the full orthogonal subcategory $\ovr{\CC}\subseteq\ovr{\DD}$. 
In the context of Example \ref{ex:Locdiamond}, this means asking if
the value of a field theory on a general spacetime $M\in\ovr{\Loc}$ is completely determined 
by its behavior on spacetimes diffeomorphic to $\bbR^m$, which is a typical question of descent.
The following definition provides a formalization of this idea.
\begin{defi}\label{def:jlocal}
A field theory $\AAA \in \FT\big(\ovr{\DD},\P^{(r_1,r_2)}\big)$ on $\ovr{\DD}$
is called {\em $j$-local} if the corresponding component of the counit
$\epsilon_\AAA^{} : j_! \,j^\ast(\AAA)\to \AAA$ is an isomorphism.
The full subcategory of $j$-local field theories is denoted by
$\FT\big(\ovr{\DD},\P^{(r_1,r_2)}\big)^{j\mathrm{{-}loc}}\subseteq \,\FT\big(\ovr{\DD},\P^{(r_1,r_2)}\big)$.
\end{defi}

The following result, which extends earlier results from \cite{BeniniSchenkelWoike} to
our more general framework, shows that $j$-local field theories on the bigger category 
$\ovr{\DD}$  may be equivalently described by field theories on the full orthogonal
subcategory $\ovr{\CC}\subseteq \ovr{\DD}$. 
\begin{cor}
The adjunction \eqref{eqn:jadjunction} restricts to an adjoint equivalence
\begin{flalign}
\xymatrix{
j_! \,:\, \FT\big(\ovr{\CC},\P^{(r_1,r_2)}\big) ~\ar@<0.8ex>[r]_-{\sim}&\ar@<0.8ex>[l]  ~\FT\big(\ovr{\DD},\P^{(r_1,r_2)}\big)^{j\mathrm{{-}loc}} \,:\,j^\ast
}\quad.
\end{flalign}
\end{cor}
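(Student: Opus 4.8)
The plan is to exploit the general fact that for any adjunction, restricting to the objects where the counit is an isomorphism on the codomain side (together with the full subcategory on the domain side on which the unit is already iso) yields an adjoint equivalence. By Proposition \ref{prop:extension}, the unit $\eta : \id \to j^\ast\,j_!$ of the adjunction \eqref{eqn:jadjunction} is already a natural isomorphism, so the domain $\FT(\ovr{\CC},\P^{(r_1,r_2)})$ needs no restriction. The definition of $j$-locality (Definition \ref{def:jlocal}) precisely isolates the full subcategory $\FT(\ovr{\DD},\P^{(r_1,r_2)})^{j\mathrm{{-}loc}}$ of the codomain on which the counit $\epsilon$ is a natural isomorphism. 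Thus the claim is an instance of the standard characterization of when an adjunction cuts down to an equivalence.

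The key steps, in order, are as follows. First I would verify that $j_!$ and $j^\ast$ restrict to functors between $\FT(\ovr{\CC},\P^{(r_1,r_2)})$ and $\FT(\ovr{\DD},\P^{(r_1,r_2)})^{j\mathrm{{-}loc}}$. For $j_!$ this requires checking that its essential image lands in the $j$-local subcategory: given any $\BBB \in \FT(\ovr{\CC},\P^{(r_1,r_2)})$, I must show that $j_!(\BBB)$ is $j$-local, i.e.\ that $\epsilon_{j_!(\BBB)}$ is an isomorphism. This follows from the triangle identities together with Proposition \ref{prop:extension}: the composite $j_!(\BBB) \xrightarrow{j_!(\eta_\BBB)} j_! j^\ast j_!(\BBB) \xrightarrow{\epsilon_{j_!(\BBB)}} j_!(\BBB)$ is the identity, and since $\eta_\BBB$ is an isomorphism so is $j_!(\eta_\BBB)$, forcing $\epsilon_{j_!(\BBB)}$ to be an isomorphism. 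For $j^\ast$ the restriction is automatic since it simply takes values in $\FT(\ovr{\CC},\P^{(r_1,r_2)})$.

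Second, with both functors now restricted, I would observe that the unit $\eta$ remains a natural isomorphism (unchanged from Proposition \ref{prop:extension}) and that the counit $\epsilon$, now restricted to objects of the $j$-local subcategory, is a natural isomorphism by the very definition of $j$-locality. An adjunction whose unit and counit are both natural isomorphisms is precisely an adjoint equivalence, so this completes the argument.

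I do not expect any serious obstacle here; the content of the corollary is a formal consequence of Proposition \ref{prop:extension} and Definition \ref{def:jlocal}. The only point requiring genuine (if routine) care is the first step, namely confirming that $j_!$ actually lands in the $j$-local subcategory, so that the restricted functor is well defined; the triangle-identity argument above handles this cleanly. Everything else is an application of the standard fact that an adjunction with invertible unit and counit is an equivalence of categories.
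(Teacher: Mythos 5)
Your proposal is correct and is exactly the argument the paper intends: its proof consists of the single line that the corollary is an immediate consequence of Proposition \ref{prop:extension}, and your triangle-identity argument (showing $\epsilon_{j_!(\BBB)}$ is inverse to $j_!(\eta_\BBB^{})$, hence $j_!$ lands in the $j$-local subcategory, after which invertibility of unit and counit gives the adjoint equivalence) is precisely the standard unwinding of that claim.
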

\begin{proof}
This is an immediate consequence of Proposition \ref{prop:extension}.
\end{proof}

\begin{ex}
Being a powerful local-to-global property, it is in general not easy to prove that 
a given field theory $\AAA \in \FT\big(\ovr{\DD},\P^{(r_1,r_2)}\big)$ on $\ovr{\DD}$
is $j$-local for some full orthogonal subcategory embedding $j:\ovr{\CC}\to\ovr{\DD}$.
Positive results are known for the usual Klein-Gordon quantum field theory and
$j:\ovr{\Locc}\to\ovr{\Loc}$, see \cite{Lang} and \cite[Section 5]{BeniniSchenkelWoike}.
We expect that this proof can be adapted to cover all vector bundle valued linear 
quantum field theories in the sense of \cite{Baer,BGproc,BeniniDappiaggiHack}.
\end{ex}

\subsection{\label{subsec:orthlocalication}Orthogonal localizations}
Recall from \cite{BeniniSchenkelWoike} that
the {\em orthogonal localization} of an orthogonal category 
$\ovr{\CC}$ at a subset $W \subseteq \mathrm{Mor}\,\CC$ of the set of 
morphisms is given by the localized category $\CC[W^{-1}]$ endowed with
the pushforward orthogonality relation $\perp_{\CC[W^{-1}]} \,:= L_\ast(\perp_\CC)$
along the localization functor $L : \CC\to\CC[W^{-1}]$,
i.e.\ $\perp_{\CC[W^{-1}]}$ is the smallest orthogonality relation 
such that $L(f_1)\perp_{\CC[W^{-1}]} L(f_2)$ for all $f_1\perp_\CC f_2$. 
The localization functor defines an orthogonal functor $L : \ovr{\CC}\to \ovr{\CC[W^{-1}]}$.
\begin{propo}\label{prop:timeslice}
Let $L: \ovr{\CC}\to \ovr{\CC[W^{-1}]}$ be an orthogonal localization
and $\P^{(r_1,r_2)} \in \Op_{\{\ast\}}^{2\mathrm{pt}}(\MM)$ a
bipointed single-colored operad. Then the corresponding adjunction
\begin{flalign}\label{eqn:Ladjunction}
\xymatrix{
L_! \,:\, \FT\big(\ovr{\CC},\P^{(r_1,r_2)}\big) ~\ar@<0.5ex>[r]&\ar@<0.5ex>[l]  ~\FT\big(\ovr{\CC[W^{-1}]},\P^{(r_1,r_2)}\big) \,:\,L^\ast
}
\end{flalign}
exhibits $\FT\big(\ovr{\CC[W^{-1}]},\P^{(r_1,r_2)}\big) $ as a full reflective subcategory of 
$\FT\big(\ovr{\CC},\P^{(r_1,r_2)}\big)$, i.e.\ the counit $\epsilon: L_!\, L^\ast\to \id$
of this adjunction is a natural isomorphism.
\end{propo}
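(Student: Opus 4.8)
The plan is to deduce the statement from the general categorical fact that, for any adjunction $F \dashv G$, the counit $\epsilon : F\,G \to \id$ is a natural isomorphism if and only if the right adjoint $G$ is fully faithful. Applied to the adjunction $L_! \dashv L^\ast$ from \eqref{eqn:Ladjunction}, this reduces the claim that $\epsilon : L_!\,L^\ast \to \id$ is a natural isomorphism to showing that
\begin{flalign}
L^\ast \,:\, \FT\big(\ovr{\CC[W^{-1}]},\P^{(r_1,r_2)}\big) ~\longrightarrow~ \FT\big(\ovr{\CC},\P^{(r_1,r_2)}\big)
\end{flalign}
is fully faithful. So the entire content is to establish this full faithfulness.

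First I would make $L^\ast$ explicit. Since we keep the operad $\P$ fixed, the relevant $\Op_{\{\ast\}}^{2\mathrm{pt}}(\MM)$-morphism is $\phi=\id$, and by the description \eqref{eqn:FTadjunctionexplicit} the functor $L^\ast$ is nothing but the restriction, to the full subcategories of field theories, of the precomposition functor $(-)\circ L : {\Alg_\P}^{\CC[W^{-1}]} \to {\Alg_\P}^{\CC}$ (using the identifications of Lemma \ref{lem:PCalgebras} and Theorem \ref{theo:FTcatiso}). That this restriction is well defined, i.e.\ that $L^\ast$ sends field theories on $\ovr{\CC[W^{-1}]}$ to field theories on $\ovr{\CC}$, uses only that $L$ is an orthogonal functor: a pair $f_1\perp_\CC f_2$ is mapped to $L(f_1)\perp_{\CC[W^{-1}]} L(f_2)$, whence the defining square \eqref{eqn:FTproperty} for $\AAA$ on $\ovr{\CC[W^{-1}]}$ pulls back to the corresponding square for $\AAA\circ L$ on $\ovr{\CC}$.

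It then remains to show that $(-)\circ L$ is fully faithful, which is exactly the universal property of the Gabriel--Zisman localization: for every category $\DD$ (here $\DD=\Alg_\P$) the precomposition functor $(-)\circ L : \Cat\big(\CC[W^{-1}],\DD\big) \to \Cat(\CC,\DD)$ is fully faithful, with essential image the functors inverting every $w\in W$. Full faithfulness is then inherited by the restriction to the full subcategories $\FT\big(\ovr{\CC[W^{-1}]},\P^{(r_1,r_2)}\big)\subseteq {\Alg_\P}^{\CC[W^{-1}]}$ and $\FT\big(\ovr{\CC},\P^{(r_1,r_2)}\big)\subseteq {\Alg_\P}^{\CC}$, since on a full subcategory the hom-sets are unchanged. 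This yields the full faithfulness of $L^\ast$ and hence the claim. The argument is a direct generalization of the corresponding one in \cite{BeniniSchenkelWoike}, the only change being the replacement of $\Alg_{\mathsf{As}}$ by $\Alg_\P$, which plays no active role.

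The step requiring the most care is the identification of $L^\ast$ with honest precomposition together with its well-definedness on field theories; everything downstream (the universal property of localization and the passage to full subcategories) is formal. The genuine input is that $\CC[W^{-1}]$ is equipped with the \emph{pushforward} orthogonality $L_\ast(\perp_\CC)$, which is the minimal relation making $L$ orthogonal: minimality is precisely what guarantees both that $L^\ast$ lands in field theories and that no field theory on $\ovr{\CC[W^{-1}]}$ is discarded. Note that the explicit essential-image description is \emph{not} needed for the counit statement; it would only be required in order to exhibit $\FT\big(\ovr{\CC[W^{-1}]},\P^{(r_1,r_2)}\big)$ concretely as the subcategory of theories on $\ovr{\CC}$ that invert $W$, i.e.\ those satisfying the time-slice axiom.
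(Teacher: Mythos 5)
Your proposal is correct and takes essentially the same route as the paper, which simply defers to the corresponding argument in \cite{BeniniSchenkelWoike}: reduce the counit statement to full faithfulness of $L^\ast$, identify $L^\ast$ with precomposition $(-)\circ L$ restricted to the field theory subcategories via \eqref{eqn:FTadjunctionexplicit}, and invoke the universal property of the Gabriel--Zisman localization together with the fact that full faithfulness passes to full subcategories. One small slip in your closing commentary: that $L^\ast$ lands in field theories uses only that $L$ is an orthogonal functor, while minimality of the pushforward orthogonality matters only for the essential-image (i.e.\ $W$-constancy) statement of Proposition \ref{prop:Ladjunctionequivalence}, as you yourself note earlier.
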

\begin{proof}
The proof is analogous to the corresponding one in \cite{BeniniSchenkelWoike}
and will not be repeated.
\end{proof}

Let us now explain in some detail the relationship between 
the adjunction \eqref{eqn:Ladjunction} and a suitable generalization of
the `time-slice axiom' that we shall call $W$-constancy.
\begin{defi}\label{def:Wconstant}
A field theory $\AAA\in \FT\big(\ovr{\CC},\P^{(r_1,r_2)}\big)$ is called {\em $W$-constant}
if the $\Alg_\P$-morphism $\AAA(f) : \AAA(c)\to\AAA(c^\prime)$ is an isomorphism,
for all $(f:c\to c^\prime)\in W$.
The full subcategory of $W$-constant field theories
is denoted by $\FT\big(\ovr{\CC},\P^{(r_1,r_2)}\big)^{W\mathrm{{-}const}}
\subseteq \FT\big(\ovr{\CC},\P^{(r_1,r_2)}\big)$.
\end{defi}
\begin{propo}\label{prop:Ladjunctionequivalence}
The adjunction \eqref{eqn:Ladjunction} restricts to an adjoint equivalence
\begin{flalign}\label{eqn:Ladjunctionequivalence}
\xymatrix{
L_! \,:\, \FT\big(\ovr{\CC},\P^{(r_1,r_2)}\big)^{W\mathrm{{-}const}} ~\ar@<0.8ex>[r]_-{\sim}&\ar@<0.8ex>[l]  ~\FT\big(\ovr{\CC[W^{-1}]},\P^{(r_1,r_2)}\big) \,:\,L^\ast
}\quad.
\end{flalign}
As a consequence, a field theory $\AAA\in \FT\big(\ovr{\CC},\P^{(r_1,r_2)}\big)$ is $W$-constant
if and only if the corresponding component $\eta_\AAA^{} : \AAA\to L^\ast L_!(\AAA)$
of the unit of the adjunction \eqref{eqn:Ladjunction} is an isomorphism.
\end{propo}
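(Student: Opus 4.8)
The plan is to reduce everything to the single point of identifying the essential image of $L^\ast$ with the $W$-constant field theories. Proposition \ref{prop:timeslice} already gives that the counit $\epsilon : L_!\,L^\ast \to \id$ is a natural isomorphism, which is equivalent to $L^\ast$ being fully faithful and thus exhibits $\FT\big(\ovr{\CC[W^{-1}]},\P^{(r_1,r_2)}\big)$ as a reflective subcategory of $\FT\big(\ovr{\CC},\P^{(r_1,r_2)}\big)$ with reflector $L_!$. For any such reflective subcategory I may invoke two standard facts: (a) an object $\AAA$ lies in the essential image of the inclusion $L^\ast$ if and only if the unit component $\eta_\AAA : \AAA \to L^\ast L_!(\AAA)$ is an isomorphism; and (b) the adjunction restricts to an adjoint equivalence between this essential image and $\FT\big(\ovr{\CC[W^{-1}]},\P^{(r_1,r_2)}\big)$. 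Hence it suffices to prove the identification
\[
\mathrm{ess.im}(L^\ast) \;=\; \FT\big(\ovr{\CC},\P^{(r_1,r_2)}\big)^{W\mathrm{{-}const}} ,
\]
after which the asserted adjoint equivalence \eqref{eqn:Ladjunctionequivalence} is exactly fact (b), and the final \emph{if and only if} is exactly fact (a) combined with this identification.

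The inclusion $\subseteq$ is immediate. For any $\BBB \in \FT\big(\ovr{\CC[W^{-1}]},\P^{(r_1,r_2)}\big)$ and any $(f : c\to c')\in W$ we have $L^\ast(\BBB)(f) = \BBB(L(f))$, and $L(f)$ is an isomorphism in $\CC[W^{-1}]$ by the defining property of the localization functor; therefore $L^\ast(\BBB)(f)$ is an isomorphism in $\Alg_\P$, so $L^\ast(\BBB)$ is $W$-constant.

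For the reverse inclusion I would start from a $W$-constant field theory $\AAA$ and build a field theory $\BBB$ on $\ovr{\CC[W^{-1}]}$ with $L^\ast(\BBB) = \AAA$. Its underlying functor inverts every morphism in $W$, so the $1$-categorical universal property of the localization $L : \CC\to\CC[W^{-1}]$ produces a unique functor $\BBB : \CC[W^{-1}]\to\Alg_\P$ with $\BBB\circ L = \AAA$. The remaining task is to check that $\BBB$ satisfies the causality condition \eqref{eqn:FTproperty} for the pushforward orthogonality relation $\perp_{\CC[W^{-1}]} = L_\ast(\perp_\CC)$. To this end consider the set $\widetilde{\perp}$ of pairs $(g_1,g_2)$ of $\CC[W^{-1}]$-morphisms with common target for which the diagram \eqref{eqn:FTproperty} built from $\BBB$ commutes in both orderings of the pair. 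By construction $\widetilde{\perp}$ is symmetric; it contains every generating pair $L(f_1)\perp L(f_2)$ coming from $f_1\perp_\CC f_2$, since via $\BBB\circ L = \AAA$ the relevant diagrams for $\BBB$ are identified with those for $\AAA$, which commute because $\AAA$ is a field theory (using symmetry of $\perp_\CC$ for the swapped ordering); and it is stable under pre- and post-composition. As $\perp_{\CC[W^{-1}]}$ is by definition the smallest orthogonality relation containing the generating pairs, this forces $\perp_{\CC[W^{-1}]}\subseteq\widetilde{\perp}$, so $\BBB$ is a genuine field theory and $\AAA = L^\ast(\BBB)$ lies in the essential image of $L^\ast$.

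I expect the main obstacle to be the stability of $\widetilde{\perp}$ under composition, i.e.\ showing that $(g_1,g_2)\in\widetilde{\perp}$ implies $(h\,g_1\,k_1,\,h\,g_2\,k_2)\in\widetilde{\perp}$ for all composable morphisms $h,k_1,k_2$. Here one must track how the two distinguished operations $r_1,r_2$ and the $\P$-action $\alpha^\P$ interact with pre-composition, which is handled by the functoriality identity $\BBB(g_i\,k_i)=\BBB(g_i)\,\BBB(k_i)$, and with post-composition by the algebra morphism $\BBB(h)$, which intertwines the $\P$-actions. This is precisely the analogue for our bipointed operads of the computation underlying the stability axiom for orthogonality relations, and it can be carried out exactly as in \cite{BeniniSchenkelWoike}. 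Everything else, including the adjoint equivalence \eqref{eqn:Ladjunctionequivalence} and the final characterization, is then a formal consequence of reflective-subcategory theory as summarized in the first paragraph.
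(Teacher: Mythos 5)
Your proof is correct and takes essentially the same route as the paper's: both reduce the claim, via the full faithfulness of $L^\ast$ supplied by Proposition \ref{prop:timeslice} together with standard reflective-subcategory facts, to identifying the essential image of $L^\ast$ with the $W$-constant theories, and both prove the nontrivial inclusion by factoring a $W$-constant $\AAA$ through the localization and verifying the causality axiom \eqref{eqn:FTproperty} using that $\perp_{\CC[W^{-1}]}$ is generated by the pairs $L(f_1)\perp L(f_2)$ with $f_1\perp_\CC f_2$. Your explicit construction of the auxiliary relation $\widetilde{\perp}$ and the check of its stability under composition is just a fleshed-out version of the step the paper disposes of with ``one easily checks''.
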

\begin{proof}
By Proposition \ref{prop:timeslice}, we already know that 
the right adjoint functor $L^\ast$ is fully faithful, hence it remains
to prove that its essential image is $\FT\big(\ovr{\CC},\P^{(r_1,r_2)}\big)^{W\mathrm{{-}const}} $.
The image of $L^\ast$ lies in $\FT\big(\ovr{\CC},\P^{(r_1,r_2)}\big)^{W\mathrm{{-}const}} $
because, for every $\BBB\in \FT\big(\ovr{\CC[W^{-1}]},\P^{(r_1,r_2)}\big)$, the field theory
$L^\ast(\BBB)$ is $W$-constant since $L^\ast = (-)\circ L$ is given by restricting 
the pullback functor for functor categories and the localization functor $L:\CC\to\CC[W^{-1}]$ 
maps morphisms in $W$ to isomorphisms. To prove essential surjectivity, let
$\AAA \in \FT\big(\ovr{\CC},\P^{(r_1,r_2)}\big)^{W\mathrm{{-}const}} $
and consider its underlying functor $\AAA :\CC\to \Alg_\P$.
By definition of localization, there exists a functor $\BBB : \CC[W^{-1}]\to \Alg_\P$
and a natural isomorphism $\AAA\cong L^\ast(\BBB)$. Using that the orthogonality
relation $\perp_{\CC[W^{-1}]}$ is generated by $L(f_1)\perp_{\CC[W^{-1}]} L(f_2)$, for
all $f_1\perp_\CC f_2$, one easily checks that $\BBB\in \FT\big(\ovr{\CC[W^{-1}]},\P^{(r_1,r_2)}\big)$.
\end{proof}

\begin{ex}\label{ex:timeslice}
Recall the orthogonal category $\ovr{\Loc}$ of globally hyperbolic spacetimes 
from Example \ref{ex:Loc} and consider the subset  
$W \subseteq \mathrm{Mor}\,\Loc$ of all Cauchy morphisms,
i.e.\ morphisms $f:M\to M^\prime$ whose image $f(M)\subseteq M^\prime$ contains
a Cauchy surface of $M^\prime$.
In this case $W$-constant field theories are precisely field theories that 
satisfy the usual time-slice axiom with respect to all Cauchy morphisms. 
As a consequence of Proposition \ref{prop:Ladjunctionequivalence}, 
such field theories can be described equivalently as field theories on
the orthogonal localization $\ovr{\Loc[W^{-1}]}$ of $\ovr{\Loc}$ 
at all Cauchy morphisms $W$. This alternative point of view
comes together with an adjunction
\begin{flalign}\label{eqn:LadjunctionLoc}
\xymatrix{
L_! \,:\, \FT\big(\ovr{\Loc},\P^{(r_1,r_2)}\big) ~\ar@<0.5ex>[r]&\ar@<0.5ex>[l]  ~\FT\big(\ovr{\Loc[W^{-1}]},\P^{(r_1,r_2)}\big) \,:\,L^\ast
}\quad,
\end{flalign}
which allows us to detect $W$-constancy of a field theory $\AAA\in \FT\big(\ovr{\Loc},\P^{(r_1,r_2)}\big)$
by testing if the unit $\eta_\AAA^{} : \AAA\to L^\ast\,L_!(\AAA)$ is an isomorphism.
\sk

The right adjoint $L^\ast$ of the adjunction \eqref{eqn:LadjunctionLoc} can be
interpreted as the functor that forgets that a field theory 
$\BBB\in \FT\big(\ovr{\Loc[W^{-1}]},\P^{(r_1,r_2)}\big)$ satisfies the time-slice axiom.
More interestingly, the left adjoint $L_!$ assigns to a field theory $\AAA \in \FT\big(\ovr{\Loc},\P^{(r_1,r_2)}\big)$
that does {\em not} necessarily satisfy the time-slice axiom a theory that does. Hence,
one may call the left adjoint functor $L_!$ a `time-slicification' functor.
Notice that the result in Proposition \ref{prop:timeslice} that $L^\ast$ exhibits 
$ \FT\big(\ovr{\Loc[W^{-1}]},\P^{(r_1,r_2)}\big)$ as a full reflective subcategory of
$\FT\big(\ovr{\Loc},\P^{(r_1,r_2)}\big)$ has a concrete meaning. The isomorphisms
$\epsilon_{\BBB}^{} : L_!\, L^\ast(\BBB)\to\BBB$ given by the counit say that time-slicification
does not alter those field theories that already do satisfy the time-slice axiom, which is 
of course a very reasonable property.
\sk

To conclude we consider the following example
in order to show that our `time-slicification' functor does not
generically produce trivial field theories. 
Let $\BBB= \AAA/\mathcal{I} \in \FT\big(\ovr{\Loc[W^{-1}]},\P^{(r_1,r_2)}\big)$
be a field theory satisfying the time-slice axiom 
that is obtained by quotienting out 
an equation of motion ideal $\mathcal{I}$ 
of an `off-shell' field theory $\AAA$.
More formally, this means that $L^\ast(\BBB)\in \FT\big(\ovr{\Loc},\P^{(r_1,r_2)}\big)$
is given by a coequalizer
\begin{flalign}
\xymatrix@C=3em{
F(\mathcal{I}) \,\ar@<0.5ex>[r]\ar@<-0.5ex>[r] \,& \,\AAA \ar@{-->}[r]\,&\, L^\ast(\BBB)
}
\end{flalign}
in the category $\FT\big(\ovr{\Loc},\P^{(r_1,r_2)}\big)$,
where $F(\mathcal{I})$ is the freely-generated field theory of the equation of motion
ideal $\mathcal{I}$ and the two maps are given extending via
the free-forget adjunction $F\dashv U$ the
inclusion $\mathcal{I} \hookrightarrow U(\AAA)$ and the
zero map $0:\mathcal{I}\to U(\AAA)$ to $F(\mathcal{I})$.
The `off-shell' field theory $\AAA\in \FT\big(\ovr{\Loc},\P^{(r_1,r_2)}\big)$ 
is of course not assumed to satisfy the time-slice axiom. 
Because the `time-slicification' functor $L_!$ is left adjoint
it preserves colimits and hence
\begin{flalign}
\xymatrix@C=3em{
L_! F(\mathcal{I}) \,\ar@<0.5ex>[r]\ar@<-0.5ex>[r] \,& \,L_!(\AAA) \ar@{-->}[r]\,&\, L_! L^\ast(\BBB)\ar[r]^-{\epsilon_\BBB^{}}_-{\cong}\,&\,\BBB
}
\end{flalign}
is a coequalizer in $\FT\big(\ovr{\Loc[W^{-1}]},\P^{(r_1,r_2)}\big)$,
where we used Proposition \ref{prop:timeslice} for the last arrow.
We see that our field theory $\BBB=\AAA/\mathcal{I}$ may be equivalently presented 
as a quotient of the `time-slicification' $L_!(\AAA)$ of the `off-shell' field theory $\AAA$. 
In other words, every `on-shell' quotient of $\AAA$ can be presented as a quotient of $L_! (\AAA)$,
hence $L_!(\AAA)$ must be non-trivial provided that $\AAA$ admits non-trivial `on-shell' quotients.
\end{ex}

\subsection{Change of bipointed single-colored operad}
Our third class of examples are adjunctions that correspond to 
morphisms $\phi : \P^{(r_1,r_2)}\to \Q^{(s_1,s_2)}$ of bipointed single-colored
operads, i.e.\
\begin{flalign}\label{eqn:phiadjunction}
\xymatrix{
(\phi^\ast)^! \,:\, \FT\big(\ovr{\CC},\P^{(r_1,r_2)}\big) ~\ar@<0.5ex>[r]&\ar@<0.5ex>[l]  ~\FT\big(\ovr{\CC},\Q^{(s_1,s_2)}\big) \,:\,(\phi^\ast)_\ast
}\quad.
\end{flalign}
Let us stress that these adjunctions are conceptually very different
to the ones we studied in the previous two subsections because they change the
type of field theories and not the orthogonal category on which field theories
are defined. In particular, such adjunctions can {\em not} be formulated within
the original operadic framework for algebraic quantum field theory 
developed in \cite{BeniniSchenkelWoike} as they crucially rely on the more flexible 
definition \ref{def:FToperad} of field theory operads.
In Section \ref{sec:quantization} we study
an interesting example given by an adjunction that describes 
the quantization of linear field theories. 
\sk
 
We observe the following preservation results for $j$-local field theories 
(see Definition \ref{def:jlocal}) and for $W$-constant field theories 
(see Definition \ref{def:Wconstant}) under the adjunctions \eqref{eqn:phiadjunction}.
\begin{propo}\label{prop:preservation}
Let $\phi : \P^{(r_1,r_2)}\to \Q^{(s_1,s_2)}$ be an $\Op_{\{\ast\}}^{2\mathrm{pt}}(\MM)$-morphism,
$j : \ovr{\CC}\to\ovr{\DD}$ a full orthogonal subcategory and $W\subseteq \mathrm{Mor}\,\CC$
a subset.
\begin{itemize}
\item[a)] The left adjoint functor 
$(\phi^\ast)^! : \FT\big(\ovr{\DD},\P^{(r_1,r_2)}\big) \to \FT\big(\ovr{\DD},\Q^{(s_1,s_2)}\big)$
preserves $j$-local field theories.

\item[b)] The right adjoint functor $(\phi^\ast)_\ast : \FT\big(\ovr{\CC},\Q^{(s_1,s_2)}\big) 
\to \FT\big(\ovr{\CC},\P^{(r_1,r_2)}\big)$ preserves $W$-constant field theories.
\end{itemize} 
\end{propo}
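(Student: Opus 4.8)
The plan is to treat the two parts separately, beginning with statement b), which is essentially immediate from the definitions. Recall from the discussion after \eqref{eqn:FTadjunctionexplicit} that the right adjoint $(\phi^\ast)_\ast = \phi^\ast \circ (-)$ acts on a field theory $\BBB \in \FT\big(\ovr{\CC},\Q^{(s_1,s_2)}\big)$ by post-composing its underlying functor $\BBB : \CC \to \Alg_\Q$ with the pullback functor $\phi^\ast : \Alg_\Q \to \Alg_\P$. Hence for every morphism $f$ in $\CC$ one has $\big((\phi^\ast)_\ast \BBB\big)(f) = \phi^\ast\big(\BBB(f)\big)$. Since $\phi^\ast$ is a functor it preserves isomorphisms, so whenever $\BBB(f)$ is an isomorphism, as it is for $f \in W$ by $W$-constancy (Definition \ref{def:Wconstant}), so is $\big((\phi^\ast)_\ast \BBB\big)(f)$. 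This gives $W$-constancy of $(\phi^\ast)_\ast \BBB$ directly, and I expect no real obstacle beyond unwinding these definitions.

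For part a) the key idea is to reinterpret $j$-locality as membership in the essential image of $j_!$. First I would recall from Proposition \ref{prop:extension} that the unit $\eta : \id \to j^\ast j_!$ is a natural isomorphism, so that $j_!$ is fully faithful. The standard consequence is that an object $\AAA \in \FT\big(\ovr{\DD},\P^{(r_1,r_2)}\big)$ is $j$-local, i.e. its counit $\epsilon_\AAA : j_! j^\ast \AAA \to \AAA$ is an isomorphism, precisely when $\AAA$ lies in the essential image of $j_!$. Indeed, the triangle identity $\epsilon_{j_! Y} \circ j_!(\eta_Y) = \id$ together with invertibility of $\eta_Y$ forces $\epsilon_{j_! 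Y} = \big(j_!(\eta_Y)\big)^{-1}$ to be invertible, and naturality of $\epsilon$ transports invertibility of the counit along any isomorphism $\AAA \cong j_!(Y)$.

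With this characterization in hand, the proof reduces to the commutativity up to natural isomorphism of the square of left adjoints in \eqref{eqn:generalsquareadjunctions}, which yields a natural isomorphism $(\phi^\ast)^! \, j_! \cong j_! \, (\phi^\ast)^!$ (the left-adjoint incarnation of the commuting square of right adjoints from \eqref{eqn:FTadjunctionexplicit}). Then for a $j$-local $\AAA$ I would write $\AAA \cong j_!(j^\ast \AAA)$ and compute $(\phi^\ast)^! \AAA \cong (\phi^\ast)^! j_!(j^\ast \AAA) \cong j_! (\phi^\ast)^!(j^\ast \AAA)$, exhibiting $(\phi^\ast)^! \AAA$ as an object in the essential image of $j_!$ for $\Q^{(s_1,s_2)}$-theories, hence as a $j$-local field theory.

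The hard part is not the one-line computation but the clean justification of its two ingredients: (i) the equivalence between $j$-locality and the essential-image description, which rests on the fully faithfulness of $j_!$ from Proposition \ref{prop:extension} and a careful use of the triangle identities; and (ii) correctly invoking \eqref{eqn:generalsquareadjunctions}, making sure that the two occurrences of the symbol $(\phi^\ast)^!$ refer to the functors on the $\ovr{\CC}$- and $\ovr{\DD}$-columns and that the natural isomorphism of left adjoints is genuinely dual to the commuting square of right adjoints. Once these are pinned down, both preservation statements follow formally.
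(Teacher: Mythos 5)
Your proof is correct. Part b) is exactly the paper's argument: $(\phi^\ast)_\ast = \phi^\ast\circ(-)$ acts objectwise on the underlying functor, and every functor preserves isomorphisms. For part a) you use precisely the same ingredients as the paper --- Proposition \ref{prop:extension}, the triangle identity, naturality of the counit, and the natural isomorphism $(\phi^\ast)^!\, j_! \cong j_!\, (\phi^\ast)^!$ of left adjoints dual to the commuting square of right adjoints \eqref{eqn:FTadjunctionexplicit} --- but you package them differently. The paper establishes invertibility of $\epsilon_{(\phi^\ast)^!(\AAA)}$ by a single diagram chase \eqref{eqn:jpreservation}, whose three cells are exactly your three steps: naturality of $\epsilon$ along the isomorphism $(\phi^\ast)^!\epsilon_\AAA^{}$ (your transport of invertibility along $\AAA\cong j_!(Y)$), the commuting square of left adjoints (your middle isomorphism), and the triangle identity combined with invertibility of the unit (your proof that $\epsilon_{j_!Y}^{}$ is invertible). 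You instead extract these steps into the standard lemma that, for a coreflective embedding, the counit at an object is invertible if and only if the object lies in the essential image of $j_!$, after which preservation reduces to the one-line computation $(\phi^\ast)^!\AAA \cong (\phi^\ast)^! j_! j^\ast\AAA \cong j_! (\phi^\ast)^! j^\ast\AAA$. What your organization buys is modularity: the lemma is reusable (its dual, for instance, applies to the reflective situation of Proposition \ref{prop:timeslice} and $W$-constancy), and the application becomes transparent. What the paper's chase buys is explicitness: it exhibits the inverse of $\epsilon_{(\phi^\ast)^!(\AAA)}$ directly without passing through an auxiliary characterization. Both arguments are complete, and your two points of caution --- pinning down which instances of $(\phi^\ast)^!$ appear in the left-adjoint square, and deriving that square from the strictly commuting right-adjoint square --- are indeed the only places where care is needed.
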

\begin{proof}
{\em Item a):} Let $\AAA\in \FT\big(\ovr{\DD},\P^{(r_1,r_2)}\big)^{j\mathrm{{-}loc}}$ 
be any $j$-local field theory of type $\P^{(r_1,r_2)}$,
i.e.\  $\epsilon_\AAA^{} : j_!\,j^\ast(\AAA)\to \AAA$ is an isomorphism. The claim is that the field theory 
$(\phi^\ast)^!(\AAA)\in \FT\big(\ovr{\DD},\Q^{(s_1,s_2)}\big)$
of type $\Q^{(s_1,s_2)}$ is $j$-local as well, i.e.\ 
$\epsilon_{(\phi^\ast)^!(\AAA)}^{} : j_!\,j^\ast(\phi^\ast)^!(\AAA)\to (\phi^\ast)^!(\AAA)$
is an isomorphism.  This follows from the commutative diagram
\begin{flalign}\label{eqn:jpreservation}
\xymatrix@C=3em{
j_!\,j^\ast\,(\phi^\ast)^!(\AAA) \ar[rr]^-{\epsilon_{(\phi^\ast)^!(\AAA)}^{}} && (\phi^\ast)^!(\AAA)\\
\ar[u]_-{\cong}^-{j_!\,j^\ast\,(\phi^\ast)^!\epsilon_\AAA^{}} j_!\,j^\ast\,(\phi^\ast)^!\, j_!\, j^\ast(\AAA) \ar[rr]^-{\epsilon_{(\phi^\ast)^!\,j_!\, j^\ast(\AAA)}^{}} && (\phi^\ast)^!\, j_!\,j^\ast(\AAA) \ar[u]^-{\cong}_-{(\phi^\ast)^! \epsilon_\AAA^{}}\\
\ar[u]^-{\cong} j_!\,j^\ast\, j_! \, (\phi^\ast)^!\, j^\ast(\AAA) \ar[rr]^-{\epsilon_{j_!\,(\phi^\ast)^!\, j^\ast(\AAA)}^{}} && j_!\,(\phi^\ast)^!\, j^\ast(\AAA) \ar[u]_-{\cong}\\
\ar[u]_\cong^-{j_! \eta_{(\phi^\ast)^!\, j^\ast(\AAA)}} j_!\,(\phi^\ast)^!\, j^\ast(\AAA)\ar@{=}[rru]&&
}
\end{flalign}
where isomorphisms are indicated by $\cong$. In more detail, the top square commutes by naturality of
the counit and the vertical arrows are isomorphisms because $\AAA$ is $j$-local. The middle square
commutes because of \eqref{eqn:generalsquareadjunctions}. 
The bottom triangle is the triangle identity for the adjunction and the unit (vertical arrow) is an isomorphism
because of Proposition \ref{prop:extension}.
\sk

{\em Item b):} This is immediate because $(\phi^\ast)_\ast = \phi^\ast\circ(-)$ is given by
restricting the pushforward functor for functor categories and every functor 
$\phi^\ast$ preserves isomorphisms.
\end{proof}

Let us emphasize that the result in Proposition \ref{prop:preservation} 
is asymmetric in the sense that $j$-local field theories
are preserved by the {\em left} adjoints $(\phi^\ast)^!$ and
$W$-constant field theories are preserved by the {\em right}
adjoints $(\phi^\ast)_\ast$. 
The opposite preservation properties do not hold true in general,
however we would like to note the following special case in which there exists
a further preservation result. This will become relevant in Section \ref{sec:quantization} below.
\begin{propo}\label{prop:specialpreservation}
Let $\phi : \P^{(r_1,r_2)}\to \Q^{(s_1,s_2)}$ be an $\Op_{\{\ast\}}^{2\mathrm{pt}}(\MM)$-morphism 
and $W\subseteq \mathrm{Mor}\,\CC$ a subset. Suppose that the left adjoint functor
$(\phi^\ast)^! : \FT(\ovr{\CC},\P^{(r_1,r_2)})\to \FT(\ovr{\CC},\Q^{(s_1,s_2)})$
is (naturally isomorphic to) the restriction to the categories of field theories of the pushforward
functor for functor categories
\begin{flalign}
(\phi_!)_\ast := \phi_! \circ (-)\, :\, {\Alg_\P}^{\CC} ~\longrightarrow ~{\Alg_\Q}^{\CC}\quad,
\end{flalign}
where the adjunction $\phi_! : \Alg_\P \rightleftarrows \Alg_\Q : \phi^\ast$
corresponds to the single-colored 
operad morphism $\phi : \P\to\Q$. Then the left adjoint functor 
$(\phi^\ast)^! : \FT(\ovr{\CC},\P^{(r_1,r_2)})\to \FT(\ovr{\CC},\Q^{(s_1,s_2)})$
preserves $W$-constant field theories.
\end{propo}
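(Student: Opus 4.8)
The plan is to reduce the statement to the same elementary fact that drives item b) of Proposition \ref{prop:preservation}: post-composition with a fixed functor preserves isomorphisms componentwise. By the standing hypothesis, the left adjoint $(\phi^\ast)^!$ is naturally isomorphic to $(\phi_!)_\ast=\phi_!\circ(-)$, so for every field theory $\AAA\in\FT(\ovr{\CC},\P^{(r_1,r_2)})$, with underlying functor $\AAA:\CC\to\Alg_\P$, there is a natural isomorphism
\begin{flalign}
(\phi^\ast)^!(\AAA)~\cong~\phi_!\circ\AAA
\end{flalign}
of functors $\CC\to\Alg_\Q$. The whole argument then rests on exploiting this identification.

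First I would fix a $W$-constant field theory $\AAA\in\FT(\ovr{\CC},\P^{(r_1,r_2)})^{W\mathrm{{-}const}}$, so that $\AAA(f):\AAA(c)\to\AAA(c^\prime)$ is an $\Alg_\P$-isomorphism for every $(f:c\to c^\prime)\in W$. Since $\phi_!:\Alg_\P\to\Alg_\Q$ is a functor it preserves isomorphisms, whence $(\phi_!\circ\AAA)(f)=\phi_!\big(\AAA(f)\big)$ is an $\Alg_\Q$-isomorphism for each such $f$; thus $\phi_!\circ\AAA$ is $W$-constant in the sense of Definition \ref{def:Wconstant}. It then remains to transport this property along the natural isomorphism above. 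Here one uses that $W$-constancy is invariant under natural isomorphism: if $\theta:F\Rightarrow G$ is a natural isomorphism of functors $\CC\to\Alg_\Q$, naturality gives $G(f)=\theta_{c^\prime}\,F(f)\,\theta_c^{-1}$ for every $f:c\to c^\prime$, so $G(f)$ is an isomorphism exactly when $F(f)$ is. Taking $F=\phi_!\circ\AAA$ and $G=(\phi^\ast)^!(\AAA)$ shows that $(\phi^\ast)^!(\AAA)$ is $W$-constant, as desired.

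I do not anticipate a genuine obstacle, since the entire content is carried by the hypothesis that $(\phi^\ast)^!$ is computed by mere post-composition with $\phi_!$. Without it, the left adjoint would in general be a nontrivial operadic left Kan extension that need not send a componentwise isomorphism to a componentwise isomorphism, which is precisely why the opposite preservation fails in general (cf.\ the asymmetry emphasized after Proposition \ref{prop:preservation}). The only mild point to check is that the restriction of $(\phi_!)_\ast$ indeed lands in $\FT(\ovr{\CC},\Q^{(s_1,s_2)})$, but this is already built into the hypothesis that it is naturally isomorphic to $(\phi^\ast)^!$.
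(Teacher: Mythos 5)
Your proof is correct and follows exactly the paper's own (very terse) argument: the hypothesis gives the natural isomorphism $(\phi^\ast)^!\cong\phi_!\circ(-)$, and since every functor $\phi_!$ preserves isomorphisms, $W$-constancy is preserved componentwise. Your additional remark that $W$-constancy is invariant under natural isomorphism of functors is a detail the paper leaves implicit, but it is the same route.
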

\begin{proof}
This is immediate because by hypothesis there is a natural isomorphism 
$(\phi^\ast)^! \cong \phi_!\circ(-)$ and every functor $\phi_!$ preserves isomorphisms.
\end{proof}

We conclude this section with a technical lemma that provides a criterion
to detect whether the hypotheses of Proposition \ref{prop:specialpreservation}
are fulfilled. Recall from Definition \ref{def:FToperad} that there
exists a natural projection $\Op_{\CC_0}(\MM)$-morphism
$\pi : \P_\CC\to \P^{(r_1,r_2)}_{\ovr{\CC}}$ from our auxiliary 
operads to the field theory operads. Given any $\Op_{\{\ast\}}^{2\mathrm{pt}}(\MM)$-morphism 
$\phi : \P^{(r_1,r_2)}\to \Q^{(s_1,s_2)}$, this yields the square of adjunctions
\begin{flalign}
\xymatrix@C=4em{
  \ar@<0.5ex>[d]^-{\pi^\ast} \FT\big(\ovr{\CC},\P^{(r_1,r_2)}\big) \ar@<0.5ex>[r]^-{(\phi^\ast)^!}~&~ \ar@<0.5ex>[l]^-{(\phi^\ast)_\ast} \FT\big(\ovr{\CC},\Q^{(s_1,s_2)}\big) \ar@<0.5ex>[d]^-{\pi^\ast}   \\
\ar@<0.5ex>[u]^-{\pi_!} {\Alg_\P}^\CC \ar@<0.5ex>[r]^-{(\phi_!)_\ast} ~&~  \ar@<0.5ex>[l]^-{(\phi^\ast)_\ast} {\Alg_\Q}^\CC \ar@<0.5ex>[u]^-{\pi_!}
 }
\end{flalign}
in which the square formed by the right adjoints commutes, i.e.\ 
$(\phi^\ast)_\ast~\pi^\ast \, =\,  \pi^\ast~(\phi^\ast)_\ast$,
and hence the square formed by the left adjoints commutes (up to a unique natural isomorphism), i.e.\
$(\phi^\ast)^!~\pi_!\,\cong\, \pi_!~(\phi_!)_\ast$. Notice that the vertical adjunctions
exhibit the field theory categories as full reflective subcategories of the functor categories.
An immediate consequence is the following
\begin{lem}\label{lem:detection}
If the functor $(\phi_!)_\ast\,\pi^\ast : \FT\big(\ovr{\CC},\P^{(r_1,r_2)}\big)  \to {\Alg_\Q}^\CC$
factors through the full reflective subcategory $\FT\big(\ovr{\CC},\Q^{(s_1,s_2)}\big) \subseteq {\Alg_\Q}^\CC$,
then the left adjoint  $(\phi^\ast)^! : \FT\big(\ovr{\CC},\P^{(r_1,r_2)}\big) \to \FT\big(\ovr{\CC},\Q^{(s_1,s_2)}\big)$
is (naturally isomorphic to) the restriction to the categories of field theories of 
the pushforward functor $(\phi_!)_\ast : {\Alg_\P}^\CC \to {\Alg_\Q}^\CC$.
\end{lem}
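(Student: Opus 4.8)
The plan is to deduce the lemma formally from the two commuting squares displayed immediately above its statement, together with the reflectivity of the vertical adjunctions. Write $\Phi : \FT(\ovr{\CC},\P^{(r_1,r_2)}) \to \FT(\ovr{\CC},\Q^{(s_1,s_2)})$ for the functor supplied by the hypothesis, namely the factorization of $(\phi_!)_\ast\,\pi^\ast$ through the full subcategory $\FT(\ovr{\CC},\Q^{(s_1,s_2)}) \subseteq {\Alg_\Q}^\CC$. Because $\pi^\ast$ is the inclusion of a \emph{full} subcategory, this factorization is a strict identity of functors $\pi^\ast\,\Phi = (\phi_!)_\ast\,\pi^\ast$, and $\Phi$ is thereby uniquely determined. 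The goal is to construct a natural isomorphism $(\phi^\ast)^! \cong \Phi$.

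First I would record the standard fact that in each vertical adjunction $\pi_! \dashv \pi^\ast$ the right adjoint $\pi^\ast$ is fully faithful, since it is the inclusion of a full reflective subcategory; hence the counit $\pi_!\,\pi^\ast \to \id$ is a natural isomorphism, both on $\FT(\ovr{\CC},\P^{(r_1,r_2)})$ and on $\FT(\ovr{\CC},\Q^{(s_1,s_2)})$.

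The key step is to whisker, on the right by $\pi^\ast$, the natural isomorphism of left adjoints $(\phi^\ast)^!\,\pi_! \cong \pi_!\,(\phi_!)_\ast$ established in the discussion preceding the lemma. This produces a natural isomorphism
\begin{flalign}
(\phi^\ast)^!\,\pi_!\,\pi^\ast ~\cong~ \pi_!\,(\phi_!)_\ast\,\pi^\ast
\end{flalign}
of functors $\FT(\ovr{\CC},\P^{(r_1,r_2)}) \to \FT(\ovr{\CC},\Q^{(s_1,s_2)})$. On the left-hand side the counit isomorphism $\pi_!\,\pi^\ast \cong \id$ on the $\P$-side collapses it to $(\phi^\ast)^!$. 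On the right-hand side I would insert the defining identity $(\phi_!)_\ast\,\pi^\ast = \pi^\ast\,\Phi$ and then use the counit isomorphism $\pi_!\,\pi^\ast \cong \id$ on the $\Q$-side to collapse $\pi_!\,\pi^\ast\,\Phi$ to $\Phi$. Composing these isomorphisms yields $(\phi^\ast)^! \cong \Phi$, which is exactly the assertion that $(\phi^\ast)^!$ is naturally isomorphic to the restriction of the pushforward functor $(\phi_!)_\ast$.

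I expect no substantial obstacle: once the reflective-subcategory counit isomorphisms are noted, the argument is a short diagram chase, and all of its ingredients—the commuting square of right adjoints, the induced isomorphism of left adjoints, and the reflectivity of the vertical adjunctions—are already in hand from the material preceding the lemma. The only point deserving a moment's care is that the factorization $\pi^\ast\,\Phi = (\phi_!)_\ast\,\pi^\ast$ holds on the nose rather than merely up to isomorphism, which is automatic from fullness of the subcategory inclusion.
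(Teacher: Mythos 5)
Your proof is correct and takes essentially the same route as the paper: the paper states this lemma as an ``immediate consequence'' of exactly the ingredients you use, namely the strictly commuting square of right adjoints, the induced natural isomorphism $(\phi^\ast)^!\,\pi_!\cong\pi_!\,(\phi_!)_\ast$ of left adjoints, and the fact that the vertical adjunctions exhibit the field theory categories as full reflective subcategories (so both counits $\pi_!\,\pi^\ast\to\id$ are natural isomorphisms). Your whiskering argument simply spells out the details that the paper leaves implicit, including the correct observation that fullness of the inclusion $\pi^\ast$ makes the factorization $\pi^\ast\,\Phi=(\phi_!)_\ast\,\pi^\ast$ hold on the nose.
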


%%%%%%%%%%%%%%%%%%%%%%%%%%%%%%%%%%%%%%%%%%%%%%%%
%%%%%%%%%%%%%%%%%%%%%%%%%%%%%%%%%%%%%%%%%%%%%%%%

\section{\label{sec:quantization}Linear quantization adjunction}
Throughout this section we assume that the underlying bicomplete closed
symmetric monoidal category $\MM$ is additive.
Recalling Example \ref{ex:LCQFT} and also Remark \ref{rem:LCQFTviacommutator}, 
we define the category of {\em quantum field theories} on an orthogonal
category $\ovr{\CC}$ by
\begin{flalign}
\QFT(\ovr{\CC}) \,:=\, \FT\big(\ovr{\CC}, \mathsf{As}^{([\cdot,\cdot],0)}\big) \quad.
\end{flalign}
Recalling further Example \ref{ex:LinearFT}, we define the category of
{\em linear field theories} on $\ovr{\CC}$ by
\begin{flalign}
\LFT(\ovr{\CC})\,:=\, \FT\big(\ovr{\CC}, \mathsf{uLie}^{([\cdot,\cdot],0)}\big) \quad.
\end{flalign}
We define an $\Op_{\{\ast\}}(\MM)$-morphism 
\begin{flalign}\label{eqn:phiLieAss}
\phi \, :\,  \mathsf{uLie} ~\longrightarrow~ \mathsf{As}~,~~
\begin{cases} 
~~\eta ~~\longmapsto~ \eta~~,\\
[\cdot,\cdot] ~\longmapsto~ \mu - \mu^\op~~,
\end{cases}
\end{flalign}
which one can confirm is well-defined by using the relations of 
the associative operad (see Example \ref{ex:Ass}) and the ones
of the unital Lie operad (see Example \ref{ex:OpuLie}).
It is evident that $\phi  : \mathsf{uLie}^{([\cdot,\cdot],0)} \to\mathsf{As}^{([\cdot,\cdot],0)}$
defines an $\Op_{\{\ast\}}^{2\mathrm{pt}}(\MM)$-morphism
in the sense of Definition \ref{def:bipointed}. By \eqref{eqn:phiadjunction}
this induces an adjunction between the category of linear field theories
and the category of quantum field theories, which we shall denote by
\begin{flalign}\label{eqn:QUadjunction}
\xymatrix{
(\phi^\ast)^! = \Qlin \,:\, \LFT(\ovr{\CC}) ~\ar@<0.5ex>[r]&\ar@<0.5ex>[l]  ~\QFT(\ovr{\CC})  \,:\,  \Ulin = (\phi^\ast)_\ast
}\quad.
\end{flalign}
The aim of this section is to study this adjunction in detail and in particular to show that
the left adjoint $\Qlin$ admits an interpretation as a linear quantization functor.
\sk

Let us first provide an explicit description of the right adjoint functor
$\Ulin = (\phi^\ast)_\ast$. Note that the functor 
$\phi^\ast : \Alg_{\mathsf{As}} \to \Alg_{\mathsf{uLie}}$ from associative and unital 
algebras to unital Lie algebras is very explicit. It assigns to any
$(A,\mu_A,\eta_A)\in \Alg_{\mathsf{As}}$ the unital Lie algebra
$\phi^\ast(A,\mu_A,\eta_A)  = (A, \mu_A-\mu_A^\op ,\eta_A)\in \Alg_{\mathsf{uLie}}$, 
where the Lie bracket is given by the commutator. 
The corresponding pushforward functor 
$\Ulin = (\phi^\ast)_\ast : \QFT(\ovr{\CC}) \to \LFT(\ovr{\CC})$
carries out this construction object-wise on $\CC$. Concretely, for 
$\big(\AAA: \CC\to \Alg_{\mathsf{As}}\big) \in \QFT(\ovr{\CC})$,
the functor underlying $\Ulin(\AAA)\in\LFT(\ovr{\CC})$ is 
given by $\Ulin(\AAA)(c) = \phi^\ast\big(\AAA(c)\big) \in \Alg_{\mathsf{uLie}}$, 
for all $c\in \CC$.
\sk

We now provide an explicit description of the left adjoint functor $\Qlin$
in \eqref{eqn:QUadjunction}. Our strategy is to analyze the pushforward functor 
$(\phi_!)_\ast : {\Alg_{\mathsf{uLie}}}^\CC \to{\Alg_{\mathsf{As}}}^\CC$
for the functor categories and to prove that it satisfies the criterion of Lemma \ref{lem:detection}.
As a consequence of this lemma, the restriction to the categories of field theories
of the pushforward functor $(\phi_!)_\ast$ defines a model for the left adjoint functor $\Qlin$. 
\sk

Let us describe first the left adjoint functor
of the adjunction $\phi_! : \Alg_{\mathsf{uLie}} \rightleftarrows \Alg_{\mathsf{As}}:\phi^\ast$
between algebras over single-colored operads. The following construction, which 
we will call the {\em unital universal enveloping algebra construction}, defines a 
model for the left adjoint $\phi_! $.
Let $V\in\Alg_{\mathsf{uLie}}$ be any unital Lie algebra,
with Lie bracket  $[\cdot,\cdot] : V\otimes V\to V$  and unit $\eta : I\to V$.
As the first step, we form the usual tensor algebra 
$T^{\otimes} V := \bigoplus_{n=0}^{\infty} V^{\otimes n} \in\Alg_{\mathsf{As}}$,
i.e.\ the free $\mathsf{As}$-algebra of the underlying object $V\in\MM$,
with multiplication $\mu_\otimes : T^{\otimes} V \otimes T^{\otimes} V\to T^{\otimes} V$
and unit $\eta_\otimes : I \to T^{\otimes} V$. We then
consider the two parallel $\MM$-morphisms
\begin{subequations}\label{eqn:UEA}
\begin{flalign}
\xymatrix@C=4.5em{
V\otimes V \ar@<0.5ex>[rr]^-{q_1 \, \, :=\, \, (\mu_\otimes -\mu_\otimes^\op)~(\iota_1\otimes \iota_1)} 
 \ar@<-0.5ex>[rr]_-{ q_2 \,\,:=\,\,  \iota_1~[\cdot,\cdot]}~&&~ T^\otimes V 
}\quad,
\end{flalign}
where $\iota_1 : V\to T^{\otimes }V$ is the inclusion into the coproduct,
which compare the commutator of $T^{\otimes}V$ with the Lie bracket of $V$.
We form the corresponding coequalizer
\begin{flalign}
\xymatrix@C=3em{
T^\otimes(V\otimes V) \ar@<0.5ex>[r]^-{q_1} \ar@<-0.5ex>[r]_-{q_2}~&~ T^\otimes V 
\ar@{-->}[r]^-{\pi} ~&~ U^\otimes V
}
\end{flalign}
\end{subequations}
in $\Alg_{\mathsf{As}}$ and notice that $U^\otimes V$ is the universal enveloping algebra
of the underlying Lie algebra $(V,[\cdot,\cdot])\in\Alg_{\mathsf{Lie}}$. As the final step, 
we consider the two parallel $\MM$-morphisms
\begin{subequations}\label{eqn:uUEA}
\begin{flalign}
\xymatrix@C=2.5em{
I  \ar@<0.5ex>[rr]^-{ s_1\, \,:= \,\,\pi~\iota_1~\eta } \ar@<-0.5ex>[rr]_-{s_2\,\,:=\,\, \pi~\eta_\otimes }~&&~ U^\otimes V
}\quad,
\end{flalign}
which compare the unit of $V$ with the unit of $T^\otimes V$,
and form the corresponding coequalizer
\begin{flalign}
\xymatrix@C=3em{
T^\otimes(I) \ar@<0.5ex>[r]^-{s_1} \ar@<-0.5ex>[r]_-{s_2}~&~ U^\otimes V 
\ar@{-->}[r]^{\pi^\prime} ~&~ \phi_!(V)
}
\end{flalign}
\end{subequations}
in $\Alg_{\mathsf{As}}$. All of these constructions are clearly functorial.
\begin{lem}\label{lem:phi!LieAss}
The functor $\phi_! : \Alg_{\mathsf{uLie}}\to \Alg_{\mathsf{As}}$ described above 
is left adjoint to the functor $\phi^\ast : \Alg_{\mathsf{As}} \to \Alg_{\mathsf{uLie}}$.
\end{lem}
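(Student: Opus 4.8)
The plan is to establish a natural bijection $\Hom_{\Alg_{\mathsf{As}}}(\phi_!(V), A) \cong \Hom_{\Alg_{\mathsf{uLie}}}(V, \phi^\ast(A))$, for all $V\in\Alg_{\mathsf{uLie}}$ and $A\in\Alg_{\mathsf{As}}$, by chaining together the three universal properties from which $\phi_!(V)$ is assembled. The engine is the free-forget adjunction $T^\otimes \dashv U_{\mathsf{As}}$ between $\MM$ and $\Alg_{\mathsf{As}}$, where $U_{\mathsf{As}} : \Alg_{\mathsf{As}} \to \MM$ denotes the forgetful functor: this already yields $\Hom_{\Alg_{\mathsf{As}}}(T^\otimes V, A) \cong \Hom_\MM(V, U_{\mathsf{As}}(A))$, and each subsequent coequalizer in the construction \eqref{eqn:UEA}--\eqref{eqn:uUEA} will cut this hom-set down to the subset of $\MM$-morphisms $f : V \to U_{\mathsf{As}}(A)$ satisfying one further compatibility condition. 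Naturality in both variables will be automatic, since every identification used is induced by a universal property.

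First I would unwind the coequalizer defining $U^\otimes V$. As it is a coequalizer in $\Alg_{\mathsf{As}}$ of the two algebra maps $T^\otimes(V\otimes V) \rightrightarrows T^\otimes V$ extending $q_1,q_2$, a morphism $U^\otimes V \to A$ is the same as an algebra map $g : T^\otimes V \to A$ coequalizing them. Transposing $g$ to the $\MM$-morphism $f := U_{\mathsf{As}}(g)\,\iota_1 : V \to U_{\mathsf{As}}(A)$ and using that $g$ intertwines $\mu_\otimes,\mu_\otimes^\op$ with $\mu_A,\mu_A^\op$, the coequalizing condition translates precisely into $(\mu_A - \mu_A^\op)(f\otimes f) = f\,[\cdot,\cdot]$. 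This says exactly that $f$ is a morphism of Lie algebras from the underlying $(V,[\cdot,\cdot])$ to the commutator Lie algebra underlying $\phi^\ast(A)$, which recovers the familiar universal property of the enveloping algebra.

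Second I would treat the coequalizer defining $\phi_!(V)$ from $U^\otimes V$ against the unit relations $s_1,s_2$ in the same way. A morphism $\phi_!(V) \to A$ is a morphism $h : U^\otimes V \to A$ coequalizing $s_1$ and $s_2$; under the identification just obtained, $h$ corresponds to a Lie morphism $f$ as above. Since $s_2 = \pi\,\eta_\otimes$ is the algebra unit of $U^\otimes V$ whereas $s_1 = \pi\,\iota_1\,\eta$ is the image of the unit $\eta : I \to V$, the equation $h\,s_1 = h\,s_2$ becomes $f\,\eta = \eta_A$, i.e.\ $f$ preserves units. Hence $f$ is precisely a morphism in $\Alg_{\mathsf{uLie}}$ from $V$ to $\phi^\ast(A)$, and composing the two bijections delivers the desired natural isomorphism of hom-sets.

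The computations are all routine once the bookkeeping is fixed; the only point requiring genuine care, and the main obstacle, is the transposition across $T^\otimes \dashv U_{\mathsf{As}}$ at each coequalizer: one must check that coequalizing the extended algebra maps $\widetilde{q}_i$ (resp.\ $\widetilde{s}_i$) is equivalent to the original $\MM$-level relations $q_i$ (resp.\ $s_i$) post-composed with $U_{\mathsf{As}}(g)$ (resp.\ $U_{\mathsf{As}}(h)$). I would verify this compatibility once, carefully, after which the Lie-bracket condition and the unit condition both fall out mechanically.
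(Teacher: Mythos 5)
Your proof is correct and takes essentially the same route as the paper: there, too, the lemma is proved by exhibiting the natural bijection $\Hom_{\Alg_{\mathsf{As}}}(\phi_!(V),A)\cong\Hom_{\Alg_{\mathsf{uLie}}}(V,\phi^\ast(A))$, sending $\kappa\mapsto \kappa\,\pi^\prime\,\pi\,\iota_1$ in one direction and, in the other, extending a unital Lie morphism $\rho : V\to\phi^\ast(A)$ to an algebra map on $T^\otimes V$ and letting it descend through the two coequalizers \eqref{eqn:UEA} and \eqref{eqn:uUEA}. Your write-up simply makes explicit the universal-property bookkeeping (the free-forget adjunction $T^\otimes\dashv U_{\mathsf{As}}$ and the two coequalizer conditions translating into the bracket and unit conditions) that the paper's proof leaves as an easy verification.
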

\begin{proof}
It is easy to construct a natural bijection
$\Hom_{\Alg_{\mathsf{As}}}(\phi_!(V),A)\cong \Hom_{\Alg_{\mathsf{uLie}}}(V,\phi^\ast(A))$,
for all $V\in \Alg_{\mathsf{uLie}}$ and $A\in\Alg_{\mathsf{As}}$. Concretely,
given $\kappa : \phi_!(V)\to A$ in $\Alg_{\mathsf{As}}$, 
then $\kappa \, \pi^\prime\,\pi\,\iota_1 : V\to\phi^\ast(A)$ defines
an $\Alg_{\mathsf{uLie}}$-morphism. On the other hand, given
$\rho : V\to \phi^\ast(A)$ in $\Alg_{\mathsf{uLie}}$, then the canonical 
extension to an $\Alg_{\mathsf{As}}$-morphism  $\rho : T^\otimes V\to A$ 
on the tensor algebra descends to the quotients in 
\eqref{eqn:UEA} and \eqref{eqn:uUEA}. 
\end{proof}
\begin{propo}
For every linear field theory $(\BBB : \CC\to \Alg_{\mathsf{uLie}})\in\LFT(\ovr{\CC})$,
the functor $(\phi_!)_\ast(\BBB) : \CC\to\Alg_{\mathsf{As}}$ 
defines a quantum field theory, i.e.\ $(\phi_!)_\ast(\BBB)\in\QFT(\ovr{\CC})$.
\end{propo}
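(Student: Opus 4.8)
The plan is to verify the Einstein causality axiom for $(\phi_!)_\ast(\BBB)$ in its commutator formulation (Remark~\ref{rem:LCQFTviacommutator}): for every orthogonal pair $(f_1 : c_1\to c)\perp(f_2 : c_2\to c)$ I must show that the commutator of the images of $(\phi_!)_\ast(\BBB)(f_1)$ and $(\phi_!)_\ast(\BBB)(f_2)$ vanishes. Abbreviating $V_i := \BBB(c_i)$, $V := \BBB(c)$, $A := \phi_!(V) = (\phi_!)_\ast(\BBB)(c)$ and $g_i := \phi_!(\BBB(f_i)) : \phi_!(V_i)\to A$, the goal is the identity $[\cdot,\cdot]_A\circ(g_1\otimes g_2) = 0$, where $[\cdot,\cdot]_A = \mu_A - \mu_A^\op$ is the commutator of the associative algebra $A$. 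The two structural facts I will use are that the canonical map $\iota_V := \pi^\prime\,\pi\,\iota_1 : V\to \phi^\ast(\phi_!(V))$, i.e.\ the unit of the adjunction $\phi_!\dashv\phi^\ast$ from Lemma~\ref{lem:phi!LieAss}, is a $\mathsf{uLie}$-morphism and hence satisfies $[\cdot,\cdot]_A\circ(\iota_V\otimes\iota_V) = \iota_V\circ[\cdot,\cdot]_V$, and that it is natural, so that $g_i\circ\iota_{V_i} = \iota_V\circ\BBB(f_i)$.

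First I would reduce the claim to the generating objects. By the construction~\eqref{eqn:UEA}--\eqref{eqn:uUEA}, the algebra $\phi_!(V_i)$ is a quotient of the tensor algebra $T^\otimes V_i$; writing $p_i : T^\otimes V_i\to\phi_!(V_i)$ for the canonical quotient algebra epimorphism (so $p_i\circ\iota_1 = \iota_{V_i}$) and using that $-\otimes X$ and $X\otimes-$ are left adjoints in the closed monoidal category $\MM$ and therefore preserve epimorphisms, the morphism $p_1\otimes p_2$ is epic. Hence it suffices to prove $[\cdot,\cdot]_A\circ(F_1\otimes F_2) = 0$ for the algebra morphisms $F_i := g_i\circ p_i : T^\otimes V_i\to A$. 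Their restrictions to the generators are $h_i := F_i\circ\iota_1 = g_i\circ\iota_{V_i} = \iota_V\circ\BBB(f_i)$, and on generators the computation is immediate:
\begin{align*}
[\cdot,\cdot]_A\circ(h_1\otimes h_2)
&= [\cdot,\cdot]_A\circ(\iota_V\otimes\iota_V)\circ(\BBB(f_1)\otimes\BBB(f_2))\\
&= \iota_V\circ[\cdot,\cdot]_V\circ(\BBB(f_1)\otimes\BBB(f_2)) = 0,
\end{align*}
the last equality being precisely the Einstein causality axiom~\eqref{eqn:LieEinsteinCausality} for the linear field theory $\BBB$.

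The remaining and genuinely technical step is to propagate the vanishing of the commutator from the generators to the whole free algebras, i.e.\ to pass from $[\cdot,\cdot]_A\circ(h_1\otimes h_2) = 0$ to $[\cdot,\cdot]_A\circ(F_1\otimes F_2) = 0$. The mechanism is the biderivation property of the commutator, which in the left slot reads $[\cdot,\cdot]_A\circ(\mu_A\otimes\id) = \mu_A\circ(\id\otimes[\cdot,\cdot]_A) + \mu_A\circ([\cdot,\cdot]_A\otimes\id)\circ(\id\otimes\tau)$, and symmetrically in the right slot. I would exploit the grading $T^\otimes V_1 = \bigoplus_{n\geq 0} V_1^{\otimes n}$: a morphism out of $T^\otimes V_1$ is determined by its restrictions $F_1^{(n)}$ to the summands, which satisfy $F_1^{(n)} = \mu_A\circ(h_1\otimes F_1^{(n-1)})$ since $F_1$ is an algebra map. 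An induction on $n$ using the left-slot identity then expresses $[\cdot,\cdot]_A\circ(F_1^{(n)}\otimes h_2)$ through the two vanishing pieces $[\cdot,\cdot]_A\circ(F_1^{(n-1)}\otimes h_2)$ (inductive hypothesis) and $[\cdot,\cdot]_A\circ(h_1\otimes h_2)$ (established above), with base cases $n=0$ (where $F_1^{(0)} = \eta_A$ is central) and $n=1$. This yields $[\cdot,\cdot]_A\circ(F_1\otimes h_2) = 0$, and a symmetric induction in the right slot, with $F_1$ now fixed, upgrades this to $[\cdot,\cdot]_A\circ(F_1\otimes F_2) = 0$. I expect this grading-and-biderivation bookkeeping---carrying the derivation identities through all tensor powers categorically, without appealing to elements---to be the only delicate part; the rest follows formally from the adjunction unit being a Lie morphism and from causality of $\BBB$.
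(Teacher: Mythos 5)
Your proof is correct and follows essentially the same route as the paper's: both arguments reduce the commutator on $\phi_!\,\BBB(c_1)\otimes\phi_!\,\BBB(c_2)$ to commutators of generators via the Leibniz rule of the commutator in both slots, identify those generator commutators with the Lie bracket through the relations \eqref{eqn:UEA}, and then invoke the causality property \eqref{eqn:LieEinsteinCausality} of $\BBB$. Your write-up merely makes explicit (the epimorphism reduction to the tensor algebras and the two inductions over tensor powers) what the paper compresses into the phrase ``expand the commutator of polynomials to a sum of terms containing as a factor the commutator of generators''.
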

\begin{proof}
By hypothesis, given any orthogonal pair $(f_1 :c_1\to c)\perp (f_2:c_2\to c)$ in $\ovr{\CC}$,
the induced Lie bracket $ [ \BBB(f_1)(-), \BBB(f_2)(-)  ]_c : \BBB(c_1)\otimes\BBB(c_2) \to\BBB(c)$
is the zero map. We have to prove that the induced commutator
$[ \phi_!\,\BBB(f_1)(-), \phi_!\,\BBB(f_2)(-) ]_c : \phi_!\,\BBB(c_1)\otimes
\phi_!\, \BBB(c_2) \to \phi_!\, \BBB(c)$ associated to the functor $(\phi_!)_\ast(\BBB) = \phi_!\, \BBB: 
\CC\to\Alg_{\mathsf{As}}$ is the zero map too. This is an immediate consequence
of our definition of the unital universal enveloping algebra (see \eqref{eqn:UEA} and \eqref{eqn:uUEA})
and the fact that the commutator bracket satisfies the Leibniz rule in both entries. 
The latter property is used to expand the commutator of polynomials to a sum of terms
containing as a factor the commutator of generators, which is identified via \eqref{eqn:UEA} with the Lie bracket.
\end{proof}

As a consequence of Lemma \ref{lem:detection}, we obtain
\begin{cor}\label{cor:pointwisequantization}
The restriction to the categories of field theories
of $(\phi_!)_\ast: {\Alg_{\mathsf{uLie}}}^\CC\to{\Alg_{\mathsf{As}}}^{\CC}$ 
is a model for the left adjoint functor $\Qlin : \LFT(\ovr{\CC})\to\QFT(\ovr{\CC})$ in \eqref{eqn:QUadjunction}.
\end{cor}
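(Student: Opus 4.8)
The plan is to observe that Corollary~\ref{cor:pointwisequantization} is exactly the conclusion of Lemma~\ref{lem:detection}, specialized to the data of the present section. Taking $\P^{(r_1,r_2)} = \mathsf{uLie}^{([\cdot,\cdot],0)}$, $\Q^{(s_1,s_2)} = \mathsf{As}^{([\cdot,\cdot],0)}$ and the $\Op_{\{\ast\}}^{2\mathrm{pt}}(\MM)$-morphism $\phi$ of \eqref{eqn:phiLieAss}, the identifications $\FT(\ovr{\CC},\mathsf{uLie}^{([\cdot,\cdot],0)}) = \LFT(\ovr{\CC})$, $\FT(\ovr{\CC},\mathsf{As}^{([\cdot,\cdot],0)}) = \QFT(\ovr{\CC})$ and $(\phi^\ast)^! = \Qlin$ turn the lemma's conclusion into precisely the assertion that $\Qlin$ is modeled by the restriction of $(\phi_!)_\ast$ to the categories of field theories. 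So the whole task reduces to verifying the hypothesis of Lemma~\ref{lem:detection}.

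To this end I would first make the composite $(\phi_!)_\ast\,\pi^\ast$ explicit. The projection $\pi : \mathsf{uLie}_\CC \to \mathsf{uLie}^{([\cdot,\cdot],0)}_{\ovr{\CC}}$ from Definition~\ref{def:FToperad} induces, under the isomorphisms of Lemma~\ref{lem:PCalgebras} and Theorem~\ref{theo:FTcatiso}, the right adjoint $\pi^\ast$, which is nothing but the fully faithful inclusion of the full reflective subcategory $\LFT(\ovr{\CC}) \subseteq {\Alg_{\mathsf{uLie}}}^\CC$. Consequently $(\phi_!)_\ast\,\pi^\ast$ sends a linear field theory $\BBB$, regarded as a plain functor $\CC \to \Alg_{\mathsf{uLie}}$, to its object-wise postcomposition with the unital universal enveloping algebra functor $\phi_!$ of Lemma~\ref{lem:phi!LieAss}.

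The key step --- and the only one carrying genuine content --- has already been established in the Proposition immediately preceding the corollary: for every $\BBB \in \LFT(\ovr{\CC})$ the functor $(\phi_!)_\ast(\BBB)$ satisfies the commutator form of the Einstein causality condition, i.e.\ the map \eqref{eqn:CommutatorEinsteinCausality} vanishes on orthogonal pairs, and hence $(\phi_!)_\ast(\BBB)$ lies in $\QFT(\ovr{\CC})$. In the terminology of Lemma~\ref{lem:detection} this is exactly the statement that $(\phi_!)_\ast\,\pi^\ast$ factors through the full reflective subcategory $\QFT(\ovr{\CC}) \subseteq {\Alg_{\mathsf{As}}}^\CC$. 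Applying the lemma then delivers a natural isomorphism between $\Qlin$ and the restriction of $(\phi_!)_\ast$ to the categories of field theories, which is the claim.

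I do not expect any obstacle at the level of the corollary itself, since it is a purely formal consequence of Lemma~\ref{lem:detection}. The substantive point lies upstream, in the preceding Proposition, where preservation of causality must be checked; there the main idea is that the commutator in $\phi_!(\BBB)(c)$ obeys the Leibniz rule in both entries, so that a commutator of arbitrary elements expands into a sum of terms each containing as a factor a commutator of generators, and these are identified via \eqref{eqn:UEA} with the Lie bracket of $\BBB(c)$, which vanishes on orthogonal pairs by hypothesis.
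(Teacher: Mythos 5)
Your proposal is correct and follows exactly the paper's own route: the paper likewise obtains Corollary~\ref{cor:pointwisequantization} as an immediate consequence of Lemma~\ref{lem:detection}, with the hypothesis of that lemma supplied by the preceding Proposition showing that $(\phi_!)_\ast(\BBB)\in\QFT(\ovr{\CC})$ for every $\BBB\in\LFT(\ovr{\CC})$ (via the Leibniz-rule argument you describe). Your identification of $\pi^\ast$ as the fully faithful inclusion of the full reflective subcategory is also the implicit content of the discussion preceding Lemma~\ref{lem:detection}, so there is nothing to add.
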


\begin{rem}
Let us explain why $\Qlin : \LFT(\ovr{\CC})\to\QFT(\ovr{\CC})$ deserves the name
quantization functor. Suppose that $\BBB = H\, \LLL\in \LFT(\ovr{\CC})$ 
is the composition of a functor $\LLL : \CC\to \mathbf{PSymp}$ to 
the category of presymplectic vector spaces with the Heisenberg Lie algebra functor 
$H : \mathbf{PSymp}\to \Alg_{\mathsf{uLie}}$ as described in Example \ref{ex:LinearFT}.
It is easy to check that the composition $\phi_!\, H  : \mathbf{PSymp}\to \Alg_{\mathsf{As}}$
of the Heisenberg Lie algebra functor and the unital universal enveloping algebra functor
is naturally isomorphic to the usual (polynomial) CCR-algebra functor
$\mathfrak{CCR} :  \mathbf{PSymp}\to \Alg_{\mathsf{As}}$ that is used in
the quantization of linear field theories, see e.g.\ \cite{Baer,BGproc,BeniniDappiaggiHack}.
In particular, we obtain a natural isomorphism
$\Qlin\big(H\,\LLL\big) \cong \mathfrak{CCR}~ \LLL : \CC\to\Alg_{\mathsf{As}}$,
which means that our quantization prescription via $\Qlin$ is in this 
case equivalent to the ordinary CCR-algebra quantization of linear field theories.
\end{rem}

We would like to emphasize that our linear quantization functor
preserves both $j$-locality and $W$-constancy, i.e.\
it preserves descent and the time-slice axiom of field theories.
\begin{cor}\label{cor:quantizationpreserves}
\begin{itemize}
\item[a)] Let $j :\ovr{\CC}\to\ovr{\DD}$ be a full orthogonal subcategory. Then
the linear quantization functor $\Qlin : \LFT(\ovr{\DD})\to \QFT(\ovr{\DD})$
maps $j$-local linear field theories to $j$-local quantum field theories,
(see Definition \ref{def:jlocal}).

\item[b)] Let $\ovr{\CC}$ be an orthogonal category and $W\subseteq \mathrm{Mor}\,\CC$ a subset.
Then the linear quantization functor $\Qlin : \LFT(\ovr{\CC})\to \QFT(\ovr{\CC})$
maps $W$-constant linear field theories to $W$-constant quantum field theories,
(see Definition \ref{def:Wconstant}).
\end{itemize}
\end{cor}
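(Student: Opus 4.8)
The plan is to observe that both parts of the corollary are direct specializations of the general preservation results established in Section \ref{sec:adjunctions}, applied to the specific $\Op_{\{\ast\}}^{2\mathrm{pt}}(\MM)$-morphism $\phi : \mathsf{uLie}^{([\cdot,\cdot],0)} \to \mathsf{As}^{([\cdot,\cdot],0)}$ from \eqref{eqn:phiLieAss}. Recall that $\Qlin = (\phi^\ast)^!$ is by definition the left adjoint in the change-of-operad adjunction \eqref{eqn:QUadjunction} associated to $\phi$, so that both statements become instances of preservation properties of a left adjoint of the form $(\phi^\ast)^!$.

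For item a), I would simply invoke Proposition \ref{prop:preservation} a). That result asserts that, for any full orthogonal subcategory $j : \ovr{\CC} \to \ovr{\DD}$ and any $\Op_{\{\ast\}}^{2\mathrm{pt}}(\MM)$-morphism, the associated left adjoint functor $(\phi^\ast)^!$ preserves $j$-local field theories. Since $\Qlin$ is exactly such a left adjoint, the claim follows immediately with no further computation.

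For item b), the situation requires more care, since $W$-constancy is in general preserved only by the \emph{right} adjoint $(\phi^\ast)_\ast$ (Proposition \ref{prop:preservation} b)), and there is no reason for a generic left adjoint to preserve it. The key step is therefore to verify the hypothesis of the sharper Proposition \ref{prop:specialpreservation}, namely that $\Qlin = (\phi^\ast)^!$ is naturally isomorphic to the restriction to the field theory categories of the pushforward functor $(\phi_!)_\ast : {\Alg_{\mathsf{uLie}}}^\CC \to {\Alg_{\mathsf{As}}}^\CC$. But this is precisely the content of Corollary \ref{cor:pointwisequantization}, which identifies the abstract left adjoint $\Qlin$ with the pointwise unital universal enveloping algebra construction. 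With this hypothesis discharged, Proposition \ref{prop:specialpreservation} yields at once that $\Qlin$ preserves $W$-constant field theories, because the functor $\phi_! : \Alg_{\mathsf{uLie}} \to \Alg_{\mathsf{As}}$ sends isomorphisms to isomorphisms and is applied object-wise on $\CC$.

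I expect no genuine obstacle in this argument: the entire difficulty has been front-loaded into Corollary \ref{cor:pointwisequantization}, whose proof furnished the concrete pointwise model for $\Qlin$. The only conceptual point worth stressing is the asymmetry between the two items — $j$-locality (a descent/local-to-global property) is preserved by the left adjoint unconditionally, whereas $W$-constancy (the time-slice axiom) would in general be preserved only by right adjoints, and it is the special pointwise nature of linear quantization, recorded in Corollary \ref{cor:pointwisequantization}, that rescues its preservation under the left adjoint $\Qlin$ as well.
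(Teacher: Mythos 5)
Your proposal is correct and coincides with the paper's own proof: item a) is exactly an application of Proposition \ref{prop:preservation}~a), and item b) follows from Proposition \ref{prop:specialpreservation} once its hypothesis is verified via Corollary \ref{cor:pointwisequantization}, which is precisely how the paper argues. Your additional remark on the left/right-adjoint asymmetry is a fair gloss but not a new ingredient.
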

\begin{proof}
Item a) is a consequence of Proposition \ref{prop:preservation}
and item b) is a consequence of Proposition \ref{prop:specialpreservation} and 
Corollary \ref{cor:pointwisequantization}.
\end{proof}

\section{\label{sec:outlook}Towards the quantization of linear gauge theories}
The techniques we developed in this paper can be refined to the case where $\MM$
is a suitable {\em symmetric monoidal model category}. Let us recall that a model category
is a category that comes equipped with three distinguished classes
of morphisms -- called weak equivalences, fibrations and cofibrations --
that satisfy a list of axioms going back to Quillen, see e.g.\ \cite{Dwyer}
for a concise introduction. The main role is played by the weak equivalences,
which introduce a consistent concept of  ``two things being the same'' that is weaker than
the usual concept of categorical isomorphism. For example, the category $\MM = \Ch(\bbK)$
of (possibly unbounded) chain complexes of vector spaces over a field $\bbK$
may be endowed with a symmetric monoidal model category structure in which the
weak equivalences are quasi-isomorphisms, see e.g.\  \cite{Hovey}.
\sk

Model category theory plays an important role in the mathematical formulation
of (quantum) gauge theories. In particular, the `spaces' of fields in a gauge theory
are actually higher spaces called stacks, which may be formalized
within model category theory. We refer to e.g.\ \cite{Schreiber}
for the general framework and also to \cite{BSSStack} for the example of Yang-Mills theory.
Consequently, the observable `algebras' in a quantum gauge theory are actually
higher algebras, e.g.\ the differential graded algebras arising in the BRST/BV formalism.
We refer to \cite{Hollands,FredenhagenRejzner,FredenhagenRejzner2} for concrete
constructions within the BRST/BV formalism in algebraic quantum field theory, to
\cite{BeniniSchenkelWoikehomotopy} for the relevant model categorical perspective
and to \cite{Hawkins} for a related deformation theoretic point of view.
\sk

The aim of this last section is to refine our results for the linear quantization adjunction from 
Section \ref{sec:quantization} to the framework of model category theory.
This will provide a mathematically solid setup to quantize linear gauge theories 
to quantum gauge theories in a way that is consistent with the concept of weak equivalences.
As an explicit example, we discuss the quantization of linear Chern-Simons theory on oriented surfaces.
In order to simplify our analysis, we restrict 
ourselves to the case where $\MM=\Ch(\bbK)$ is the symmetric monoidal model 
category of chain complexes of vector spaces over a field $\bbK$ of characteristic zero, 
e.g.\ $\bbK=\bbC$ or $\bbK=\bbR$. 
In this section we shall freely use terminology and results from general
model category theory \cite{Dwyer,Hovey} and more specifically 
the model structures for colored operads and their algebras  \cite{HinichOriginal,Hinich}.
We refer to \cite{BeniniSchenkelWoikehomotopy,BeniniSchenkelReview} for a more gentle presentation of how these
techniques can be applied to $\Ch(\bbK)$-valued algebraic quantum field theory.

\subsection{Model structures on field theory categories}
Our first (immediate) result is that the categories $\FT(\ovr{\CC},\P^{(r_1,r_2)})$  
of field theories with values in $\MM = \Ch(\bbK)$
from Definition \ref{def:fieldtheory} are model categories, i.e.\ there 
exists a consistent concept of weak equivalences for $\Ch(\bbK)$-valued field theories.
Furthermore, the adjunctions in \eqref{eqn:FTadjunction} are compatible
with these model category structures in the sense that they are {\em Quillen 
adjunctions}.
\begin{propo}\label{prop:modelstructure}
Let $\ovr{\CC}$ be any orthogonal category and $\P^{(r_1,r_2)}\in \Op_{\{\ast\}}^{2\mathrm{pt}}(\Ch(\bbK))$ 
any bipointed single-colored operad. Define a $\FT(\ovr{\CC},\P^{(r_1,r_2)})$-morphism $\zeta : \AAA\to \BBB$
(i.e.\ a natural transformation of functors $\AAA,\BBB: \CC\to \Alg_{\P}$) to be 
\begin{itemize}
\item[(i)] a weak equivalence if the underlying $\Ch(\bbK)$-morphism of
each component $\zeta_c : \AAA(c)\to\BBB(c)$ is a quasi-isomorphism,
\item[(ii)] a fibration if the underlying $\Ch(\bbK)$-morphism of
each component $\zeta_c : \AAA(c)\to\BBB(c)$ is degree-wise surjective,
and 
\item[(iii)] a cofibration if it has the left-lifting property with respect to all acyclic fibrations.
\end{itemize}
These choices define a model structure on $\FT(\ovr{\CC},\P^{(r_1,r_2)})$.
\end{propo}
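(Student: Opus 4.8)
The plan is to obtain the model structure by transfer from chain complexes, exploiting the operadic reformulation of field theories. By Theorem \ref{theo:FTcatiso} there is a canonical isomorphism $\FT(\ovr{\CC},\P^{(r_1,r_2)}) \cong \Alg_{\P^{(r_1,r_2)}_{\ovr{\CC}}}$, so it suffices to construct the claimed model structure on the category of algebras over the colored operad $\P^{(r_1,r_2)}_{\ovr{\CC}} \in \Op_{\CC_0}(\Ch(\bbK))$. I would recall that $\Ch(\bbK)$ is a cofibrantly generated symmetric monoidal model category with quasi-isomorphisms as weak equivalences and degree-wise surjections as fibrations, and that the product category $\Ch(\bbK)^{\CC_0}$ inherits the projective model structure, in which weak equivalences and fibrations are detected color-wise. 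Under the isomorphism of Theorem \ref{theo:FTcatiso}, conditions (i) and (ii) assert exactly that $\zeta$ becomes a weak equivalence, respectively a fibration, after applying the forgetful functor $U : \Alg_{\P^{(r_1,r_2)}_{\ovr{\CC}}} \to \Ch(\bbK)^{\CC_0}$, while (iii) is the matching lifting property. Thus the asserted structure is precisely the one transferred along the free-forgetful adjunction $F \dashv U$, with generating (acyclic) cofibrations given by the free images of those of $\Ch(\bbK)^{\CC_0}$.

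First I would set up the transfer (Kan recognition theorem, see e.g.\ \cite{Hovey}). Since $\Ch(\bbK)$ is locally presentable, the category $\Alg_{\P^{(r_1,r_2)}_{\ovr{\CC}}}$ is complete and cocomplete, the free functor $F$ exists as the left adjoint to $U$, and the smallness and solution-set hypotheses are automatic. This reduces the problem to verifying the single acyclicity condition, namely that the forgetful functor sends every relative cell complex built from free images of the generating acyclic cofibrations to a quasi-isomorphism.

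The hard part will be this acyclicity condition, and it is exactly where the structure of $\Ch(\bbK)$ and the characteristic-zero hypothesis on $\bbK$ are indispensable. The free-algebra functor is assembled from the components of $\P^{(r_1,r_2)}_{\ovr{\CC}}$ together with symmetric-group coinvariants, and over a field of characteristic zero these coinvariants are exact and commute with homology, so that a pushout of a free acyclic generating cofibration remains a quasi-isomorphism. Rather than reproving this by hand, I would invoke Hinich's theorem on the existence of transferred model structures for algebras over (colored) operads valued in unbounded chain complexes over a characteristic-zero field \cite{HinichOriginal,Hinich}, which applies directly to $\P^{(r_1,r_2)}_{\ovr{\CC}}$. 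The explicit descriptions (i)--(iii) then read off from the construction of the transferred structure, completing the proof.
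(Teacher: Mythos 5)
Your proposal is correct and follows essentially the same route as the paper: identify $\FT(\ovr{\CC},\P^{(r_1,r_2)})$ with $\Alg_{\P^{(r_1,r_2)}_{\ovr{\CC}}}$ via Theorem \ref{theo:FTcatiso} and then invoke Hinich's admissibility results \cite{HinichOriginal,Hinich} for colored operads in $\Ch(\bbK)$ with $\mathrm{char}\,\bbK=0$. Your additional discussion of the transfer along the free--forgetful adjunction and the acyclicity condition is a faithful unpacking of what Hinich's theorem provides, not a different argument.
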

\begin{proof}
This is a consequence of Theorem \ref{theo:FTcatiso} and Hinich's
results \cite{HinichOriginal,Hinich}, which show that all colored
operads in $\Ch(\bbK)$ are admissible for $\bbK$ a field of characteristic zero.
\end{proof}
\begin{propo}\label{prop:Quillen}
Let $F : \ovr{\CC}\to\ovr{\DD}$ be any orthogonal functor
and $\phi : \P^{(r_1,r_2)}\to\Q^{(s_1,s_2)}$ any $\Op_{\{\ast\}}^{2\mathrm{pt}}(\Ch(\bbK))$-morphism.
Then the adjunction in \eqref{eqn:FTadjunction} is a Quillen adjunction
with respect to the model structures from Proposition \ref{prop:modelstructure}.
\end{propo}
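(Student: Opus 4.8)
The plan is to verify the Quillen adjunction condition on the right adjoint, which is the convenient choice here because both the fibrations and the weak equivalences of $\FT(\ovr{\CC},\P^{(r_1,r_2)})$ are, by definition of the model structure in Proposition \ref{prop:modelstructure}, detected component-wise on the underlying chain complexes. Concretely, it suffices to show that the right adjoint $(\phi_F)^\ast : \FT(\ovr{\DD},\Q^{(s_1,s_2)}) \to \FT(\ovr{\CC},\P^{(r_1,r_2)})$ preserves fibrations and acyclic fibrations. Since a morphism is an acyclic fibration precisely when it is both a fibration and a weak equivalence, it is enough to show that $(\phi_F)^\ast$ preserves fibrations and weak equivalences separately.

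First I would exploit the factorization of the right adjoint recorded in \eqref{eqn:FTadjunctionexplicit}, namely $(\phi_F)^\ast \cong (\phi^\ast)_\ast \circ F^\ast$, where $F^\ast = (-)\circ F$ is restriction along the orthogonal functor $F$ and $(\phi^\ast)_\ast = \phi^\ast\circ(-)$ is post-composition with the pullback functor $\phi^\ast : \Alg_\Q \to \Alg_\P$. It then suffices to check that each of these two building blocks preserves fibrations and weak equivalences. For $F^\ast$ this is immediate: the $c$-component of $F^\ast(\zeta)$ is simply the $F(c)$-component $\zeta_{F(c)}$ of the original natural transformation $\zeta$, so if every component of $\zeta$ has underlying $\Ch(\bbK)$-morphism that is degree-wise surjective (respectively a quasi-isomorphism), then so does every component of $F^\ast(\zeta)$.

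For $(\phi^\ast)_\ast$ the key observation is that the pullback functor $\phi^\ast : \Alg_\Q \to \Alg_\P$ leaves the underlying chain complex of an algebra completely unchanged, reinterpreting only the operadic action along $\phi$, since $\phi$ is a single-colored operad morphism and hence color-preserving (its underlying map of colors is $\id$). Consequently, for a natural transformation $\zeta$ the underlying $\Ch(\bbK)$-morphism of each component of $(\phi^\ast)_\ast(\zeta)$ coincides with that of $\zeta$, so fibrations and weak equivalences are again preserved on the nose. Combining the two steps shows that the composite $(\phi_F)^\ast$ preserves fibrations and weak equivalences, hence acyclic fibrations, which is exactly the assertion that $\big((\phi_F)_!, (\phi_F)^\ast\big)$ is a Quillen adjunction. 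I do not anticipate a genuine obstacle here; the only point requiring a little care is to confirm that the model structure of Proposition \ref{prop:modelstructure} really does detect fibrations and weak equivalences component-wise on underlying chain complexes, which is precisely how it was defined, so that the above component-wise arguments are legitimate.
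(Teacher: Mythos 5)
Your proof is correct. The paper itself gives no argument at all for this proposition: it simply says the statement ``follows immediately from [HinichOriginal, Hinich]'', i.e.\ it invokes Hinich's general theory, which guarantees that for any morphism of (colored) operads in $\Ch(\bbK)$, $\mathrm{char}\,\bbK=0$, the induced adjunction between algebra categories (operadic left Kan extension $\dashv$ pullback) is a Quillen adjunction for the transferred model structures. What you do instead is carry out the verification by hand: you use the standard characterization of Quillen adjunctions via the right adjoint preserving fibrations and acyclic fibrations, the factorization $(\phi_F^{})^\ast \cong (\phi^\ast)_\ast\, F^\ast$ from \eqref{eqn:FTadjunctionexplicit}, and the two observations that $F^\ast$ merely reindexes components ($F^\ast(\zeta)_c = \zeta_{F(c)}$) while $\phi^\ast$ leaves underlying chain complexes untouched (as $\phi$ is a single-colored, hence color-preserving, operad morphism). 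Since Proposition \ref{prop:modelstructure} defines fibrations and weak equivalences component-wise on underlying chain complexes, both building blocks preserve them on the nose, and the claim follows. This is essentially the same mechanism that underlies Hinich's result (right adjoints to operadic base change are ``forgetful'' on underlying objects, so they trivially preserve the transferred fibrations and weak equivalences), but your version is self-contained, elementary, and makes transparent exactly which structural features of the field theory operads are used; the paper's citation is shorter and places the proposition within a general theory that also supplies the admissibility statement needed for Proposition \ref{prop:modelstructure} itself. The only point worth making explicit in your write-up is that the identification of $(\phi_F^{})^\ast$ with the composite $(\phi^\ast)_\ast\, F^\ast$ on the level of field theory categories is exactly the content of \eqref{eqn:FTadjunctionexplicit}, which you do cite, so there is no gap.
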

\begin{proof}
This follows immediately from  \cite{HinichOriginal,Hinich}.
\end{proof}

As a specific instance of the general result of Proposition \ref{prop:modelstructure}, 
we obtain that both the category of $\Ch(\bbK)$-valued linear 
field theories $\LFT(\ovr{\CC})$ and the category of $\Ch(\bbK)$-valued
quantum field theories $\QFT(\ovr{\CC})$ carry a canonical model structure.
In order to develop a better intuition for $\Ch(\bbK)$-valued
field theories and their relation to gauge theories, let us introduce
a simple example of a $\Ch(\bbK)$-valued linear field theory.
\begin{ex}[Linear Chern-Simons theory on oriented surfaces]\label{ex:ChernSimons}
Let us denote by $\Man_2$ the category of oriented $2$-manifolds
and orientation preserving open embeddings. We endow
$\Man_2$ with an orthogonality relation $\perp$ by declaring
two morphisms $f_1 : M_1\to M$ and $f_2: M_2\to M$ to be orthogonal,
$f_1\perp f_2$, if and only if their images are disjoint, i.e.\ 
$f_1(M_1)\cap f_2(M_2)=\emptyset$. 
The field configurations on $M\in\Man_2$ of Chern-Simons theory with 
structure group $\bbR$ are given by flat $\bbR$-connections $A\in\Omega^1(M)$
modulo gauge transformations. In the context of linear derived geometry, see e.g.\
\cite{CostelloGwilliam} and \cite{BeniniSchenkelReview} for more details, the
flatness condition $\dd A =0$ and the quotient by gauge 
transformations $A\to A+\dd\epsilon$ are both refined to higher categorical
concepts called homotopy kernels and stacky quotients. This results in a
chain complex of field configurations, that is 
given in our example by the shifted de Rham complex
\begin{flalign}
\mathfrak{F}(M)\,:=\, \Big(
\xymatrix@C=2em{
\stackrel{(-1)}{\Omega^2(M)} 
& \ar[l]_-{\dd} \stackrel{(0)}{\Omega^1(M)} 
& \ar[l]_-{\dd} \stackrel{(1)}{\Omega^0(M)}
}
\Big)\quad,
\end{flalign}
where the round brackets indicate homological degrees.
Note that the zeroth homology of $\mathfrak{F}(M)$ is the usual vector
space of gauge equivalence classes of flat $\bbR$-connections on $M$.
\sk

We describe linear observables on $M$ by the 
smooth dual of $\mathfrak{F}(M)$, which explicitly reads as
\begin{flalign}\label{eqn:LLLcomplex}
\LLL(M)\,:=\, \Big(
\xymatrix@C=2em{
\stackrel{(-1)}{\Omega_\cc^2(M)} 
& \ar[l]_-{-\dd} \stackrel{(0)}{\Omega_\cc^1(M)} 
& \ar[l]_-{-\dd} \stackrel{(1)}{\Omega_\cc^0(M)}
}
\Big)\quad,
\end{flalign}
where the subscript $\cc$ denotes forms with compact support. The evaluation
of observables on field configurations is given by the integration
map $\int_M : \LLL(M)\otimes \mathfrak{F}(M)\to\bbK$. Note that the
minus signs in \eqref{eqn:LLLcomplex} are necessary for
$\int_M$ to be a chain map. To define a presymplectic
structure on $\LLL(M)$, observe that there exists a chain map
\begin{flalign}
\parbox{0.5cm}{\xymatrix{
\LLL(M)\ar[d]_-{\ell}\\
\mathfrak{F}(M)
}
}~~:=~~~~ \left(\parbox{2cm}{\xymatrix@C=2em{
\ar[d]_-{-\iota}\Omega_\cc^2(M) & \ar[d]_-{\iota}\ar[l]_-{-\dd} \Omega_\cc^1(M) & \ar[d]_-{-\iota}
\ar[l]_-{-\dd} \Omega_\cc^0(M)\\
\Omega^2(M) & \ar[l]_-{\dd} \Omega^1(M) & \ar[l]_-{\dd} \Omega^0(M)
}}\right)\quad,
\end{flalign}
where $\iota:\Omega^p_\cc(M)\to\Omega^p(M)$ 
denotes the inclusion of compactly supported $p$-forms into
all $p$-forms. We define the chain map
\begin{flalign}\label{eqn:omegaCS}
\omega ~:~\xymatrix@C=4em{
\LLL(M)\otimes\LLL(M) \ar[r]^-{\id\otimes\ell} & \LLL(M)\otimes\mathfrak{F}(M) \ar[r]^-{\int_M} & \bbK
}
\end{flalign}
and note that $\omega$ is a presymplectic structure, i.e.\ it is graded antisymmetric.
\sk

Precisely as in Example \ref{ex:LinearFT}, we can assign to the presymplectic
chain complex $(\LLL(M),\omega)$ its Heisenberg Lie algebra, 
which we shall denote by $\BBB^{}_{\mathrm{CS}}(M):= \LLL(M)\oplus\bbK \in \Alg_{\mathsf{uLie}}$.
Using pushforwards of compactly supported
forms along $\Man_2$-morphisms $f:M\to N$
and observing that \eqref{eqn:omegaCS} are the components of a natural
transformation, we can promote the assignment $M\mapsto \BBB^{}_{\mathrm{CS}}(M)$ to a functor
$\BBB^{}_{\mathrm{CS}} : \Man_2 \to \Alg_{\mathsf{uLie}}$. Because the integration of any 
product of forms with disjoint support yields zero, this functor
defines a $\Ch(\bbK)$-valued linear field theory $\BBB^{}_{\mathrm{CS}}\in \LFT(\ovr{\Man_2})$
on the orthogonal category $\ovr{\Man_2}$. By construction, this linear field theory
describes linear Chern-Simons theory on oriented surfaces.
\end{ex}

\subsection{\label{subsec:homlinquant}Homotopical properties of linear quantization}
As a specific instance of the general result of Proposition
\ref{prop:Quillen}, we obtain that 
the linear quantization adjunction \eqref{eqn:QUadjunction} is a Quillen adjunction
between the model categories $\LFT(\ovr{\CC})$ and $\QFT(\ovr{\CC})$.
Using the general method of {\em derived functors}
(see e.g.\ \cite{Dwyer,Hovey}), there exists a left derived
linear quantization functor $\bbL\Qlin$ and a right derivation of its right adjoint 
$ \bbR\Ulin$. These two derived functors preserve weak
equivalences and hence they are homotopically meaningful.
The standard procedure to construct derived functors
is to {\em choose} fibrant and cofibrant replacement functors,
denoted by $R: \QFT(\ovr{\CC})\to \QFT(\ovr{\CC})$ and $Q: \LFT(\ovr{\CC})\to \LFT(\ovr{\CC})$,
and to define the right derived functor by
\begin{flalign}\label{eqn:deriveddequantization}
\bbR\Ulin := \Ulin~R \,:\, \QFT(\ovr{\CC})~\longrightarrow~\LFT(\ovr{\CC})\quad
\end{flalign}
and the left derived functor by
\begin{flalign}\label{eqn:derivedquantization}
\bbL \Qlin  := \Qlin~Q\,:\, \LFT(\ovr{\CC})~\longrightarrow ~\QFT(\ovr{\CC}) \quad.
\end{flalign}
Note that there is some flexibility in choosing $R$ and $Q$, but different choices 
define naturally weakly equivalent derived functors.
\sk

For practical applications, it is crucial to find simple models for derived functors
that can be computed explicitly. The goal of this subsection is to obtain
such simple models for the derived functors of the linear quantization 
adjunction \eqref{eqn:QUadjunction}. For the right derived functor $\bbR\Ulin$,
this problem is easy to solve because every object in the model category 
$\QFT(\ovr{\CC})$ is fibrant, hence the identity functor $R=\id$ defines
a fibrant replacement functor. This immediately implies
\begin{propo}
The underived functor $\Ulin : \QFT(\ovr{\CC})\to \LFT(\ovr{\CC})$ is a model
for the right derived functor $\bbR\Ulin$ in \eqref{eqn:deriveddequantization}.
\end{propo}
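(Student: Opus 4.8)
The plan is to use the defining formula \eqref{eqn:deriveddequantization}, $\bbR\Ulin = \Ulin\, R$, together with the freedom in choosing the fibrant replacement functor $R$: since different admissible choices of $R$ produce naturally weakly equivalent derived functors, it suffices to exhibit one choice for which $\Ulin\, R$ agrees with the underived $\Ulin$.

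First I would show that every object of $\QFT(\ovr{\CC}) = \FT\big(\ovr{\CC}, \mathsf{As}^{([\cdot,\cdot],0)}\big)$ is fibrant. By Proposition \ref{prop:modelstructure}(ii) a morphism is a fibration exactly when each of its components is a degree-wise surjective $\Ch(\bbK)$-morphism. The terminal object of this category is computed component-wise and is the constant functor at the terminal associative and unital algebra, whose underlying chain complex is the zero complex $0\in\Ch(\bbK)$; the causality condition \eqref{eqn:CommutatorEinsteinCausality} holds trivially for it. The unique morphism from an arbitrary $\AAA\in\QFT(\ovr{\CC})$ to this terminal object has each component the (automatically degree-wise surjective) zero map onto $0$, so it is a fibration and hence $\AAA$ is fibrant. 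Consequently the identity functor $R=\id$ serves as a fibrant replacement functor, and \eqref{eqn:deriveddequantization} collapses to $\bbR\Ulin = \Ulin\,\id = \Ulin$.

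To conclude that $\Ulin$ is a genuine model for the right derived functor I would check that it is homotopically well behaved, i.e.\ that it preserves weak equivalences: since $\Ulin = (\phi^\ast)_\ast$ is the right adjoint of the Quillen adjunction \eqref{eqn:QUadjunction} (Proposition \ref{prop:Quillen}), Ken Brown's lemma shows that it preserves weak equivalences between fibrant objects, and as every object of $\QFT(\ovr{\CC})$ is fibrant it preserves every weak equivalence. I do not expect a real obstacle here; the only points requiring a moment of care are the identification of fibrations with component-wise degree-wise surjections and the observation that the terminal object has the zero complex as underlying object, both of which are immediate from Proposition \ref{prop:modelstructure} and the explicit description of $\Ulin$ given in Section \ref{sec:quantization}.
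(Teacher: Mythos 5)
Your proof is correct and follows essentially the same route as the paper: the paper also observes that every object of $\QFT(\ovr{\CC})$ is fibrant (fibrations being component-wise degree-wise surjections, so the unique map to the terminal object is automatically a fibration), takes $R=\id$ as fibrant replacement functor, and concludes $\bbR\Ulin = \Ulin$. Your extra verification via Ken Brown's lemma that $\Ulin$ preserves all weak equivalences is a sound (if implicit in the paper) justification that this choice indeed yields a homotopically meaningful model.
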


For the left derived functor $\bbL\Qlin$, i.e.\ the derived linear quantization functor,
the situation gets more complicated because not every object in $\LFT(\ovr{\CC})$
is cofibrant. However, a more detailed study of $\Qlin$ reveals the following
pleasing result.
\begin{propo}\label{prop:Qlinunderived}
The underived  functor $\Qlin : \LFT(\ovr{\CC}) \to \QFT(\ovr{\CC})$
preserves weak equivalences. As a consequence, it is a model for the left derived
functor $\bbL\Qlin$ in \eqref{eqn:derivedquantization}.
\end{propo}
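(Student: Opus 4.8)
The plan is to reduce the statement to a property of the single-colored operad morphism $\phi$ and then verify that property by a Poincaré--Birkhoff--Witt (PBW) argument. By Corollary \ref{cor:pointwisequantization}, the functor $\Qlin$ is modeled by the restriction to field theories of the pushforward $(\phi_!)_\ast = \phi_!\circ(-)$, which acts component-wise on $\CC$; explicitly $\Qlin(\BBB)(c)=\phi_!(\BBB(c))$ and $\Qlin(\zeta)_c=\phi_!(\zeta_c)$. Since weak equivalences in both $\LFT(\ovr{\CC})$ and $\QFT(\ovr{\CC})$ are defined component-wise as quasi-isomorphisms (Proposition \ref{prop:modelstructure}), it suffices to prove that the \emph{unital universal enveloping algebra functor} $\phi_! : \Alg_{\mathsf{uLie}}\to\Alg_{\mathsf{As}}$ from Lemma \ref{lem:phi!LieAss} sends quasi-isomorphisms of unital dg Lie algebras to quasi-isomorphisms of dg algebras.

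To establish this, I would follow the two coequalizer steps of the construction of $\phi_!$. The first step \eqref{eqn:UEA} produces the ordinary universal enveloping algebra $U^\otimes V$ of the underlying dg Lie algebra. Here I would invoke PBW in the dg setting over a field of characteristic zero: the word-length filtration $F_0\subseteq F_1\subseteq\cdots$ is natural, exhaustive and bounded below, is by sub-chain-complexes (the differential preserves word length), and its associated graded is naturally identified with the graded-symmetric algebra, $\mathrm{gr}\,U^\otimes V\cong\mathrm{Sym}(V)=\bigoplus_{p\geq 0}\mathrm{Sym}^p(V)$. Because $\mathrm{char}\,\bbK=0$, each $\mathrm{Sym}^p(V)=(V^{\otimes p})_{\Sigma_p}$ is a natural retract of $V^{\otimes p}$ via the symmetrizer, and tensor powers of quasi-isomorphisms of complexes of $\bbK$-vector spaces are quasi-isomorphisms (Künneth); hence $\mathrm{Sym}^p$ preserves quasi-isomorphisms. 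Rather than appealing to spectral sequence convergence for possibly unbounded complexes, I would argue by induction on $p$: the natural short exact sequences of complexes $0\to F_{p-1}V\to F_pV\to\mathrm{Sym}^p(V)\to 0$, the five lemma, and the base case $F_0=\bbK$ show that each $F_p(U^\otimes\kappa)$ is a quasi-isomorphism for a quasi-isomorphism $\kappa:V\to W$; passing to the filtered colimit $U^\otimes V=\operatorname*{colim}_p F_pV$ (over split injections, with homology commuting with filtered colimits over a field) then shows $U^\otimes\kappa$ is a quasi-isomorphism.

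For the second step \eqref{eqn:uUEA}, which identifies the image $z$ of the Lie unit $\eta$ with the algebra unit $1$, I would exploit that $z$ is a \emph{central} cycle of homological degree $0$ in $U^\otimes V$, so that multiplication by $z-1$ is a chain map and $\phi_!(V)$ is its cokernel. Using the PBW identification above, the symbol of $\eta$ is a nonzerodivisor of even degree in $\mathrm{Sym}(V)$, whence a standard leading-term (filtration-degree) argument shows multiplication by $z-1$ is injective. This yields a natural short exact sequence of chain complexes
\[
0 \longrightarrow U^\otimes V \xrightarrow{\;\cdot(z-1)\;} U^\otimes V \longrightarrow \phi_!(V) \longrightarrow 0 .
\]
For a quasi-isomorphism $\kappa:V\to W$, the induced map of short exact sequences has both left-hand vertical maps equal to the quasi-isomorphism $U^\otimes\kappa$, so the long exact sequences in homology and the five lemma force $\phi_!(\kappa)$ to be a quasi-isomorphism. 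This proves that $\phi_!$, and therefore $\Qlin$, preserves weak equivalences.

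The final claim follows from a standard model-categorical fact: a functor preserving all weak equivalences computes its own left derived functor. For any cofibrant replacement functor $Q$ with natural weak equivalence $q:Q\Rightarrow\id$, applying $\Qlin$ gives a natural weak equivalence $\Qlin(q):\Qlin\,Q\Rightarrow\Qlin$, and since $\bbL\Qlin:=\Qlin\,Q$ by \eqref{eqn:derivedquantization}, we conclude $\Qlin\simeq\bbL\Qlin$; see \cite{Dwyer,Hovey}. The step I expect to be the main obstacle is the PBW part of the first factorization: setting up the theorem and the naturality of the filtration for possibly unbounded dg Lie algebras, and carrying out the inductive/colimit comparison rigorously so that $\mathrm{Sym}$-level quasi-isomorphisms lift to $U^\otimes$. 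By contrast, the injectivity of $\cdot(z-1)$ and the five-lemma step are then routine.
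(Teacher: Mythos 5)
Your proposal is correct, and its outer layer coincides exactly with the paper's proof: both use Corollary \ref{cor:pointwisequantization} to reduce the statement to the assertion that $\phi_!:\Alg_{\mathsf{uLie}}\to\Alg_{\mathsf{As}}$ preserves quasi-isomorphisms (weak equivalences in $\LFT(\ovr{\CC})$ and $\QFT(\ovr{\CC})$ being component-wise by Proposition \ref{prop:modelstructure}), and both obtain the second claim by whiskering $q:Q\to\id$ with $\Qlin$. Where you genuinely diverge is in the proof of that key assertion, which the paper isolates as Lemma \ref{lem:phi!weakequivalences} in Appendix \ref{app:envelop}. The paper filters the unital enveloping algebra $\phi_!(V)=T^\otimes V/\mathcal{I}$ directly by word length and identifies the associated graded pieces as $\widetilde{V}^{\otimes n+1}/\Sigma_{n+1}$ with $\widetilde{V}=V/\bbK\oone$, absorbing the unit relation into the quotient $\widetilde{V}$; the induction, five-lemma, coinvariants-equal-invariants, K\"unneth and filtered-colimit steps are then essentially the ones you use. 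You instead split the argument along the two coequalizer stages \eqref{eqn:UEA} and \eqref{eqn:uUEA}: classical PBW for the ordinary enveloping algebra $U^\otimes V$ (associated graded $\mathrm{Sym}(V)$ on all of $V$), followed by the short exact sequence $0\to U^\otimes V\to U^\otimes V\to\phi_!(V)\to 0$ given by multiplication by the central degree-zero cycle $z-1$, whose injectivity you get from a leading-term argument. This buys you something: you only invoke PBW in its standard form for dg Lie algebras, whereas the paper's identification of the graded pieces of $\phi_!(V)$ is itself a PBW-type statement for the \emph{unital} variant which the paper asserts as an observation; your route also sidesteps the paper's $n=1$ base case (that $V/\bbK\oone\to V^\prime/\bbK\oone^\prime$ is a quasi-isomorphism), which tacitly assumes the units are nonzero elements. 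The price is one small patch you should make explicit: when $\eta=0$ the ``symbol of $\eta$'' is zero, not a nonzerodivisor, so the leading-term argument does not apply verbatim; but then $z=0$ and multiplication by $z-1=-\id$ is trivially injective, so your short exact sequence and the subsequent five-lemma step go through unchanged.
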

\begin{proof}
Recall from Corollary \ref{cor:pointwisequantization}
that $\Qlin = (\phi_!)_\ast = \phi_!\circ (-)$ is given by post-composing
with the left adjoint functor $\phi_! : \Alg_{\mathsf{uLie}}\to\Alg_{\mathsf{As}}$.
It is shown in Lemma \ref{lem:phi!weakequivalences} that $\phi_!$ preserves weak equivalences,
hence $\Qlin$ preserves weak equivalences as these are defined component-wise 
(see Proposition \ref{prop:modelstructure}).
\sk

To prove the second statement, consider the natural transformation 
\begin{flalign}
\xymatrix@C=3em{
\bbL\Qlin = \Qlin\,Q \ar[r]^-{\Qlin\,q} \,&\, \Qlin
}
\end{flalign}
obtained by whiskering the natural weak equivalence $q : Q\to\id$ corresponding to the
cofibrant replacement functor $Q$. This is a natural weak equivalence
because $\Qlin$ preserves weak equivalences, hence
$\Qlin$ is a model for $\bbL\Qlin$.
\end{proof}

\begin{ex}\label{ex:ChernSimonsQuantization}
Let us apply these results to carry out the quantization of linear Chern-Simons theory
from Example \ref{ex:ChernSimons}. As shown in Proposition \ref{prop:Qlinunderived},
the {\em underived} linear quantization functor $\Qlin$ provides a homotopically 
meaningful quantization prescription that agrees (up to weak equivalence) with any derived
linear quantization functor $\bbL\Qlin$. Applying the functor $\Qlin$ to the
linear Chern-Simons model $\BBB_{\mathrm{CS}}^{}\in \LFT(\ovr{\Man_2})$
from Example \ref{ex:ChernSimons}, we obtain the linear Chern-Simons 
quantum field theory $\AAA_{\mathrm{CS}}^{} := \Qlin(\BBB_{\mathrm{CS}}^{})\in \QFT(\ovr{\Man_2})$.
Using the concrete description of $\Qlin$ given in Section \ref{sec:quantization},
we can compute explicitly the differential graded algebra $\AAA_{\mathrm{CS}}^{}(M)\in\Alg_{\mathsf{As}}$
that is assigned to an oriented $2$-manifold $M\in\Man_2$. One observes 
that this differential graded algebra is presentable by generators and relations.
The generators \eqref{eqn:LLLcomplex} are concentrated in homological degrees $-1$, $0$ and $1$.
We shall use the intuitive `smeared quantum field' notation to denote 
the generators by
\begin{subequations}
\begin{flalign}
\widehat{C}(\chi) \in \AAA_{\mathrm{CS}}^{}(M)_{-1}\quad,\qquad
\widehat{A}(\alpha) \in \AAA_{\mathrm{CS}}^{}(M)_0\quad,\qquad
\widehat{B}(\beta) \in \AAA_{\mathrm{CS}}^{}(M)_1\quad,
\end{flalign}
for all $\chi\in\Omega^2_\cc(M)$, $\alpha\in\Omega^1_\cc(M)$ and $\beta\in\Omega^0_\cc(M)$.
The differential acts on these generators as
\begin{flalign}
\dd \widehat{B}(\beta) \,=\, \widehat{A}(-\dd\beta)\quad,\qquad
\dd \widehat{A}(\alpha) \,=\, \widehat{C}(-\dd \alpha)\quad,\qquad
\dd \widehat{C}(\chi) \,=\,0\quad.
\end{flalign}
\end{subequations}
The relations are as follows:
\begin{itemize}
\item {\em $\bbK$-linearity:} $\widehat{C}(k\,\chi+ k^\prime\,\chi^\prime) = 
k\,\widehat{C}(\chi) + k^\prime\,\widehat{C}(\chi^\prime)$, for all $\chi,\chi^\prime\in \Omega^2_\cc(M)$
and $k,k^\prime\in \bbK$, and similarly for $\widehat{A}$ and $\widehat{B}$;

\item {\em Commutation relations:} The non-vanishing graded commutators are
\begin{subequations}
\begin{flalign}
\big[\widehat{A}(\alpha),\widehat{A}(\alpha^\prime)\big]\,&=\,\omega(\alpha,\alpha^\prime) \,=\,\int_M \alpha\wedge\alpha^\prime\quad,\\
\big[\widehat{C}(\chi),\widehat{B}(\beta)\big] \,&=\, \omega(\chi,\beta) \,=\, - \int_M\chi\wedge\beta\quad,\\
\big[\widehat{B}(\beta),\widehat{C}(\chi)\big] \,&=\, \omega(\beta,\chi) \,=\, - \int_M\beta\wedge\chi\quad.
\end{flalign}
\end{subequations}
\end{itemize}
Note that the zeroth homology $H_0(\AAA_{\mathrm{CS}}(M))$ is the ordinary
algebra of gauge invariant observables for quantized flat $\bbR$-connections,
see e.g.\ \cite{DappiaggiMurroSchenkel}.
\end{ex}

\subsection{Homotopy $j$-locality and homotopy $W$-constancy}
We would like to conclude by introducing natural homotopical generalizations
of the $j$-locality property (see Definition \ref{def:jlocal}) and the $W$-constancy property
(see Definition \ref{def:Wconstant}) in the context of model category theory. 
It will be shown that these properties are preserved by linear quantization.

\paragraph{Homotopy $j$-locality:}
Let $j : \ovr{\CC}\to\ovr{\DD}$ be a full orthogonal subcategory
and $\P^{(r_1,r_2)}$ a bipointed single-colored operad.
From Proposition \ref{prop:Quillen}, we obtain a Quillen adjunction
$j_! : \FT(\ovr{\CC},\P^{(r_1,r_2)}) \rightleftarrows \FT(\ovr{\DD},\P^{(r_1,r_2)}) :j^\ast$.
For the right derived functor we can choose again the underived functor
$\bbR j^\ast := j^\ast$, because every object in $\FT(\ovr{\DD},\P^{(r_1,r_2)})$ is fibrant.
However, in contrast to the linear quantization functor from the previous subsection,
the left adjoint functor $j_!$ in general does {\em not} preserve weak equivalences
and hence it has to be derived $\bbL j_!:= j_!\,Q$. (See \cite[Appendix A]{BeniniSchenkelWoikehomotopy}
for concrete examples illustrating this fact.) As a consequence, our previous concept 
of $j$-locality from Definition \ref{def:jlocal} has to be derived as well
in order to be homotopically meaningful. In what follows we denote by
$q : Q\to \id$ the natural weak equivalence corresponding to our 
choice of cofibrant replacement functor $Q$.
\begin{defi}\label{def:homotopyjlocal}
A field theory $\AAA \in \FT\big(\ovr{\DD},\P^{(r_1,r_2)}\big)$ 
is called {\em homotopy $j$-local} if the corresponding
component  of the derived counit
\begin{flalign}\label{eqn:homotopyjlocal}
\widetilde{\epsilon}_{\AAA}^{}~:~\xymatrix@C=4em{
j_!\,Q\,j^\ast(\AAA) \ar[r]^-{j_! q_{j^\ast(\AAA)}^{}} & j_!\,j^\ast(\AAA)\ar[r]^-{\epsilon_\AAA^{}} & \AAA
}
\end{flalign}
is a weak equivalence in $\FT\big(\ovr{\DD},\P^{(r_1,r_2)}\big)$.
\end{defi}

\begin{propo}\label{propo:quantizationpreserveshomotopyjlocal}
The linear quantization functor $\Qlin : \LFT(\ovr{\DD}) \to \QFT(\ovr{\DD})$ (see Proposition
\ref{prop:Qlinunderived}) maps homotopy $j$-local linear field theories 
to homotopy $j$-local quantum field theories.
\end{propo}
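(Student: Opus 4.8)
The plan is to deduce the statement from Proposition~\ref{prop:Qlinunderived}, which guarantees that $\Qlin$ preserves all weak equivalences, together with the commuting square of left adjoints from \eqref{eqn:generalsquareadjunctions}. I will write $\epsilon$ for the counit of the extension--restriction adjunction $j_!\dashv j^\ast$ (on $\LFT(\ovr{\DD})$ or on $\QFT(\ovr{\DD})$, as the context demands) and $q:Q\to\id$ for the natural weak equivalence associated with the chosen cofibrant replacement functor. First I would record three structural facts. \emph{(i)}~By Proposition~\ref{prop:Qlinunderived} the functor $\Qlin$ preserves weak equivalences, and by Proposition~\ref{prop:Quillen} it is left Quillen; hence it also preserves cofibrant objects and, by Ken Brown's lemma \cite{Hovey}, weak equivalences between cofibrant objects. \emph{(ii)}~Because $j^\ast$ is restriction (precomposition with $j$) while $\Qlin=\phi_!\circ(-)$ is postcomposition with $\phi_!$, these two operations commute strictly, so that $j^\ast\,\Qlin=\Qlin\,j^\ast$ as functors. \emph{(iii)}~The commuting square of left adjoints \eqref{eqn:generalsquareadjunctions} supplies a natural isomorphism $\xi:\Qlin\,j_!\xrightarrow{\cong}j_!\,\Qlin$ that is compatible with the counits precisely as recorded by the middle square of \eqref{eqn:jpreservation}; explicitly $\epsilon_{\Qlin\BBB}\circ\xi_{j^\ast\BBB}=\Qlin(\epsilon_\BBB)$, where (ii) is used to read $j_!\,\Qlin\,j^\ast\BBB=j_!\,j^\ast\,\Qlin\BBB$.

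Next I would fix a homotopy $j$-local linear field theory $\BBB\in\LFT(\ovr{\DD})$, so that by Definition~\ref{def:homotopyjlocal} the derived counit $\widetilde{\epsilon}_\BBB=\epsilon_\BBB\circ j_!\,q_{j^\ast\BBB}$ of \eqref{eqn:homotopyjlocal} is a weak equivalence. Applying $\Qlin$ and invoking fact (i), the morphism $\Qlin(\widetilde{\epsilon}_\BBB):\Qlin(j_!\,Q\,j^\ast\BBB)\to\Qlin\BBB$ is again a weak equivalence. The heart of the argument is then to relate its source $\Qlin(j_!\,Q\,j^\ast\BBB)$ to the source $j_!\,Q\,j^\ast\Qlin\BBB$ of the derived counit $\widetilde{\epsilon}_{\Qlin\BBB}$. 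Using fact (ii) to rewrite $j^\ast\Qlin\BBB=\Qlin\,j^\ast\BBB$, I observe that both $\Qlin\,Q\,j^\ast\BBB$ and $Q\,\Qlin\,j^\ast\BBB$ are cofibrant replacements of $\Qlin\,j^\ast\BBB$: the former is cofibrant since $Q\,j^\ast\BBB$ is cofibrant and $\Qlin$ is left Quillen, and it maps to $\Qlin\,j^\ast\BBB$ by the weak equivalence $\Qlin(q_{j^\ast\BBB})$, while the latter is the chosen replacement, whose structure map $q_{\Qlin\,j^\ast\BBB}$ is an acyclic fibration. Lifting $\Qlin(q_{j^\ast\BBB})$ against $q_{\Qlin\,j^\ast\BBB}$ produces $\theta:\Qlin\,Q\,j^\ast\BBB\to Q\,\Qlin\,j^\ast\BBB$ satisfying $q_{\Qlin\,j^\ast\BBB}\circ\theta=\Qlin(q_{j^\ast\BBB})$, which is a weak equivalence of cofibrant objects by two-out-of-three. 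Applying $j_!$ (Ken Brown) and precomposing with $\xi$ then gives a weak equivalence
\[
\vartheta~:~\Qlin(j_!\,Q\,j^\ast\BBB)\xrightarrow{\ \xi\ }j_!\,\Qlin\,Q\,j^\ast\BBB\xrightarrow{\ j_!\theta\ }j_!\,Q\,j^\ast\Qlin\BBB\quad.
\]

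Finally I would verify that $\widetilde{\epsilon}_{\Qlin\BBB}\circ\vartheta=\Qlin(\widetilde{\epsilon}_\BBB)$; this follows by feeding $q_{\Qlin\,j^\ast\BBB}\circ\theta=\Qlin(q_{j^\ast\BBB})$ and the naturality of $\xi$ along $q_{j^\ast\BBB}$ into the counit compatibility of fact (iii). Since $\vartheta$ and $\Qlin(\widetilde{\epsilon}_\BBB)$ are both weak equivalences, the two-out-of-three property will force $\widetilde{\epsilon}_{\Qlin\BBB}$ to be a weak equivalence, which is exactly homotopy $j$-locality of $\Qlin\BBB$. The step I expect to be the main obstacle is this comparison of the two orders of cofibrant replacement and quantization: one must produce $\theta$ and use Ken Brown's lemma to push it forward through $j_!$, the essential subtlety being that $Q$ and $\Qlin$ commute only up to a canonical weak equivalence rather than on the nose, in contrast to the strictly commuting situation of the underived Proposition~\ref{prop:preservation}.
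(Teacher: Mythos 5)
Your overall strategy is sound, and it is a genuinely different route from the paper's proof, but one step is not justified as written: your fact \emph{(iii)}. The middle square of \eqref{eqn:jpreservation} does \emph{not} record the identity $\epsilon_{\Qlin\BBB}\circ\xi_{j^\ast\BBB}=\Qlin(\epsilon_\BBB)$; that square commutes simply because the counit of $j_!\dashv j^\ast$ on the quantum side is natural and is being applied to the morphism $\xi$ --- both horizontal arrows there are counits in the \emph{same} category, and no counit from the linear side ever appears. What you need is a compatibility of the mate isomorphism $\xi$ with the counits of the two \emph{factor} adjunctions, which is strictly more than what uniqueness of left adjoints hands you. It is nevertheless true, and here is the missing argument: since $\xi$ intertwines the hom-bijections of the composite adjunctions $\Qlin\,j_!\dashv j^\ast\,\Ulin$ and $j_!\,\Qlin\dashv \Ulin\,j^\ast$, the adjunct of $\epsilon_{\Qlin\BBB}\circ\xi_{j^\ast\BBB}$ equals the adjunct of $\epsilon_{\Qlin\BBB}$, which (adjunct of a counit being an identity, then taking $\Qlin$-adjuncts) is the unit $\eta_{j^\ast\BBB}$ of the quantization adjunction on $\ovr{\CC}$; on the other side, the adjunct of $\Qlin(\epsilon_\BBB)$ is $j^\ast(\eta_\BBB)$, the restricted unit of the quantization adjunction on $\ovr{\DD}$ (this step uses a triangle identity for $j_!\dashv j^\ast$). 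So fact \emph{(iii)} is equivalent to $\eta_{j^\ast\BBB}=j^\ast(\eta_\BBB)$. This holds for the model of Corollary \ref{cor:pointwisequantization}: there $\Qlin=\phi_!\circ(-)$ and $\Ulin=\phi^\ast\circ(-)$, the unit is the componentwise whiskering of the unit of $\phi_!\dashv\phi^\ast$, and both sides evaluate at $c\in\CC$ to the unit of $\phi_!\dashv\phi^\ast$ at $\BBB(j(c))$. With this verification supplied, your proof is complete; also note that your lifting construction of $\theta$ needs $q_{\Qlin j^\ast\BBB}$ to be an acyclic \emph{fibration}, not merely a weak equivalence, which holds for the replacement functor coming from functorial factorization but should be stated as a choice, since the paper only assumes $q$ is a natural weak equivalence.

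For comparison, the paper's proof avoids any mate compatibility: it applies naturality of the derived counit to the weak equivalence $\Qlin\,\widetilde{\epsilon}_\BBB$, trades $\Qlin\,j_!$ for $j_!\,\Qlin$ via the isomorphism of \eqref{eqn:generalsquareadjunctions} together with Ken Brown's lemma, and reduces by two-out-of-three to showing that the derived counit is a weak equivalence on objects of the form $j_!\,Q(\AAA)$ with $\AAA=\Qlin\,Q\,j^\ast(\BBB)$; that last point is then settled by Proposition \ref{prop:extension} (the unit of $j_!\dashv j^\ast$ is an isomorphism, using fullness of $\ovr{\CC}\subseteq\ovr{\DD}$) plus a triangle identity. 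Your route, once repaired, never invokes Proposition \ref{prop:extension}: it needs only that $\Qlin$ preserves weak equivalences, the strict commutations, and the counit compatibility of $\xi$. This makes it somewhat more general --- it would prove preservation of homotopy $F$-locality for an arbitrary orthogonal functor $F$, not just a full orthogonal subcategory embedding --- at the price of the $2$-categorical bookkeeping that the paper's argument deliberately sidesteps.
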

\begin{proof}
Let $\BBB\in\LFT(\ovr{\DD})$  be a homotopy $j$-local linear field theory,
i.e.\ $\widetilde{\epsilon}_{\BBB}^{} : j_! \,Q\, j^\ast(\BBB)\to \BBB$ is a weak equivalence.
We have to prove that the derived counit 
$\widetilde{\epsilon}_{\Qlin(\BBB)}^{} : j_!\,Q\,j^\ast\,\Qlin(\BBB) \to \Qlin(\BBB)$ 
corresponding to the quantum field theory $\Qlin(\BBB)\in \QFT(\ovr{\DD})$
is a weak equivalence too. 
\sk

As a preparatory step, let us consider the commutative diagram
\begin{flalign}
\xymatrix@C=8em{
j_!\,Q\,j^\ast\,\Qlin(\BBB)  \ar[r]^-{\widetilde{\epsilon}_{\Qlin(\BBB)}^{}} & \Qlin(\BBB)\\
\ar[u]^-{j_!\,Q\,j^\ast\,\Qlin \,\widetilde{\epsilon}_\BBB^{}}_-{\sim} 
j_!\,Q\,j^\ast\,\underbrace{\Qlin \, j_!}_{\cong \,j_!\,\Qlin}\,Q\,j^\ast (\BBB) \ar[r]^-{\widetilde{\epsilon}_{\Qlin \, j_!\,Q\,j^\ast(\BBB)}^{}} & 
\ar[u]_-{\Qlin \,\widetilde{\epsilon}_\BBB^{}}^-{\sim} \underbrace{\Qlin\, j_!}_{\cong\,j_!\,\Qlin}\,Q\,j^\ast (\BBB)\\
\ar[u]^-{j_!\,Q\,j^\ast\, j_!\,q_{\Qlin\,Q\,j^\ast (\BBB)}^{}}_-{\sim} j_!\,Q\,j^\ast\, j_!\,Q\,\Qlin\,Q\,j^\ast (\BBB)\ar[r]_-{\widetilde{\epsilon}_{ j_!\,Q\,\Qlin\,Q\,j^\ast(\BBB)}^{}}
& j_!\,Q\,\Qlin\,Q\,j^\ast (\BBB)\ar[u]_-{ j_!\, q_{\Qlin\,Q\,j^\ast (\BBB)}^{}}^-{\sim}
}
\end{flalign}
The vertical arrows in the top square are weak equivalences because $\BBB$ is 
by hypothesis homotopy $j$-local. The vertical arrows in the bottom square
are weak equivalences because left Quillen functors preserve cofibrant objects and weak equivalences
between cofibrant objects. The natural isomorphism in the underbraces is due to
\eqref{eqn:generalsquareadjunctions}. By the $2$-of-$3$ property of
weak equivalences, the top horizontal arrow is a weak equivalence (which is our claim)
if and only if the bottom horizontal arrow is a weak equivalence.
\sk

Introducing $\AAA := \Qlin\,Q\,j^\ast(\BBB)$, it thus
remains to prove that $\widetilde{\epsilon}_{j_!\,Q(\AAA)} : j_! \,Q\,j^\ast\,j_! \,Q(\AAA)\to j_!\,Q(\AAA)$
is a weak equivalence. This follows from the $2$-of-$3$ property of
weak equivalences and the commutative diagram
\begin{flalign}
\xymatrix@C=5em{
\ar@/^2.5pc/[rr]^-{\widetilde{\epsilon}_{j_!\,Q(\AAA)}}  j_!\,Q\,j^\ast\,j_!\,Q(\AAA) \ar[r]^-{j_! q_{j^\ast j_! Q(\AAA)}^{}} & j_!\,j^\ast\,j_!\,Q(\AAA) \ar[r]^-{\epsilon_{j_! Q(\AAA)}^{}} & j_! \,Q(\AAA)\\
\ar[u]^-{j_! Q \eta_{Q(\AAA)}^{}}_-{\cong} j_! \,Q^2(\AAA) \ar[r]_-{j_! q^{}_{Q(\AAA)}}^-{\sim} & \ar[u]^-{j_! \eta_{Q(\AAA)}^{}}_-{\cong} j_! \, Q(\AAA) \ar@{=}[ur]&
}
\end{flalign}
where the top triangle is just the definition of the derived counit \eqref{eqn:homotopyjlocal}.
The vertical arrows are isomorphisms because of Proposition \ref{prop:extension}
and the right triangle is the triangle identity for the (non-derived) unit and counit.
The bottom horizontal arrow is a weak equivalence because left Quillen functors 
preserve weak equivalences between cofibrant objects.
\end{proof}

\begin{ex}
Let $j: \ovr{\Manc_2}\to \ovr{\Man_2}$ be the full orthogonal subcategory
of all oriented $2$-manifolds $M$ that are diffeomorphic to $\bbR^2$.
It is an interesting question whether the linear Chern-Simons
quantum field theory $\AAA_{\mathrm{CS}}^{}\in\QFT(\ovr{\Man_2})$ from Example
\ref{ex:ChernSimonsQuantization} is homotopy $j$-local with respect to this $j$.
In particular, homotopy $j$-locality would imply that its value
$\AAA_{\mathrm{CS}}^{}(M)$ on a topologically non-trivial oriented $2$-manifold $M$ 
such as the torus is already encoded in the restriction 
$j^\ast(\AAA_{\mathrm{CS}}^{})\in\QFT(\ovr{\Manc_2})$ of the quantum field theory to disks.
Unfortunately, proving homotopy $j$-locality of a given theory is a complicated task
and hence we can not yet provide an answer to the question whether $\AAA_{\mathrm{CS}}^{}\in\QFT(\ovr{\Man_2})$
is homotopy $j$-local or not. We however would like to mention that 
positive results are already available for simple toy-models
which do not involve quantization, see \cite{BeniniSchenkelWoikehomotopy} for details.
We expect that Proposition \ref{propo:quantizationpreserveshomotopyjlocal}
will be very useful for investigating homotopy $j$-locality of 
$\AAA_{\mathrm{CS}}^{}=\Qlin(\BBB_{\mathrm{CS}}^{})\in\QFT(\ovr{\Man_2})$
because it allows us to replace this question by the (slightly) simpler question
whether the linear field theory $\BBB_{\mathrm{CS}}^{}\in \LFT(\ovr{\Man_2})$ from Example
\ref{ex:ChernSimons} is homotopy $j$-local. We hope to come back to this issue
in a future work.
\end{ex}

\paragraph{Homotopy $W$-constancy:}
Let $\ovr{\CC}$ be an orthogonal category, $W\subseteq \mathrm{Mor}\,\CC$ a subset
and $\P^{(r_1,r_2)}$ a bipointed single-colored operad.
Similarly to locally constant factorization algebras \cite{CostelloGwilliam},
we propose a homotopical generalization of the $W$-constancy property from
Definition \ref{def:Wconstant}.
\begin{defi}\label{def:homotopyWconstant}
A field theory $\AAA\in \FT\big(\ovr{\CC},\P^{(r_1,r_2)}\big)$ is called 
{\em homotopy $W$-constant} if the $\Ch(\bbK)$-morphism underlying
the $\Alg_\P$-morphism $\AAA(f) : \AAA(c)\to \AAA(c^\prime)$ is a 
quasi-isomorphism for all $(f:c\to c^\prime)\in W$.
\end{defi}

\begin{propo}\label{propo:quantizationpreserveshomotopyWconstant}
The linear quantization functor $\Qlin : \LFT(\ovr{\CC}) \to \QFT(\ovr{\CC})$ (see Proposition
\ref{prop:Qlinunderived}) maps homotopy $W$-constant linear field theories 
to homotopy $W$-constant quantum field theories.
\end{propo}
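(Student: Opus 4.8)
The plan is to proceed exactly in the spirit of the proof of Proposition \ref{prop:Qlinunderived}, exploiting that $\Qlin$ acts one object at a time. Recall from Corollary \ref{cor:pointwisequantization} that $\Qlin = (\phi_!)_\ast = \phi_!\circ(-)$ is given by post-composing a linear field theory with the unital universal enveloping algebra functor $\phi_! : \Alg_{\mathsf{uLie}}\to\Alg_{\mathsf{As}}$. Consequently, for any $(\BBB : \CC\to\Alg_{\mathsf{uLie}})\in\LFT(\ovr{\CC})$ and any morphism $f : c\to c^\prime$ in $\CC$, the structure map of the quantized theory is simply $\Qlin(\BBB)(f) = \phi_!\big(\BBB(f)\big) : \phi_!(\BBB(c))\to\phi_!(\BBB(c^\prime))$.

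First I would unwind Definition \ref{def:homotopyWconstant}: the hypothesis that $\BBB$ is homotopy $W$-constant means precisely that for every $(f : c\to c^\prime)\in W$, the underlying $\Ch(\bbK)$-morphism of $\BBB(f)$ is a quasi-isomorphism, i.e.\ $\BBB(f)$ is a weak equivalence in $\Alg_{\mathsf{uLie}}$ with respect to the model structure of Proposition \ref{prop:modelstructure} (in which weak equivalences are detected on underlying chain complexes). I then want to conclude that $\Qlin(\BBB)(f) = \phi_!(\BBB(f))$ is a weak equivalence in $\Alg_{\mathsf{As}}$, which is exactly the assertion that $\Qlin(\BBB)$ is homotopy $W$-constant.

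The crucial input is Lemma \ref{lem:phi!weakequivalences}, which establishes that the underived functor $\phi_! : \Alg_{\mathsf{uLie}}\to\Alg_{\mathsf{As}}$ preserves \emph{all} weak equivalences (not merely those between cofibrant objects). Applying this lemma to each weak equivalence $\BBB(f)$, $(f : c\to c^\prime)\in W$, shows that $\phi_!(\BBB(f))$ is a quasi-isomorphism on underlying chain complexes, which is what we wanted. This is entirely parallel to, and follows from the same mechanism as, the statement of Proposition \ref{prop:Qlinunderived} that $\Qlin$ preserves all weak equivalences.

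Because of this I do not expect any genuine obstacle at the level of the present proposition: once $\phi_!$ is known to preserve weak equivalences and $\Qlin$ is known to act component-wise, the argument is completely formal. The real mathematical work is deferred to Lemma \ref{lem:phi!weakequivalences} (proved in Appendix \ref{app:envelop}), where one must control the homotopical behaviour of the tensor algebra together with the two coequalizers \eqref{eqn:UEA}--\eqref{eqn:uUEA} defining $\phi_!$; this is where the assumption that $\bbK$ has characteristic zero enters. Granting that lemma, the statement follows immediately.
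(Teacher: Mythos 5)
Your proposal is correct and follows exactly the paper's own argument: both use Corollary \ref{cor:pointwisequantization} to reduce to the object-wise action of $\phi_!$ and then invoke Lemma \ref{lem:phi!weakequivalences} to preserve the quasi-isomorphisms $\BBB(f)$ for $(f:c\to c^\prime)\in W$. Your version merely spells out the unwinding of Definition \ref{def:homotopyWconstant} more explicitly; the mathematical content is identical.
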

\begin{proof}
Recall from Corollary \ref{cor:pointwisequantization}
that $\Qlin = (\phi_!)_\ast = \phi_!\circ (-)$ is given by post-composing
with the left adjoint functor $\phi_! : \Alg_{\mathsf{uLie}}\to\Alg_{\mathsf{As}}$.
By Lemma \ref{lem:phi!weakequivalences}, the latter preserves weak equivalences
and hence it preserves the homotopy $W$-constancy property.
\end{proof}

\begin{ex}
It is easy to prove that the linear Chern-Simons model $\BBB_{\mathrm{CS}}^{}\in\LFT(\ovr{\Man_2})$ 
from Example \ref{ex:ChernSimons} is homotopy $W$-constant for $W\subseteq\mathrm{Mor}\, \Man_{2}$
the subset of all isotopy equivalences $f:M\to M^\prime$. The chain map underlying
$\BBB_{\mathrm{CS}}^{}(f) : \BBB_{\mathrm{CS}}^{}(M)\to \BBB_{\mathrm{CS}}^{}(M^\prime)$ 
is given by
\begin{flalign}
f_\ast\oplus \id \,:\, \LLL(M)\oplus\bbK ~\longrightarrow ~\LLL(M^\prime)\oplus \bbK\quad,
\end{flalign}
where $\LLL(M)$ and $\LLL(M^{\prime})$ are (up to a global minus sign) 
shifted compactly supported de Rham complexes (see \eqref{eqn:LLLcomplex})
and $f_\ast$ is given by degree-wise pushforward of compactly supported forms. 
For $f:M\to M^\prime$ an isotopy equivalence, this map is a quasi-isomorphism
because compactly supported de Rham cohomology is invariant under isotopies.
Together with Proposition \ref{propo:quantizationpreserveshomotopyWconstant},
this implies that the linear Chern-Simons quantum field theory
$\AAA_{\mathrm{CS}}^{} = \Qlin(\BBB_{\mathrm{CS}}^{})\in\QFT(\ovr{\Man_2})$
from Example \ref{ex:ChernSimonsQuantization} is homotopy $W$-constant too.
\end{ex}

%%%%%%%%%%%%%%%%%%%%%%%%%%%%%%%%%%%%%%%%%%%%%%%%
%%%%%%%%%%%%%%%%%%%%%%%%%%%%%%%%%%%%%%%%%%%%%%%%

\section*{Acknowledgments}
We would like to thank the anonymous referee 
for useful comments that helped us to improve this manuscript.
We also would like to thank Marco Benini and Lukas Woike for useful discussions.
S.B.\ is supported by a PhD scholarship of the Royal Society (UK).
A.S.\ gratefully acknowledges the financial support of 
the Royal Society (UK) through a Royal Society University 
Research Fellowship, a Research Grant and an Enhancement Award.

%\section*{Conflict of interest statement}
%On behalf of all authors, the corresponding author states 
%that there is no conflict of interest. 

\appendix

\section{\label{app:envelop}Technical details for Section \ref{subsec:homlinquant}}
In this appendix we let $\MM = \Ch(\bbK)$
be the symmetric monoidal model category of chain complexes
of vector spaces over a field $\bbK$ of characteristic zero.
Recall the unital universal enveloping algebra functor 
$\phi_! : \Alg_{\mathsf{uLie}}\to \Alg_{\mathsf{As}}$ 
from Lemma \ref{lem:phi!LieAss}. With Proposition \ref{prop:modelstructure} we see that
the categories $ \Alg_{\mathsf{uLie}}$ and $\Alg_{\mathsf{As}}$
carry a canonical model structure in which a morphism
is a weak equivalence if the underlying $\Ch(\bbK)$-morphism is a quasi-isomorphism.
\begin{lem}\label{lem:phi!weakequivalences}
The functor $\phi_! : \Alg_{\mathsf{uLie}}\to \Alg_{\mathsf{As}}$ preserves weak equivalences.
\end{lem}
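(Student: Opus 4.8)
The plan is to reduce the statement to a computation of associated graded objects via a Poincaré--Birkhoff--Witt type argument, crucially exploiting that $\bbK$ has characteristic zero. First I would unpack the construction of $\phi_!$ from \eqref{eqn:UEA} and \eqref{eqn:uUEA}: for a unital dg Lie algebra $V\in\Alg_{\mathsf{uLie}}$, the image $u:=\eta(1)\in V_0$ of the unit is a central degree-$0$ cycle (centrality is precisely the defining relation of $\mathsf{uLie}$, and $\dd u=0$ because $\eta$ is a chain map), and the output is $\phi_!(V)=U(V)/(u-1)$, the universal enveloping algebra of the underlying dg Lie algebra quotiented by the two-sided ideal generated by $u-1$. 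Since $u$ is central and $\dd(u-1)=0$, this ideal is a differential ideal, so $\phi_!(V)$ is a genuine dg algebra.

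Next I would equip $\phi_!(V)$ with the image of the PBW filtration $F_\bullet$ by word length, which is exhaustive with $F_{-1}\phi_!(V)=0$. Writing $\ovr{V}:=V/\bbK u$ for the quotient dg Lie algebra, the key claim is that
\[
\mathrm{gr}\,\phi_!(V)\,\cong\,\big(S(\ovr{V}),d_S\big)\quad,
\]
the free graded-symmetric algebra on $\ovr{V}$ with differential $d_S$ induced as a derivation from $d_{\ovr{V}}$. This is where the bulk of the work lies and is the step I expect to be the main obstacle: one must verify, in the dg/graded setting, that both the commutator-minus-bracket relations \eqref{eqn:UEA} and the unit relation \eqref{eqn:uUEA} strictly lower the filtration degree, so that on $\mathrm{gr}$ the product becomes graded-commutative, the generator $u$ is identified with $0$ (since $1$ lies in a lower filtration step), and no bracket contributions survive in the differential. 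A careful bookkeeping of signs together with the graded PBW theorem is required here.

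With this identification in hand the homotopical conclusion is essentially formal. Given a weak equivalence $f:V\to W$ in $\Alg_{\mathsf{uLie}}$, the short exact sequences $0\to\bbK u\to V\to\ovr{V}\to 0$ (and the analogue for $W$) together with $f(u_V)=u_W$ and the five lemma applied to the associated long exact homology sequences show that $\ovr{f}:\ovr{V}\to\ovr{W}$ is again a quasi-isomorphism. Because $\bbK$ has characteristic zero, each symmetric power $S^n(-)=\big((-)^{\otimes n}\big)_{\Sigma_n}$ is a retract of the $n$-fold tensor power and hence preserves quasi-isomorphisms (by the Künneth theorem over a field and exactness of $\Sigma_n$-coinvariants in characteristic zero), so $S(\ovr{f})=\mathrm{gr}\,\phi_!(f)$ is a quasi-isomorphism. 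Finally I would run an induction on the filtration degree $p$: the base case $F_{-1}=0$ is trivial, and the inductive step applies the five lemma to the homology long exact sequence of $0\to F_{p-1}\to F_p\to \mathrm{gr}_p\to 0$, using that $F_{p-1}\phi_!(f)$ and $\mathrm{gr}_p\phi_!(f)=S^p(\ovr{f})$ are quasi-isomorphisms. Passing to the filtered colimit $\phi_!(f)=\mathrm{colim}_p\,F_p\phi_!(f)$ and invoking exactness of filtered colimits of chain complexes over $\bbK$ then yields that $\phi_!(f)$ is a quasi-isomorphism, which both proves the lemma and neatly sidesteps any spectral-sequence convergence subtleties for the possibly unbounded complexes in $\Ch(\bbK)$.
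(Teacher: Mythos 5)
Your proposal is correct and takes essentially the same route as the paper's own proof: the paper likewise presents $\phi_!(V)=T^\otimes V/\mathcal{I}$, filters it by word length, identifies the filtration quotients with the $\Sigma_{n}$-coinvariants (i.e.\ graded symmetric powers) of $\widetilde{V}=V/\bbK\oone$, runs the five-lemma induction up the filtration using that in characteristic zero these coinvariants preserve quasi-isomorphisms (via invariants/K\"unneth), and concludes by commuting homology with the filtered colimit. The only differences are presentational: you phrase the associated graded via a graded PBW theorem for $U(V)/(u-1)$ rather than via the explicit short exact sequences, and you spell out the five-lemma argument that $\ovr{V}\to\ovr{W}$ is a quasi-isomorphism, which the paper records as an observation.
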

\begin{proof}
As a preparation for the proof, we have to revisit the unital universal
enveloping algebra construction from \eqref{eqn:UEA} and \eqref{eqn:uUEA}
for the category $ \Ch(\bbK)$ of chain complexes. By definition, 
the unital universal enveloping algebra $\phi_!(V)\in \Alg_{\mathsf{As}}$
of a unital Lie algebra $(V,[\cdot,\cdot],\eta)\in \Alg_{\mathsf{uLie}}$ is the differential 
graded algebra presented by
\begin{flalign}
\phi_!(V)\,=\, T^\otimes V\big/ \mathcal{I}\quad,
\end{flalign}
where $\mathcal{I}\subseteq T^\otimes V$ is the differential graded ideal
generated by the relations 
\begin{flalign}\label{eqn:explicitrelationsTMP}
v_1\otimes v_2 - (-1)^{\vert v_1\vert\,\vert v_2\vert}~v_2\otimes v_1 \,=\, [v_1,v_2]\quad,\qquad
 \oone_\otimes \,=\, \oone\quad,
\end{flalign}
for all homogeneous elements $v_1,v_2\in V$. Here $\oone_\otimes = \eta_\otimes(1)$
denotes the unit element of the tensor algebra $T^\otimes V = \bigoplus_{m=0}^{\infty} V^{\otimes m}$
and $\oone = \eta(1)$ is the unit element of $V$. Using the canonical filtration
$T^{\leq n} V := \bigoplus_{m= 0}^n V^{\otimes m}$
of the tensor algebra, we define
\begin{subequations}\label{eqn:sequentialcolimitTMP}
\begin{flalign}
\phi_!(V)^{n}\, :=\, T^{\leq n} V\big/ \big( T^{\leq n} V \cap \mathcal{I}\big)\quad,
\end{flalign}
for all $n\geq 0$. This defines a sequential diagram
\begin{flalign}
\xymatrix{
\phi_!(V)^{0}\ar@{^{(}->}[r]~&~\phi_!(V)^{1} \ar@{^{(}->}[r]~&~ \phi_!(V)^{2}  \ar@{^{(}->}[r] ~&~ \cdots
}
\end{flalign}
in the category $\Ch(\bbK)$, whose colimit 
\begin{flalign}
\phi_!(V) \,=\, \colim_{n\geq 0} \big(\phi_!(V)^{n}\big)
\end{flalign}
\end{subequations}
is the chain complex underlying $\phi_!(V)$.
Using the explicit form of the relations \eqref{eqn:explicitrelationsTMP},
we observe that the quotient
\begin{flalign}
\phi_!(V)^{n+1}\big/ \phi_!(V)^{n}~\cong~\widetilde{V}^{\otimes n+1}\big/ \Sigma_{n+1}
\end{flalign}
is given by the coinvariants of the canonical permutation group action
on the $n+1$-fold tensor product of quotient chain complex $\widetilde{V} := V/\bbK\oone$.
In other words, there is a short exact sequence
\begin{flalign}\label{eqn:SESTMP}
\xymatrix{
0 \ar[r] ~&~ \phi_!(V)^{n} \ar[r] ~&~ \phi_!(V)^{n+1} \ar[r] ~&~ \widetilde{V}^{\otimes n+1}\big/ \Sigma_{n+1}\ar[r] ~&~ 0
}
\end{flalign}
of chain complexes, for all $n\geq 0$.
\sk

Let now $\rho : V\to V^\prime$ be a weak equivalence in $\Alg_{\mathsf{uLie}}$. Our goal is to prove
that $\phi_!(\rho) :  \phi_!(V)\to \phi_!(V^\prime)$ is a weak equivalence in $\Alg_{\mathsf{As}}$, i.e.\
that the induced map $H_\bullet(\phi_!(\rho)) : H_\bullet(\phi_!(V))\to H_\bullet(\phi_!(V^\prime)) $
in homology is an isomorphism. Because filtered colimits commute with forming homologies,
it is by \eqref{eqn:sequentialcolimitTMP} sufficient to prove that
\begin{flalign}\label{eqn:homologynisoTMP}
H_\bullet(\phi_!(\rho)^n) \,:\, H_\bullet(\phi_!(V)^n)~\longrightarrow~ H_\bullet(\phi_!(V^\prime)^n) 
\end{flalign}
is an isomorphism, for all $n\geq 0$. For $n=0$ this is clearly the case,
and for $n=1$ it follows from the observation that the induced map
$\rho : \widetilde{V}= V/\bbK\oone\to\widetilde{V}^\prime=V^\prime/\bbK\oone^\prime$
is a quasi-isomorphism. 
\sk

The case $n> 1$ is proven by induction. Assume that 
\eqref{eqn:homologynisoTMP} is an isomorphism for some $n\geq 0$.
The short exact sequence \eqref{eqn:SESTMP} of chain complexes 
induces a long exact sequence in homology and
$\rho : V\to V^\prime$ induces a map of long exact sequences
\begin{flalign}\label{eqn:TMPbigsquare}
\xymatrix@C=1.2em{
\ar[d]H_{m+1}(\tfrac{\widetilde{V}^{\otimes n+1}}{\Sigma_{n+1}})\ar[r]&
\ar[d]^-{\cong}H_m( \phi_!(V)^{n}) \ar[r]& 
\ar[d]H_m(\phi_!(V)^{n+1}) \ar[r]& 
\ar[d]H_{m}(\tfrac{\widetilde{V}^{\otimes n+1}}{\Sigma_{n+1}}) \ar[r]&
\ar[d]^-{\cong}H_{m-1}( \phi_!(V)^{n})\\
H_{m+1}(\tfrac{\widetilde{V}^{\prime \otimes n+1}}{\Sigma_{n+1}})\ar[r]&
H_m( \phi_!(V^\prime)^{n}) \ar[r]& 
H_m(\phi_!(V^\prime)^{n+1}) \ar[r]& 
H_{m}(\tfrac{\widetilde{V}^{\prime\otimes n+1}}{\Sigma_{n+1}}) \ar[r]&
H_{m-1}( \phi_!(V^\prime)^{n})
}
\end{flalign}
where by our induction hypothesis the second and fifth vertical maps are isomorphisms.
The induction step consists of proving that the middle vertical map is an isomorphism. By the five lemma,
this follows if the first and fourth vertical maps are isomorphisms.
Because $\bbK$ is
a field of characteristic zero, the chain complex
$\widetilde{V}^{\otimes n+1}/\Sigma_{n+1}$ of coinvariants 
of the permutation group action is isomorphic to the chain complex 
$(\widetilde{V}^{\otimes n+1})^{\Sigma_{n+1}}$ of invariants of the permutation group action.
For $\mathrm{char}\,\bbK=0$, taking invariants of finite group actions preserves quasi-isomorphisms, 
so the K\"unneth theorem 
and the fact that $\rho : \widetilde{V}\to\widetilde{V}^\prime$ is a quasi-isomorphism
imply that the first and fourth vertical maps in \eqref{eqn:TMPbigsquare} are isomorphisms. 
This completes the proof.
\end{proof}

%%%%%%%%%%%%%%%%%%%%%%%%


\begin{thebibliography}{10}

\bibitem[BGP07]{Baer}
C.~B\"ar, N.~Ginoux and F.~Pf\"affle, 
{\it Wave equations on Lorentzian manifolds and quantization},
Z\"urich, Switzerland: Eur.\ Math.\ Soc.\ (2007) 
[arXiv:0806.1036 [math.DG]].


\bibitem[BG11]{BGproc}
C.~B\"ar and N.~Ginoux, 
``Classical and Quantum Fields on Lorentzian Manifolds,'' 
Springer Proc.\ Math.\ {\bf 17}, 359 (2011) 
[arXiv:1104.1158 [math-ph]].


\bibitem[BDH13]{BeniniDappiaggiHack} 
M.~Benini, C.~Dappiaggi and T.~P.~Hack,
``Quantum Field Theory on Curved Backgrounds -- A Primer,''
Int.\ J.\ Mod.\ Phys.\ A {\bf 28}, 1330023 (2013)
[arXiv:1306.0527 [gr-qc]].


\bibitem[BDS18]{BeniniDappiaggiSchenkel}
M.~Benini, C.~Dappiaggi and A.~Schenkel,
``Algebraic quantum field theory on spacetimes with timelike boundary,''
Annales Henri Poincar{\'e} {\bf 19}, no.\ 8, 2401 (2018)
[arXiv:1712.06686 [math-ph]].
  

\bibitem[BS17]{BeniniSchenkelPoisson} 
M.~Benini and A.~Schenkel,
``Poisson algebras for non-linear field theories in the Cahiers topos,''
Annales Henri Poincar{\'e} {\bf 18}, no.\ 4, 1435 (2017)
[arXiv:1602.00708 [math-ph]].


\bibitem[BS19]{BeniniSchenkelReview}
M.~Benini and A.~Schenkel,
``Higher Structures in Algebraic Quantum Field Theory,''
{\em to appear in Fortschritte der Physik}
[arXiv:1903.02878 [hep-th]].


\bibitem[BSS18]{BSSStack} 
M.~Benini, A.~Schenkel and U.~Schreiber,
``The stack of Yang-Mills fields on Lorentzian manifolds,''
Commun.\ Math.\ Phys.\  {\bf 359}, no.\ 2, 765 (2018)
[arXiv:1704.01378 [math-ph]].


\bibitem[BSW17]{BeniniSchenkelWoike} 
M.~Benini, A.~Schenkel and L.~Woike,
``Operads for algebraic quantum field theory,''
arXiv:1709.08657 [math-ph].


\bibitem[BSW19]{BeniniSchenkelWoikehomotopy} 
M.~Benini, A.~Schenkel and L.~Woike,
``Homotopy theory of algebraic quantum field theories,''
Lett.\ Math.\ Phys.\  {\bf 109}, no.\ 7, 1487 (2019)
[arXiv:1805.08795 [math-ph]].


\bibitem[BM07]{BergerMoerdijk}
C.~Berger and I.~Moerdijk,
``Resolution of coloured operads and rectification of homotopy algebras,''
in: A.~Davydov, M.~Batanin, M.~Johnson, S.~Lack and A.~Neeman (eds.),
{\it Categories in algebra, geometry and mathematical physics},
Contemp.\ Math.\ {\bf 431}, 31--58,  
American Mathematical Society, Providence, RI (2007). 


\bibitem[BDFY15]{AQFTbook}
R.~Brunetti, C.~Dappiaggi, K.~Fredenhagen and J.~Yngvason,
{\it Advances in algebraic quantum field theory},
Springer Verlag, Heidelberg (2015).


\bibitem[BFR12]{BrunettiFredenhagenRibeiro} 
R.~Brunetti, K.~Fredenhagen and P.~Lauridsen Ribeiro,
``Algebraic Structure of Classical Field Theory I: Kinematics and Linearized Dynamics for Real Scalar Fields,''
arXiv:1209.2148 [math-ph].


\bibitem[BFV03]{Brunetti} 
R.~Brunetti, K.~Fredenhagen and R.~Verch,
``The generally covariant locality principle: A new paradigm for local quantum field theory,''
Commun.\ Math.\ Phys.\  {\bf 237}, 31 (2003)
[math-ph/0112041].


\bibitem[Col16]{Collini}
G.~Collini, 
{\it Fedosov quantization and perturbative quantum field theory},
PhD thesis, Universit\"at Leipzig (2016) [arXiv:1603.09626 [math-ph]]. 


\bibitem[CG17]{CostelloGwilliam}
K.~Costello and O.~Gwilliam,
{\it Factorization algebras in quantum field theory. Vol.\ 1},
New Mathematical Monographs {\bf 31}, 
Cambridge University Press, Cambridge (2017).


\bibitem[DMS17]{DappiaggiMurroSchenkel}
C.~Dappiaggi, S.~Murro and A.~Schenkel,
``Non-existence of natural states for Abelian Chern–Simons theory,''
J.\ Geom.\ Phys.\  {\bf 116}, 119 (2017)
[arXiv:1612.04080 [math-ph]].
  
  
\bibitem[DS95]{Dwyer}
W.~G.~Dwyer and J.~Spalinski,
``Homotopy theories and model categories,''
in: I.~M.~James (ed.), {\it Handbook of Algebraic Topology}, 
73, North-Holland, Amsterdam (1995). 


\bibitem[FV15]{FewsterVerch}
C.~J.~Fewster and R.~Verch,
``Algebraic quantum field theory in curved spacetimes,''
in: R.~Brunetti, C.~Dappiaggi, K.~Fredenhagen and J.~Yngvason (eds.),
{\it Advances in algebraic quantum field theory}, 125--189,
Springer Verlag, Heidelberg (2015)
[arXiv:1504.00586 [math-ph]].


\bibitem[Fre90]{Fre1}
K.~Fredenhagen, 
``Generalizations of the theory of superselection sectors,''
in: D.~Kastler (ed.),
{\em The algebraic theory of superselection sectors: Introduction and recent results}, 
World Scientific Publishing, 379 (1990).


\bibitem[Fre93]{Fre2}
K.~Fredenhagen, 
``Global observables in local quantum physics,''
in: H.~Araki, K.~R.~Ito, A.~Kishimoto and I.~Ojima (eds.),
{\em Quantum and non-commutative analysis: Past, present and future perspectives}, 41--51,
Kluwer Academic Publishers (1993).
  

\bibitem[FRS92]{Fre3}
K.~Fredenhagen, K.-H.~Rehren and B.~Schroer, 
``Superselection sectors with braid group statistics and exchange algebras II: 
Geometric aspects and conformal covariance,'' 
Rev.\ Math.\ Phys.\ {\bf 4}, 113 (1992).


\bibitem[FR12]{FredenhagenRejzner} 
K.~Fredenhagen and K.~Rejzner,
``Batalin-Vilkovisky formalism in the functional approach to classical field theory,''
Commun.\ Math.\ Phys.\  {\bf 314}, 93 (2012)
[arXiv:1101.5112 [math-ph]].
  
  
\bibitem[FR13]{FredenhagenRejzner2} 
K.~Fredenhagen and K.~Rejzner,
``Batalin-Vilkovisky formalism in perturbative algebraic quantum field theory,''
Commun.\ Math.\ Phys.\  {\bf 317}, 697 (2013)
[arXiv:1110.5232 [math-ph]].


\bibitem[GH18]{GwilliamHaugseng} 
O.~Gwilliam and R.~Haugseng,
``Linear Batalin-Vilkovisky quantization as a functor of $\infty$-categories,''
Selecta Mathematica {\bf 24}, 1247  (2018)
[arXiv:1608.01290 [math.AT]].


\bibitem[HK64]{HaagKastler}
R.~Haag and D.~Kastler,
``An algebraic approach to quantum field theory,''
J.\ Math.\ Phys.\  {\bf 5}, 848 (1964).


\bibitem[Haw18]{Hawkins}
E.~Hawkins,
``A Cohomological Perspective on Algebraic Quantum Field Theory,''
Commun.\ Math.\ Phys.\  {\bf 360}, no.\ 1, 439 (2018)
[arXiv:1612.05161 [math-ph]].
  
  
\bibitem[Hin97]{HinichOriginal}
V.~Hinich, 
``Homological algebra of homotopy algebras,'' 
Comm.\ Algebra {\bf 25}, no.\ 10, 3291--3323  (1997)
[arXiv:q-alg/9702015].
Erratum: arXiv:math/0309453 [math.QA].


\bibitem[Hin15]{Hinich}
V.~Hinich, 
``Rectification of algebras and modules,'' 
Doc.\ Math.\ {\bf 20}, 879--926 (2015)
[arXiv:1311.4130 [math.QA]].


\bibitem[Hol08]{Hollands} 
S.~Hollands,
``Renormalized Quantum Yang-Mills Fields in Curved Spacetime,''
Rev.\ Math.\ Phys.\  {\bf 20}, 1033 (2008)
[arXiv:0705.3340 [gr-qc]].


\bibitem[Hov99]{Hovey}
M.~Hovey,
{\it Model categories}, 
Math.\ Surveys Monogr.\ {\bf 63}, 
Amer.\ Math.\ Soc., Providence, RI (1999).


\bibitem[Lan14]{Lang}
B.~Lang,
{\it Universal constructions in algebraic and locally covariant quantum field theory,} 
PhD thesis, University of York (2014).


\bibitem[RV12]{Ruzzi} 
G.~Ruzzi and E.~Vasselli,
``A new light on nets of $C^\ast$-algebras and their representations,''
Commun.\ Math.\ Phys.\  {\bf 312}, 655 (2012)
[arXiv:1005.3178 [math.OA]].
  

\bibitem[Sch13]{Schreiber}
U.~Schreiber,
``Differential cohomology in a cohesive infinity-topos,''
current version available at \url{https://ncatlab.org/schreiber/show/differential+cohomology+in+a+cohesive+topos}
[arXiv:1310.7930 [math-ph]].


\bibitem[Yau16]{Yau}
D.~Yau,
{\it Colored operads},
Graduate Studies in Mathematics {\bf 170},
American Mathematical Society, Providence, RI (2016).


\end{thebibliography}
\end{document}